\newtheorem{theorem}{Theorem}
\newtheorem{lemma}{Lemma}
\newtheorem{corollary}{Corollary}
\theoremstyle{definition}
\newtheorem{definition}{Definition}
\newtheorem{remark}{Remark}
\newtheorem*{remark*}{Remark}
\newcommand{\diverge}{\to\infty}
\newcommand{\iiddistr}{{\stackrel{\text{\iid}}{\sim}}}
\newcommand{\naturals}{{\mathbb{N}}}
\newcommand{\expect}[1]{\mathbb{E}\left[ #1 \right]}
\newcommand{\prob}[1]{ \mathbb{P}\left\{ #1 \right\} }
\newcommand{\abs}[1]{ \left| #1 \right| }
\newcommand{\Indicator}[1]{ \mathbbm{1}\left( #1 \right)}
\newcommand{\var}{\mathsf{var}}
\newcommand{\Bern}{{\rm Bern}}
\newcommand{\Binom}{{\rm Binom}}
\newcommand{\ie}{i.e.\xspace}
\newcommand{\iid}{i.i.d.\xspace}
\newcommand{\indc}[1]{{\mathbf{1}_{\left\{{#1}\right\}}}}
\newcommand{\calA}{{\mathcal{A}}}
\newcommand{\ER}{Erd\H{o}s-R\'enyi\xspace}
\renewcommand{\tilde}{\widetilde}
\newcommand{\Quv}{Q_{uv}}
\newcommand{\wmin}{w_{\min}}
\newcommand{\wmax}{w_{\max}}
\newcommand{\nps}{np^2\leq\frac{1}{\log n}}
\begin{document}

\title{Graph Matching with Partially-Correct Seeds}
\author{Liren Yu, Jiaming Xu, and Xiaojun Lin\thanks{
L.\ Yu  and X.\ Lin  are with School of Electrical and Computer Engineering, Purdue University, 
West Lafayette, USA, \texttt{yu827@purdue.edu, linx@ecn.purdue.edu}.
J.\ Xu is with The Fuqua School of Business, Duke University, Durham, USA, \texttt{jx77@duke.edu}.
L.~Yu and J.~Xu are supported by the NSF Grant IIS-1932630. 
}
}
\date{\today}

\maketitle
\begin{abstract}
Graph matching aims to find the latent vertex correspondence between two edge-correlated graphs and has found numerous applications across different fields. In this paper, we study a seeded graph matching problem, which assumes that a set of seeds, i.e., pre-mapped vertex-pairs, is given in advance. While most previous work requires all seeds to be correct, we focus on the setting where the seeds are partially correct. Specifically, consider two correlated graphs whose edges are sampled independently from a parent \ER graph $\mathcal{G}(n,p)$. A mapping between the vertices of the two graphs is provided as seeds, of which an unknown $\beta$ fraction is correct. We first analyze a simple algorithm that matches vertices based on the number of common seeds in the $1$-hop neighborhoods, and then further propose a new algorithm that uses seeds in the $2$-hop neighborhoods. We establish non-asymptotic performance guarantees of perfect matching for both $1$-hop and $2$-hop algorithms, showing that our new $2$-hop algorithm requires substantially fewer correct seeds than the $1$-hop algorithm when graphs are sparse. Moreover, by combining our new performance guarantees for the $1$-hop and $2$-hop algorithms, we attain the best-known results (in terms of the required fraction of correct seeds) across the entire range of graph sparsity and significantly improve the previous results in~\cite{10.14778/2794367.2794371,lubars2018correcting} when $p\ge n^{-5/6}$.  For instance, when $p$ is a constant or $p=n^{-3/4}$, we show that only $\Omega(\sqrt{n\log n})$ correct seeds suffice for perfect matching, while the previously best-known results demand $\Omega(n)$ and $\Omega(n^{3/4}\log n)$ correct seeds, respectively. Numerical experiments corroborate our theoretical findings, demonstrating the superiority of our $2$-hop algorithm on a variety of synthetic and real graphs. 

\end{abstract}

\section{Introduction}

Given a pair of two edge-correlated graphs, graph matching (also known as network alignment) aims to find a bijective mapping between the vertex sets of the two graphs so that their edge sets are maximally aligned. 
This is a ubiquitous but difficult problem arising in many important applications, such as social network de-anonymi\-zation \cite{narayanan2009anonymizing}, computational biology \cite{singh2008global,kazemi2016proper}, computer vision \cite{conte2004thirty,schellewald2005probabilistic}, and natural language processing \cite{haghighi2005robust}. For instance, from one anonymized version of the “follow” relationships graph on the Twitter microblogging service, researchers were able to re-identify the users by matching the anonymized graph to a correlated cross-domain auxiliary graph, i.e., the “contact” relationships graph on the Flickr photo-sharing service, where user identities are known \cite{narayanan2009anonymizing}.

Existing graph matching algorithms  can be classified into two categories, seedless and seeded matching algorithms. Seedless matching algorithms only use the topological information and do not rely on any additional side information. Various seedless matching algorithms have been proposed based on either degree information \cite{dai2018performance,DMWX18}, spectral method \cite{umeyama1988eigendecomposition,cour2007balanced,feizi2019spectral,fan2019spectral,FMWX19b}, random walk \cite{gori05randomwalk},  convex relaxations~\cite{aflalo2015convex,fiori2015spectral,lyzinski2016graph,dym2017ds++,bernard2018ds}, or non-convex methods~\cite{zaslavskiy2008path,fiori2013robust,vogelstein2015fast,yu2018generalizing,maron2018probably,zhang2019kergm,xu2019gromov}. However, to the best of our knowledge, these algorithms either only provably succeed when the fraction of edges that differ between the two graphs is low, \ie, on the order of $O\left(1/\log^2 n\right)$ \cite{DMWX18}  or require at least quasi-polynomial runtime $(n^{O(\log n)})$ \cite{barak2018nearly,cullina2016improved,cullina2017exact,cullina2019partial}, where $n$ is the number of vertices in one graph. The only exception is the neighborhood tree matching algorithm recently proposed in \cite{ganassali2020tree}, which can output a partially-correct matching in polynomial-time only when two graphs are sparse and differ by a constant fraction of edges.

The other category is seeded matching algorithms \cite{pedarsani2011privacy,yartseva2013performance,korula2014efficient,Lyzinski2013Seeded,Fishkind2018Seeded,shirani2017seeded,mossel2019seeded,10.1109/TNET.2016.2553843}. These algorithms require ``seeds'', which are a set of pre-mapped vertex-pairs. Let $G_1$ and $G_2$ denote two graphs. For each pair of vertices $(u,v)$ with $u$  in $G_1$ and $v$ in $G_2$, a seed $(w,w')$ is called a \emph{1-hop witness} for $(u,v)$ if $w$ is a neighbor of $u$ in $G_1$ and $w'$ is a neighbor of $v$ in $G_2$. The basic idea of seeded matching algorithms is that a candidate pair of vertices are expected to have more witnesses if they are a true pair than if they are a fake pair. Assuming that the seeds are correct, seeded matching algorithms can find the correct matching for the remaining vertices more efficiently than seedless matching algorithm. In social network de-anonymization, such initially matched seeds are often available, thanks to users who have explicitly linked their accounts across different social networks. For other applications, the seeds can be obtained by prior knowledge or manual labeling. 

However, most existing seeded matching algorithms crucially rely on all seeds being correct, which is often difficult to guarantee in practice. For example, the seeds may be provided by seedless matching algorithms, which will likely produce some incorrect seeds. To overcome this limitation, \cite{10.14778/2794367.2794371} and \cite{lubars2018correcting} extend the idea of seeded matching algorithms to allow for incorrect seeds. In particular, \cite{10.14778/2794367.2794371} proposes a NoisySeeds algorithm, which uses percolation \cite{janson2012bootstrap,yartseva2013performance} to grow the number of 1-hop witnesses from partially-correct seeds and iteratively matches pairs whose number of witnesses exceeds a threshold $r$. However,  NoisySeeds is very sensitive to the choice of the threshold $r$ and matching errors, and thus is only guaranteed to perform well when the graphs are very sparse. More specifically, when the two graphs are correlated \ER graphs, whose edges are independently sub-sampled with probability $s$ from a \emph{parent} \ER graph $\mathcal{G}(n,p)$, and when  $\beta$ fraction of seeds are correct, it is shown in~\cite{10.14778/2794367.2794371} that  NoisySeeds with the best choice of threshold $r=2$
can correctly match all but $o(n)$ vertex-pairs with high probability, provided that
$n^{-1}\ll p \leq n^{-\frac{5}{6}-\epsilon}$ for  $\epsilon \in (0,1/6)$, and
\begin{equation}
    \begin{aligned}
 \beta\geq\frac{1}{2n^2p^2s^4}. \label{eq:conditionPGM}
    \end{aligned}
\end{equation}
However, for denser graphs with $p\geq n^{-\frac{5}{6}}$, no performance guarantees are established in \cite{10.14778/2794367.2794371} for the setting with incorrect seeds.


In contrast, \cite{lubars2018correcting} proposes a different algorithm that uses the numbers of 1-hop witnesses for each candidate pair of vertices as weights, and then uses Greedy Maximum Weight Matching (GMWM) to find the vertex correspondence between the two graphs such that the total number of witnesses is large. \cite{lubars2018correcting} shows that their $1$-hop algorithm can work over a much wider range of $p$ (up to $p\leq \frac{3}{8}$) than \cite{10.14778/2794367.2794371}, and it can correctly match all vertices with high probability if
\begin{equation}
    \begin{aligned}
    \beta\geq\max\left\{\frac{16\log{n}}{nps^2}, \, \frac{8}{3}p \right\}. \label{eq:conditionSrikant}
    \end{aligned}
\end{equation}

In order to illustrate the limitations of these existing results, we plot in \prettyref{fig:condition} the scalings corresponding to the two conditions \prettyref{eq:conditionPGM} and \prettyref{eq:conditionSrikant}, as the black dotted curve and green dashed curve, respectively, where the $x$-axis is the  graph sparsity $p$ (which is bounded away from 1) and the sampling probability $s$ is a constant. We observe that, when the graphs are sparse, condition \prettyref{eq:conditionSrikant} ($\beta= \Omega\left(\log n/np\right)$) requires substantially more correct seeds than condition \prettyref{eq:conditionPGM} ($\beta= \Omega\left(1/n^2p^2\right)$), suggesting that the 1-hop algorithm in \cite{lubars2018correcting} is suboptimal. However, condition \prettyref{eq:conditionPGM} only extends to $p \leq n^{-5/6}$, and is not applicable to denser graphs. 
When the graphs are dense, condition \prettyref{eq:conditionSrikant} requires $\beta$ to increase proportionally in $p$. In particular, when $p$ is a constant, condition \prettyref{eq:conditionSrikant} demands a constant fraction of correct seeds. Such a requirement seems rather stringent as well. 
In summary, 
the existing conditions on the required number of correct seeds are either pessimistic or only applicable to very sparse graphs. Since 
the number of correct seeds is often limited in practice, it is of paramount importance in both theory and practice to understand how to better utilize partially-correct seeds to attain more accurate matching results for both sparse and dense graphs. 

In this paper, we establish new performance guarantees for perfect matching, significantly relaxing the existing requirements in \prettyref{eq:conditionPGM} and \prettyref{eq:conditionSrikant}. Specifically, we first provide a much tighter analysis than~\cite{lubars2018correcting}, showing that the $1$-hop algorithm can correctly match all vertices with high probability, provided that 
 \begin{equation}
     \begin{aligned}
   \beta\geq\max\left\{ \frac{45\log{n}}{np(1-p)^2s^2}, \, 30\sqrt{\frac{\log{n}}{n(1-p)^2s^2}} \right\}. \label{eq:conditionofbeta1}
    \end{aligned}
 \end{equation}
Moreover, we propose a new algorithm based on the number of $2$-hop witnesses and shows that it can exactly match all vertices  with high probability, provided that $np^2 \le (\log n)^{-1}$,  $\ nps^2\geq 128\log n$, and
 \begin{equation}
     \begin{aligned}
   \beta\geq\max\left\{\frac{600\log n}{n^2p^2s^4}, \, 600\sqrt{\frac{\log n}{ns^4}}, \, 600\sqrt{\frac{np^3(1-s)\log n}{s}}\right\}.
   \label{eq:conditionofbeta2}
     \end{aligned}
 \end{equation}

The new conditions~\prettyref{eq:conditionofbeta1} and~\prettyref{eq:conditionofbeta2}, are also plotted in \prettyref{fig:condition} as the solid red and blue curves, respectively, to illustrate the improvement compared to the previous conditions  \prettyref{eq:conditionPGM} and \prettyref{eq:conditionSrikant}. (Note that there is a factor $1-s$ in our condition \prettyref{eq:conditionofbeta2}. As a consequence, the corresponding blue curve has two branches: the top one holds for $s<1$ and the bottom one holds for $s=1$.)

\begin{figure}[ht]
\centering
\resizebox{0.7\textwidth}{!}{
\begin{tikzpicture} [scale = 0.8,
    node distance = 1.3cm, on grid, > = stealth, bend angle = 45, auto,
    vertex/.style = {circle, draw = red!50, fill = red!20, thick, minimum size = 3mm},
    seed/.style = {circle, draw = blue!50, fill = blue!20, thick, minimum size = 3mm},
    graph/.style = {rectangle, draw = black!75,thick}
    ]

\draw[->,line width=1pt] (0,0) -- (14,0);
\node (p) at (13.5,0)[label={[black]-90:$p$}]{} ;
\draw[->,line width=1pt] (0,0) --  (0,7);
\node (beta)[label={[black,rotate=90]left: Condition on $\beta$}] at (-0.2,6){} ;
\draw [color=red,line width=2pt] (7,1) -- (13,1);
\draw [color=red,line width=2pt] (1/2,5)  .. controls (2.5,2) and (4,1.5) ..  (7,1);
\draw [dashed, color=black!30!green,line width=2pt] (7,1) -- (13,3.5);
\draw [dashed, color=black!30!green,line width=2pt]  (1/2,5)  .. controls (2.5,2) and (4,1.5) ..  (7,1);
\draw [color=white!20!blue,line width=2pt] (2.5,1) -- (7,1);
\draw [color=white!20!blue,line width=2pt] (1/2,4)  .. controls (1,2.5) and (2,1.5) ..  (2.5,1);
\draw [color=white!20!blue,line width=2pt] (4,1)  .. controls (5.5,1.1) and (6,1.4) ..  (7,2);
\draw [dotted, color=black,line width=2pt] (1/2,4)  .. controls (1,2.7) and (1.25, 2.5) ..  (1.5,2.1);

\draw[dashed] (0,1)--(2.5,1);
\draw[dashed] (2.5,0)--(2.5,1);
\draw[dashed] (4,0)--(4,1);
\draw[dashed] (5.5,0)--(5.5,1.1);
\draw[dashed] (7,0)--(7,2);

\draw[dashed] (1.5,0)--(1.5,2);

\node (n6) at (2.8,2.5)[label={[red]above:$\frac{\log n}{np}$}]{} ;
\node (n7) at (0.8,2.7)[label={[black]-90:$\frac{\log n}{n^2p^2}$}]{} ;
\node (n8) at (6,1.7)[label={[font=\small,blue]above:$\sqrt{{np^3\log n}}$}]{} ;
\node (n9) at (10,2.2)[label={[black!30!green]above:$p$}]{} ;
\node (n10) at (1.5,0)[label={[black]-90:$n^{-\frac{5}{6}}$}]{} ;

\node (n1) at (0,1)[label={[black]180:$\sqrt{\frac{\log n}{n}}$}]{} ;
\node (n2) at (2.5,0)[label={[black]-90:$n^{-\frac{3}{4}}$}]{} ;
\node (n3) at (4,0)[label={[black]-90:$n^{-\frac{2}{3}}$}]{} ;
\node (n4) at (5.6,0)[label={[black]-90:$n^{-\frac{3}{5}}$}]{} ;
\node (n5) at (7,0)[label={[black]-90:$n^{-\frac{1}{2}}$}]{} ;
\node (legend) at (6.9,5.55) [graph,text height = 1.7cm, minimum width = 7.8cm]{};
\draw [color=red,line width=2pt] (2.3,5.2) --  (3.5,5.2);
\draw [color=white!20!blue,line width=2pt](2.3,4.6)--(3.5,4.6);
\draw [dashed,color=black!30!green,line width=2pt](2.3,5.8)--(3.5,5.8);
\draw [dotted,color=black,line width=2pt](2.3,6.4)--(3.5,6.4);
\node (t2) at (3.6,4.6)[label={[font=\small]0:Our condition \prettyref{eq:conditionofbeta2} for new 2-hop}]{} ;
\node (t1) at (3.6,5.2)[label={[font=\small]0:Our improved condition \prettyref{eq:conditionofbeta1} for 1-hop}]{} ;
\node (t0) at (3.6,5.8)[label={[font=\small]0:Condition \prettyref{eq:conditionSrikant} for 1-hop \cite{lubars2018correcting}}]{};
\node (t4) at (3.6,6.4)[label={[font=\small]0:Condition \prettyref{eq:conditionPGM} for NoisySeeds \cite{10.14778/2794367.2794371}}]{} ;
\end{tikzpicture}}
    \caption{
    Comparison of the requirements on the fraction  $\beta$ of correct seeds, 
    when $s$ is a fixed constant and  $p$ is bounded away from $1.$
    The lower the curve, the fewer correct seeds it requires. 
    }
    \label{fig:condition}
\end{figure}
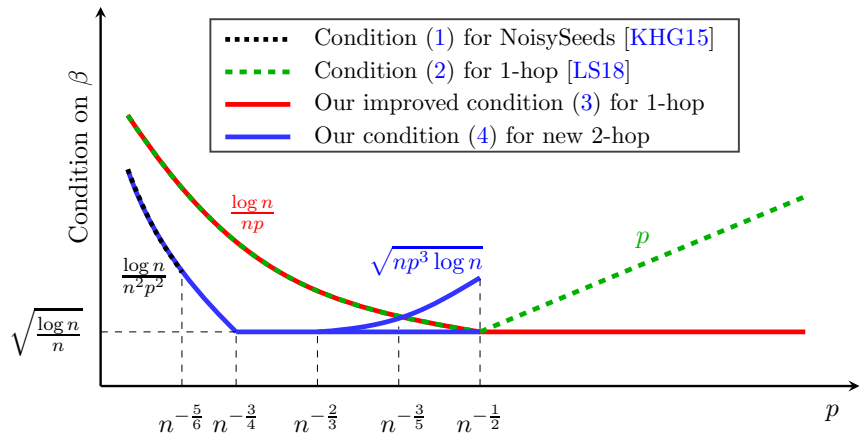

From \prettyref{fig:condition}, we can see that by combining our two conditions \prettyref{eq:conditionofbeta1} and \prettyref{eq:conditionofbeta2} (i.e., the lower envelope of blue and red solid curves), we attain the lowest requirements on the number of correct seeds across the entire range of graph sparsity $p$, and significantly improve the previous conditions when $p \ge n^{-5/6}$. In particular, 
\begin{itemize}
    \item Comparing the green dashed curve with the red solid curve, we see that when $p \gg n^{-1/2}$ our condition \prettyref{eq:conditionofbeta1} requires many fewer correct seeds than \prettyref{eq:conditionSrikant} for the $1$-hop algorithm to succeed. For instance, when $p$ is a constant, only $\Omega(\sqrt{n\log n})$ correct seeds suffice for the $1$-hop algorithm to achieve perfect matching according to our condition \prettyref{eq:conditionofbeta1}, while the previous condition \prettyref{eq:conditionSrikant} requires $\Omega(n)$ correct seeds. 
    
    \item Comparing the green dashed curve with the blue solid curve, we see that when  $p \ll n^{-1/2}$ for $s=1$ or when $p \ll n^{-3/5}$ for $s<1$, our condition~\prettyref{eq:conditionofbeta2} also requires substantially fewer correct seeds than \prettyref{eq:conditionSrikant}. This shows that our new $2$-hop algorithm is significantly better than the $1$-hop algorithm when the graphs are sparse. For instance, when $p=n^{-3/4}$, only $\Omega(\sqrt{n\log n})$ correct seeds suffice for our $2$-hop algorithm to achieve perfect matching, while the $1$-hop algorithm requires $\Omega\left(n^{3/4} \log n \right)$ correct seeds. 
    
    \item Comparing the black dotted curve with the blue solid curve, we see that our condition~\prettyref{eq:conditionofbeta2} is comparable to condition \prettyref{eq:conditionPGM} when $p \ll n^{-5/6}$. However, our condition~\prettyref{eq:conditionofbeta2} continues to hold up to $p \ll n^{-1/2}$. This shows that our $2$-hop algorithm enjoys competitive performance compared to NoisySeeds when graphs are very sparse, but is more versatile and continues to perform well over a much wider range of graph sparsity. 
\end{itemize}

Furthermore, our results precisely characterize the graph sparsity at which the $2$-hop algorithm starts to outperform $1$-hop. This reveals an interesting and delicate trade-off between the \emph{quantity} and the \emph{quality} of witnesses: while the $2$-hop algorithm exploits more seeds as witnesses than the $1$-hop algorithm, the $2$-hop witnesses can also be less discriminating (as they are further away from the node-pair under consideration). Thus, while the increased quantity helps when the graphs are sparse, the decreased quality can confuse the matching algorithm when the graphs are dense (e.g., even the fake pairs are likely to have many 2-hop witnesses).

Finally, using numerical experiments on both synthetic and real graphs, we show that our 2-hop algorithm significantly outperforms the state-of-the-art. Specifically, when the initial seeds are randomly chosen, the 2-hop algorithm significantly outperforms the 1-hop algorithm in \cite{lubars2018correcting} on sparse graphs, which agrees with our theoretical analysis. Further, the performance of our 2-hop algorithm is comparable to  the NoisySeeds algorithm when the synthetic graphs are very sparse, and much better than the NoisySeeds algorithm on other graphs. Our 2-hop algorithm is also much more robust to power-law degree variations in real graphs than the NoisySeeds algorithm. Moreover, we conduct an experiment on matching 3D deformable shapes in which the initial seeded mapping is generated by a seedless algorithm (instead of randomly chosen). We demonstrate that our 2-hop algorithm drastically boosts the matching accuracy by cleaning up most initial matching errors, and 
the performance enhancement 
is more substantial than the 1-hop algorithm and NoisySeeds algorithm.
Computationally, our $2$-hop algorithm is comparable to the $1$-hop and NoisySeeds algorithm and runs efficiently on networks with $\sim 10K$ nodes on a single PC, and can potentially scale up to even larger networks using parallel implementation. 

In passing, we remark that although we focus on matching two graphs of the same number of vertices,
our $2$-hop algorithm can be  directly applied to 
matching two graphs of different sizes and return an accurate correspondence between nodes in the common subgraph of the two graphs. Indeed, the simulation results with real data in \prettyref{sec:experiment-real} show that our 2-hop algorithm still achieves outstanding matching performance, even when two graphs are of very different sizes.

\subsection{Key Ideas and Analysis Techniques}\label{sec:idea}


Our improved performance guarantees for perfect matching exploit several key ideas and analysis techniques, which we present below and will elaborate further in later sections. For ease of discussion, we assume the true vertex correspondence between the two graphs is given by the  identity permutation. We use $\pi:[n]\to [n]$ to denote the initial seed mapping. Then, each seed $(i,\pi(i))$ is correct  if $\pi(i)=i$ and incorrect otherwise. 


\begin{figure}[ht]
\centering
\resizebox{0.3\textwidth}{!}{
\begin{tikzpicture} [
    node distance = 1.3cm, on grid, > = stealth, bend angle = 80, auto,
    vertex/.style = {circle, draw = red!50, fill = red!20, thick, minimum size = 3mm},
    seed/.style = {circle, draw = blue!50, fill = blue!20, thick, minimum size = 3mm},
    graph/.style = {rectangle, draw = black!75,thick}
    ]
\node (u) at (-2.2,0) [vertex,label={[red!60]above:$u$}]{};
\node (ui) at (-0.8,0) [seed,label={[blue!60]above:$i$}]{};
\node (pivi) at (0.8,0) [seed,label={[blue!60]above:$\pi(i)$}]{};
\node (piu) at (2.2,0) [vertex,label={[red!60]above:$u$}]{};
\node (vi) at (-0.8,1.5) [seed,label={[blue!60]above:$\pi^{-1}(i)$}]{};
\node (pivj) at (0.8,1.5) [seed,label={[blue!60]above:$i$}]{};
\node (uj) at (-0.8,-1.5) [seed,label={[blue!60]above:$\pi(i)$}]{};
\node (piui) at (0.8,-1.5) [seed,label={[blue!60]above:$\pi(\pi(i))$}]{};
\draw [dashed] (-0.6, 0) -- (0.6, 0);
\draw [dashed] (-0.6, 1.5) -- (0.6, 1.5);
\draw [dashed] (-0.6, -1.5) -- (0.6, -1.5);
\draw[black!60!green,line width=1.2pt]  (-2, 0) -- (-1, 0);
\draw  (-2, 0) -- (-1, 1.5);
\draw[purple,line width=1.2pt]  (-2, 0) -- (-1, -1.5);
\draw[purple,line width=1.2pt]   (2, 0) -- (1, 0);
\draw[black!60!green,line width=1.2pt]  (2, 0) -- (1,  1.5);
\draw (2, 0) -- (1, -1.5);
\node (G1) at (-1.35,0.25) [graph,text height = 4cm, minimum width = 2.55cm,label={[black!60]above:$G_1$}]{};
\node (G2) at (1.35,0.25) [graph,text height = 4cm, minimum width = 2.55cm,label={[black!60]above:$G_2$}]{};
\end{tikzpicture}}
\caption{The two green/purple edges are correlated, because they correspond to the same edge in the parent graph.  Thus, the event that an incorrect seed $(i,\pi(i))$ becomes a 1-hop witness for $(u,u)$ is dependent on  the events that $(\pi^{-1}(i),i)$ and $(\pi(i),\pi(\pi(i)))$ become 1-hop witnesses for $(u,u)$.}
\label{fig:dependidea1}
\end{figure}
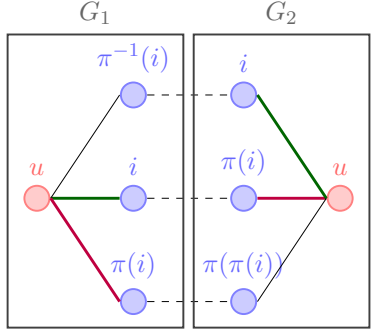

To obtain a much tighter condition than \prettyref{eq:conditionSrikant} for the success of the $1$-hop algorithm, our key observation is that,  when counting the number of witnesses for the true pairs, the analysis of  \cite{lubars2018correcting} only considers the correct seeds and ignores the incorrect seeds. It is non-trivial to consider the incorrect seeds for a true pair, because the events that the incorrect seeds become witnesses depend on each other. In particular, the event that an incorrect seed $(i,\pi(i))$ becomes a 1-hop witness for true pair $(u,u)$ is dependent on  the events that $(\pi^{-1}(i),i)$ and $(\pi(i),\pi(\pi(i)))$ become 1-hop witnesses for $(u,u)$, as these events may involve the same edges in the parent graph (See \prettyref{fig:dependidea1} for an illustrating example).  Our analysis takes into account the incorrect seeds for the true pairs and carefully deals with the above dependency issue using concentration inequalities for dependent random variables \cite{Janson2004LargeDF}. See \prettyref{sec:proofthm1hop} for details.

Further, to better utilize seeds in sparse graphs, our key idea is to match vertices by comparing the number of witnesses in the 2-hop neighborhoods. In sparse graphs, the number of 1-hop witnesses for most vertices will be very low. (For example, when $p=O( n^{-1/2})$ and $\beta =O( n^{-1/2})$, the number of 1-hop witnesses for even a true pair is about $\beta n p=O(1)$.)
Therefore, it will be difficult to use 1-hop witnesses alone to distinguish true pairs from fake pairs. In contrast, the number of 2-hop witnesses will be much larger. 
Thus, compared to the algorithm in \cite{lubars2018correcting}
that uses only 1-hop witnesses, our 2-hop algorithm can leverage more witnesses to distinguish the true pairs from the fake pairs. The idea of using multi-hop neighborhoods to match vertices is analyzed previously in \cite{mossel2019seeded} when all seeds are correct. In comparison, our results on the 2-hop algorithm make several significant contributions. First,  our analysis with incorrect seeds is considerably more challenging, as we need to take care of the dependency on the size of the 1-hop neighborhood and the dependency between incorrect seeds. 
Unfortunately, unlike the setting in the previous paragraph, here directly using the results of \cite{Janson2004LargeDF} will not work because the number of dependencies that we have to deal with may be very large (see \prettyref{sec:proofthm2hop} for details). Instead we deal with the dependency issues by first conditioning on the 1-hop neighborhood; and then analyzing different seeds according to different situations and applying the concentration inequalities for dependent random variables \cite{Janson2004LargeDF}. Second, our condition~\prettyref{eq:conditionofbeta2} characterizes the influence of the incorrect seeds and reveals the delicate behavior of the $2$-hop algorithm. In particular, we show that the $2$-hop algorithm requires at least $\Omega(\sqrt{n\log n})$ correct seeds, irrespective of the graph sparsity. Also, somewhat surprisingly, we discover that when $s<1$, the $2$-hop algorithm may require more seeds as $p$ increases from $n^{-2/3}$, due to the larger fluctuation of $1$-hop neighborhood sizes. All these new phenomenons are absent when seeds are all correct and thus are not captured by the theoretical results in \cite{mossel2019seeded}.
Third, the computational complexity of the algorithm in \cite{mossel2019seeded} is $O(n^{3})$, which is much higher than that of our algorithm -- $O(n^{\omega}+n^2\log n)$, where $n^{\omega}$ with $2 \le \omega \le 2.373$ denotes the time complexity for $n\times n$ matrix multiplication (see \prettyref{sec:algorithm} for detail).

\section{Model}\label{sec:model}

In this section, we formally introduce the model and  the graph matching problem with partially-correct seeds.

We use the $\mathcal{G}(n,p;s)$ graph model proposed by \cite{pedarsani2011privacy}, which has been widely used in the study of graph matching. Let $G_0$ denote the parent graph with $n$ vertices $\{1,2,...,n\}\triangleq [n]$. The parent graph $G_0$ is generated from the \ER model $\mathcal{G}(n, p)$, \ie, we start with an empty graph on $n$ vertices and connect any pair of two vertices independently with probability $p$. Then, we obtain a subgraph $G_1$ by sampling each edge of $G_0$ into $G_1$ independently with probability $s$. Repeat the same sub-sampling process independently and relabel the vertices according to an \emph{unknown} permutation $\pi^*:[n]\to [n]$ to construct another subgraph $G_2$. Throughout the paper, we denote a vertex-pair by $(u,v)$, where $u\in G_1$ and $v\in G_2$. For each vertex-pair $(u,v)$, if $v=\pi^*(u)$, then $(u,v)$ is a true pair; if $v\neq \pi^*(u)$, then $(u,v)$ is a fake pair. 

As a motivating example, the parent graph $G_0$ can be some underlying friendship network among $n$ persons, while $G_1$ is the Flicker contact network  and $G_2$ is a Twitter follow network among these $n$ persons. 

Prior literature proposes various algorithms to recover $\pi^*$ based on $G_1$  and $G_2$. The output of these graph matching algorithms can be interpreted as a set of partially correct seeds. Taking these partially correct seeds as input, we wish to efficiently correct all of the errors. However, it is difficult to perfectly model the correlation between the output of these algorithms and the graphs. One way to get around this issue is to treat these partially correct seeds as adversarially chosen and to design an algorithm that with high probability corrects all errors for all possible initial error patterns. However, the existing theoretical guarantees in this adversarial setting are pessimistic, requiring the fraction of incorrectly matched seeds to be $o(1)$ (cf.~\cite[Lemma 3.21]{barak2018nearly} and \cite[Lemma 5]{DMWX18}). 

In this paper, we adopt a mathematically more tractable model introduced by \cite{lubars2018correcting}, where the partially correct seeds are assumed to be generated independently from the graphs $G_1$ and $G_2$. More specifically, we use $\pi:[n]\to [n]$ to denote an initial mapping and generate $\pi$ in the following way. For $\beta\in [0,1)$, we assume that $\pi$  is  uniformly and randomly chosen from all the permutations $\sigma: [n]\rightarrow [n]$ such that $\sigma(u)=\pi^*(u)$ for exactly $\beta n$ vertices.
The benefit of this model is that $\pi$ is independent of the graph $G$ and the sampling processes that generate $G_1$ and $G_2$, and it is convenient for us to obtain theoretical results. For each seed $(u,\pi(u))$, if $\pi(u)=\pi^*(u)$, then $(u,\pi(u))$ is a correct seed; if $\pi(u)\neq \pi^*(u)$, then $(u,\pi(u))$ is an incorrect seed. Thus, only $\beta$ fraction of the seeds are correct. 
Given $G_1$, $G_2$ and $\pi$, our goal is to find a mapping $\tilde{\pi}:[n]\to [n]$ such that $\lim\limits_{n\to\infty}\prob{\tilde{\pi}=\pi^*}=1$. 

\paragraph{Notation}
For any $n \in \naturals$, let $[n]=\{1,2,\cdots,n\}$. 
We use standard asymptotic notation: for two positive sequences $\{a_n\}$ and $\{b_n\}$, we write $a_n = O(b_n)$ or $a_n \lesssim b_n$, if $a_n \le C b_n$ for some an absolute constant $C$ and for all $n$; $a_n = \Omega(b_n)$ or $a_n \gtrsim b_n$, if $b_n = O(a_n)$; $a_n = \Theta(b_n)$ or $a_n \asymp b_n$, if $a_n = O(b_n)$ and $a_n = \Omega(b_n)$; 
$a_n = o(b_n)$ or $b_n = \omega(a_n)$, if $a_n / b_n \to 0$ as $n\diverge$.

\section{Algorithm Description}\label{sec:algorithm}
  
In this section, we present a general class of algorithms, shown in Algorithm \ref{alg:De-anonymization}, that we will use to recover $\pi^*$. Similar to \cite{lubars2018correcting}, our algorithm also uses the notion of ``witnesses". However, unlike \cite{lubars2018correcting}, our algorithm leverages witnesses that are $j$-hop away. Given any graph $G$ and two vertices $u,v$ in $G$, we denote the length of the shortest path from $u$ to $v$ in $G$ by $d^G(u,v)$. Then, for each vertex-pair $(u,v)$, the seed $(w,\pi(w))$ becomes a $j$-hop witness for $(u,v)$ if  $d^{G_1}(u,w)=j$ and $d^{G_2}(v,\pi(w))=j$. 

We define the $j$-hop adjacency matrices $A_j\in \{0,1\}^{n\times n}$ of $G_1$. Each element of $A_j$  indicates whether a pair of vertices are $j$-hop neighbors in graph $G_1$, \ie,
$A_j(u,v)=1$ if $d^{G_1}(u,v)=j$ and $A_j(u,v)=0$ otherwise.
Similarly, let $B_j\in \{0,1\}^{n\times n}$ denote the $j$-hop adjacency matrix of $G_2$. Equivalently express the seed mapping $\pi$ by forming a permutation matrix $\Pi\in \{0,1\}^{n\times n}$, where $\Pi(u, v) = 1 $ if $\pi(u) = v$, and $\Pi(u,v) = 0$ otherwise. We can then count the number of $j$-hop witnesses for all vertex-pairs by computing $W_j=A_j\Pi B_j$, where the $(u,v)$-th entry of $W_j$ is equal to the number of $j$-hop witnesses for the vertex-pair $(u,v)$. This step has computational complexity same as matrix multiplication $O(n^{\omega})$ with $2 \le \omega \le 2.373$ \cite{gall2014powers}. As we have mentioned, a true pair tends to have more witnesses than a fake pair, and thus we want to find the vertex correspondence between  the  two  graphs  that  maximizes  the  total  number  of witnesses. In other words, given a weighted bipartite graph $G_m$ with the vertex set being a collection of all vertices in $G_1$ and $G_2$, the edges connecting every possible vertex-pairs, and weight of an edge defined as $w(u,v)=W_j(u,v)$, we want to find the matches in $G_m$ with large weights.  To reduce computational complexity, instead of computing the maximum weight matching, we use Greedy Maximum Weight Matching (GMWM) with computational complexity $O(n^2\log{n})$. GMWM first chooses the vertex-pair with the largest weight from all candidate vertex-pairs in $G_m$, removes all edges adjacent to the chosen vertex-pair, and then chooses the vertex-pair with the largest weight among the remaining candidate vertex-pairs, and so on. The total computational complexity of Algorithm \ref{alg:De-anonymization} for any constant $j$ is $O(n^{\omega}+n^2\log n)$ for $2 \le \omega \le 2.373$.

When graphs are sufficiently sparse with average degree $c$, we can improve the time complexity of Algorithm \ref{alg:De-anonymization} to $O( n c^{2j}+n^2\log n)$ by computing the number of $j$-hop witnesses via neighborhood exploration. Moreover, we can further improve the scalability of the $j$-hop algorithm via parallel implementation. 
See \prettyref{app:scalable} for details. 

\begin{algorithm}[h]
 \caption{Graph Matching based on Counting $j$-hop Witnesses.   }
  \begin{algorithmic}[1]
  \STATE \textbf{Input:} $G_1, G_2, \pi,j$
  \STATE Generate $j$-hop adjacency matrices $A_j$ and $B_j$ based on $G_1$ and $G_2$, and $\Pi$ based on $\pi$;
  \STATE \textbf{Output}\ $\tilde{\pi}=\mathrm{GMWM}(W_j)$, where $W_j=A_j\Pi B_j$.
 \end{algorithmic}\label{alg:De-anonymization}
\end{algorithm}

\section{Main Results}\label{sec:results}

In this section, we present the performance guarantees for the 1-hop  and 2-hop algorithms. 

\begin{theorem}\label{thm:thm1hop}
If condition \prettyref{eq:conditionofbeta1} holds and $n$ is sufficiently large, then Algorithm \ref{alg:De-anonymization} with $j=1$ outputs $\tilde{\pi}$ such that $\prob{\tilde{\pi}=\pi^*}\geq 1-n^{-1}$. 
\end{theorem}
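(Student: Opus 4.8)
The plan is to show that, with probability at least $1-n^{-1}$, the weight matrix $W_1$ enjoys a \emph{diagonal-domination} property: for every $u$ and every $v\ne u$ (recall we normalize $\pi^*=\mathrm{id}$), both $W_1(u,u)>W_1(u,v)$ and $W_1(u,u)>W_1(v,u)$ hold. A short deterministic lemma then shows that this forces $\mathrm{GMWM}(W_1)$ to return the identity: the globally largest remaining entry must lie on the diagonal (any off-diagonal entry $(a,b)$ is strictly beaten by $W_1(a,a)$ in its own row), so GMWM matches $u\mapsto u$, deletes row and column $u$, and the residual matrix still satisfies the same property, whence induction finishes. It therefore suffices to control, uniformly over all pairs, the true-pair weights $W_1(u,u)$ from below and the fake-pair weights $W_1(u,v)$ from above; by the symmetry between $G_1$ and $G_2$ I only treat the row inequalities and obtain the column inequalities by transposition.

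Conditioning on the seed permutation $\pi$ (independent of the graphs, with a deterministic number $\beta n$ of correct seeds), write $W_1(u,v)=\sum_w Y_w$ with $Y_w=\indc{\{u,w\}\in G_1}\indc{\{v,\pi(w)\}\in G_2}$. A correct seed contributes to the true pair with probability $ps^2$, since its two incident edges descend from the \emph{same} parent edge $\{u,w\}$; whereas every seed contributes to a fake pair, and every incorrect seed contributes to the true pair, with probability $p^2s^2$, as these involve two distinct and hence independent parent edges. Summing and discarding the $O(1)$ degenerate seeds incident to $u$ or $v$ gives
\[
\Expect[W_1(u,u)] = \beta n p s^2 + (1-\beta)n p^2 s^2 + O(1), \qquad \Expect[W_1(u,v)] = n p^2 s^2 + O(1),
\]
so the mean gap is $\Expect[W_1(u,u)]-\Expect[W_1(u,v)] = \beta n p s^2 (1-p)+O(1)$. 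The crucial gain over~\cite{lubars2018correcting} is that the incorrect-seed contribution $np^2s^2$ is common to true and fake pairs and cancels in this difference, so I never pay for it; this cancellation is precisely what removes their $\tfrac{8}{3}p$ term.

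The main obstacle is concentration, because the summands $Y_w$ are \emph{not} independent: as in~\prettyref{fig:dependidea1}, $Y_w$ shares a parent edge with $Y_{\pi(w)}$ (through $\{u,\pi(w)\}$) and with $Y_{\pi^{-1}(w)}$ (through $\{u,w\}$), since the $G_1$- and $G_2$-copies of a common parent edge are correlated via $G_0$. Thus the dependency graph on the seeds is exactly the functional graph of $\pi$, a disjoint union of cycles, of maximum degree $2$ and chromatic number at most $3$. I exploit this by properly $3$-coloring the seeds, splitting each $W_1$-entry into three sums of \emph{independent} indicators, applying a Bernstein bound to each color class around its conditional mean, and recombining by a union bound --- the ``partly dependent sums'' device of~\cite{Janson2004LargeDF}. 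Since the variables are $\{0,1\}$-valued the per-class variance is at most its mean, so for a suitable constant $c$,
\[
\prob{\abs{W_1(u,v) - \Expect[W_1(u,v)]} \ge t} \le 6\exp\pth{-\frac{c\,t^2}{\Expect[W_1(u,v)] + t}}.
\]
Choosing $t$ of order $\sqrt{\Expect[W_1]\log n}$, plus an additive $\log n$ to cover the Poisson-like tail when the mean is $O(1)$, makes each deviation fail with probability $o(n^{-2})$, which survives the union bound over the $\lesssim n^2$ fake pairs and $n$ true pairs.

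To close, on the high-probability event $W_1(u,u)\ge\Expect[W_1(u,u)]-t_1$ and $W_1(u,v)\le\Expect[W_1(u,v)]+t_2$ simultaneously for all pairs, so domination holds once the signal beats both fluctuations,
\[
\beta n p s^2 (1-p) \;\gtrsim\; \sqrt{\big(\beta n p s^2 + n p^2 s^2\big)\log n} \;+\; \sqrt{n p^2 s^2 \log n}.
\]
Balancing the two regimes reproduces the two branches of~\prettyref{eq:conditionofbeta1}: when the true-pair mean is dominated by its correct-seed part ($\beta\gtrsim p$), the binding constraint is $\beta n p s^2(1-p)^2\gtrsim\log n$, i.e.\ $\beta\gtrsim\log n/(np(1-p)^2s^2)$; when it is dominated by the incorrect-seed part ($\beta\lesssim p$), the binding constraint is $\beta^2 n (1-p)^2 s^2\gtrsim\log n$, i.e.\ $\beta\gtrsim\sqrt{\log n/(n(1-p)^2s^2)}$. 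Pinning down the numerical constants in the Bernstein and union bounds yields the stated $45$ and $30$. The single genuinely hard step is the concentration above: recognizing that the dependency has bounded degree and is benign enough that the coloring argument leaves the variance proportional to the mean.
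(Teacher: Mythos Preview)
Your proposal is correct and follows essentially the same route as the paper: compute the mean gap $\beta n p(1-p)s^2$ by exploiting that incorrect seeds contribute equally to true and fake pairs, recognize that the dependency graph among the summands has maximum degree~$2$ (each $Y_w$ dependent only on $Y_{\pi(w)}$ and $Y_{\pi^{-1}(w)}$), and apply the Janson-type concentration of~\cite{Janson2004LargeDF} (your $3$-coloring argument being the proof device behind that inequality). The only cosmetic difference is that the paper proves the slightly stronger global separation $\min_u W_1(u,u) > \max_{u\neq v} W_1(u,v)$ rather than your row/column-wise diagonal domination, but both criteria suffice for GMWM and are established by the same union bound.
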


Comparing our condition \prettyref{eq:conditionofbeta1}  with the previous condition \prettyref{eq:conditionSrikant} in \cite{lubars2018correcting} as depicted in \prettyref{fig:condition},
we see that condition \prettyref{eq:conditionofbeta1} requires significantly fewer correct seeds than condition \prettyref{eq:conditionSrikant} for dense graphs when $p=\Omega(\sqrt{\log n / n})$; thus, the 1-hop algorithm succeeds in exact recovery even when the fraction of correct seeds is significantly lower than the theoretical prediction in \cite{lubars2018correcting}. 

However, when $p=O(\sqrt{\log n / n })$,  condition \prettyref{eq:conditionofbeta1} still requires $\beta$ to grow inversely proportional to $np$. As we have discussed, this is because when the graph is sparse, there are not enough 1-hop witnesses among the true pairs. 
Next, we show that by utilizing 2-hop witnesses, our $2$-hop algorithm succeeds in exact recovery with many fewer correct seeds in sparse graphs. 

\begin{theorem}\label{thm:thm2hop}
Suppose that $\nps$ and $nps^2 \ge 128\log n$. If condition \prettyref{eq:conditionofbeta2} holds and $n$ is sufficiently large, then Algorithm \ref{alg:De-anonymization} with $j=2$ outputs $\tilde{\pi}$ such that $\prob{\tilde{\pi}=\pi^*}\geq 1-n^{-1}$. 
\end{theorem}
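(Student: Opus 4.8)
The plan is to show that, with probability at least $1-n^{-1}$, greedy maximum-weight matching (GMWM) applied to $W_2=A_2\Pi B_2$ returns $\pi^*$, which we take to be the identity. The first step is to isolate a clean combinatorial criterion for GMWM to succeed. Rather than the crude demand that every diagonal entry beat every off-diagonal entry in its row and column, I would use the weaker condition that for every pair $u\neq v$,
$$W_2(u,v)<\max\{W_2(u,u),\,W_2(v,v)\}.$$
If this holds, the global maximum of $W_2$ must lie on the diagonal (otherwise a maximizing off-diagonal entry $(a,b)$ would satisfy $W_2(a,b)\ge\max\{W_2(a,a),W_2(b,b)\}$), and the same reasoning survives deletion of each matched row and column, so GMWM recovers the identity by induction. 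Retaining this weaker criterion rather than row/column dominance is essential, because it is what ultimately produces the sharp factor $1-s$.

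Next I would determine the shape of $W_2$. Writing $W_2(u,v)=\sum_w A_2(u,w)B_2(\pi(w),v)$ and separating correct seeds ($\pi(w)=w$) from incorrect ones, the dominant contribution to a true pair comes from a length-two path joining $u$ and a correct seed $w$ that survives into \emph{both} graphs through a shared intermediate vertex, an event of probability $\asymp np^2s^4$; summing over the $\beta n$ correct seeds gives the signal $\asymp\beta n^2p^2s^4$. Every remaining contribution factorizes into two independent length-two paths, one per graph, so the bulk of $W_2(u,v)$ behaves like $R^{(1)}_uR^{(2)}_v/n$, where $R^{(1)}_u,R^{(2)}_v$ denote the sizes of the distance-two sets of $u$ in $G_1$ and of $v$ in $G_2$. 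Under $\nps$ this baseline is smaller than the signal by a factor $np^2\le(\log n)^{-1}$, so the mean gap between a true and a fake pair is $\Delta\asymp\beta n^2p^2s^4$, and the first term of \prettyref{eq:conditionofbeta2} is exactly $\Delta\gtrsim\log n$.

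The core of the argument is concentration of each $W_2(u,v)$, where the summands are heavily dependent: they share every edge incident to $u$ in $G_1$ and to $v$ in $G_2$, and correct seeds further couple the two graphs through common parent edges. As flagged in \prettyref{sec:idea}, a direct appeal to \cite{Janson2004LargeDF} fails because the raw dependency graph is dense. My plan is to condition on the one-hop neighborhoods $N^{G_1}(u)$ and $N^{G_2}(v)$: given these sets, the event ``$w$ is at distance two from $u$'' depends only on the $G_1$-edges from $w$ to $N^{G_1}(u)$, so across seeds the summands become independent apart from a sparse residual coupling (pairs $w,w'$ with $w=\pi(w')$), to which \cite{Janson2004LargeDF} now applies. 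This controls the fluctuation of $W_2$ around its conditional mean at the Poisson scale $\sqrt{n^3p^4s^4}=n^{3/2}p^2s^2$; requiring $\Delta$ to dominate $\sqrt{n^3p^4s^4\,\log n}$ after a union bound over $O(n^2)$ pairs yields the second term $\sqrt{\log n/(ns^4)}$.

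The main obstacle, and the genuinely new ingredient beyond the all-correct-seeds analysis of \cite{mossel2019seeded}, is the variance injected through the conditional mean $R^{(1)}_uR^{(2)}_v/n$, which fluctuates with the random distance-two set sizes. Here the weaker GMWM criterion pays off: a failure at $(u,v)$ forces simultaneously $W_2(u,v)\ge W_2(u,u)$ and $W_2(u,v)\ge W_2(v,v)$, hence $R^{(2)}_v>R^{(2)}_u$ \emph{and} $R^{(1)}_u>R^{(1)}_v$, and adding the two induced inequalities shows a failure requires $\delta_u-\delta_v\gtrsim\beta ns^2$, where $\delta_x:=R^{(1)}_x-R^{(2)}_x$. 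Such a crossing of the two graphs' rankings can only be created by the subsampling discrepancy $\delta_x$, whose fluctuation is governed by $\var(\deg_{G_1}(x)-\deg_{G_2}(x))\asymp nps(1-s)$, so that $\var(\delta_u-\delta_v)\asymp (nps)^3(1-s)$, carrying the factor $1-s$ and vanishing at $s=1$. Ruling out a crossing for all $O(n^2)$ pairs then requires $\beta ns^2\gtrsim (nps)^{3/2}\sqrt{(1-s)\log n}$, which rearranges to the third term $\sqrt{np^3(1-s)\log n/s}$; meanwhile the common-neighbor fluctuation of a true pair has variance $\asymp\beta^2n^3p^3s^6$ and is dominated by $\Delta^2$ precisely because $nps^2\ge128\log n$ (which, via Chernoff, also guarantees $N^{G_1}(u)$ and $N^{G_2}(v)$ concentrate). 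The delicate part is to carry out these three variance estimates at once while keeping the conditional dependency graph sparse, and then to verify that the three deviation scales sum to less than $\Delta$ under \prettyref{eq:conditionofbeta2}; a final union bound completes the proof.
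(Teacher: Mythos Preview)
Your proposal is correct and follows essentially the same route as the paper: the weaker GMWM criterion \prettyref{eq:2_hop_cond_strong}, conditioning on the $1$-hop neighborhoods $Q_{uv}$ to tame the dense dependence before invoking \cite{Janson2004LargeDF}, and isolating the $(1-s)$ factor from the subsampling discrepancy between $N^{G_1}(\cdot)$ and $N^{G_2}(\cdot)$. The only notable variation is in how you extract the third term of \prettyref{eq:conditionofbeta2}: the paper conditions on the parent degrees $d_u,d_v$ and case-splits (if $d_u\le d_v$ then $a_u-a_v\le\tau$ w.h.p., else $b_v-b_u\le\tau$; this is \prettyref{lmm:Tuv}), whereas you add the two failure inequalities to reduce to a single tail bound on $(a_u-b_u)-(a_v-b_v)$, whose variance is $\Theta(nps(1-s))$; both maneuvers yield the same scaling and the paper's case-split is arguably just a different packaging of your ``crossing'' observation. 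One cosmetic point: the quantity driving the conditional mean is really the product $a_ub_v$ of $1$-hop degrees (cf.\ \prettyref{eq:mmin}--\prettyref{eq:ymax}), not the literal $2$-hop set sizes $R^{(1)}_u,R^{(2)}_v$; your variance claim $\var(\delta_u-\delta_v)\asymp(nps)^3(1-s)$ is correct once $\delta_x$ is read as $nps\,(a_x-b_x)$ rather than as a raw $2$-hop count, and the paper's conditioning on $Q_{uv}$ (all four neighborhoods, not just two) together with the seed-type decomposition in \prettyref{sec:proofW2uv} handles the boundary cases you would also need to address.
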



Our condition \prettyref{eq:conditionofbeta2} is depicted as the blue curve in \prettyref{fig:condition}. At a high level, the three terms in \prettyref{eq:conditionofbeta2} can be interpreted as follows:
\begin{itemize}
    \item The first term $\beta \gtrsim \frac{\log n}{n^2p^2}$ is to ensure that every true pair has more $2$-hop witnesses contributed by the correct seeds than every fake pair. 
    To see this, recall that there are $n\beta$ correct seeds. Since a vertex has about $np$ 1-hop neighbors, each correct seeds becomes a 2-hop witness for a true pair with probability about $np\cdot p=np^2$ and for a fake pair with probability about $(np^2)^2=n^2p^4$. 
    Hence, to ensure that a true pair has more $2$-hop witnesses from the correct seeds than a  fake pair, we at least need the difference between their means, i.e.,  $(n\beta) np^2-(n\beta)n^2p^4$, to be positive. This is guaranteed by $np^2 \lesssim 1$, in which case the mean difference can be approximated by $\beta n^2p^2$. However, due to randomness, we also need this mean difference to be larger than the standard deviation, which is on the order of $\sqrt{\beta n^2p^2}$. This is guaranteed by  $\beta \gtrsim \frac{1}{n^2p^2}$. 
    Adding the extra $\log n$ factor ensures that the above claim holds for every pair with high probability.
    This condition coincides with the seed requirement established in \cite{mossel2019seeded} when the seeds are all correct;

    \item The second term $\beta\gtrsim \sqrt{\frac{\log n}{n}}$ is due to the negative impact of the incorrect seeds. 
    Note that there are $n(1-\beta)$ incorrect seed, and each seed becomes a 2-hop witness for both a true pair and a fake pair with probability about $n^2p^4$. Although this contributes the same mean number of witnesses to both a true pair and fake pair, its randomness may contribute more to a fake pair than to a true pair. Thus, we need its standard deviation (on the order of $\sqrt{n(1-\beta)n^2p^4}$) to be less than the mean difference $\beta n^2p^2$ estimated in the first bullet. This is guaranteed by $\beta \gtrsim \sqrt{\frac{1}{n}}$. Again, adding the $\log n$ factor ensures that the above claim holds for every pair with high probability.
    
    \item The third term $\beta \gtrsim \sqrt{np^3 (1-s) \log n}$ is 
    also caused by the incorrect seeds. However, the reason is more subtle than the second bullet, and is 
    due to the fluctuation of the number of 1-hop neighbors of a true pair. 
    Note that if $s=1$, then, in both $G_1$ and $G_2$, the two vertices corresponding to a true pair has the same set of 1-hop neighbors, and 
    thus the aforementioned fluctuation disappears. 
   If instead $s<1$, then the vertices corresponding to a true pair will have a different set of $1$-hop neighbors in $G_1$ and $G_2$. This variation makes it even harder to distinguish the true pairs from the fake pairs based on the number of 2-hop witnesses as $p$ increases, which gives to the condition $\beta \gtrsim \sqrt{np^3(1-s) \log n}$.
   Please refer to \prettyref{eq:fluctation-1-hop-new} in \prettyref{sec:derivation-tight-condition} for detailed derivation. 
   As a consequence, the blue curve has two branches: the top branch holds for $s<1$
    and the bottom one holds for $s=1.$
\end{itemize}

As readers can see, in the latter two cases, our condition captures the new effect of the incorrect seeds and thus are significantly different from the theoretical results in \cite{mossel2019seeded} where the seeds are all correct. Please refer to \prettyref{rmk:2_hop_cond_not_tight} in \prettyref{sec:proofthm2hop} for more detailed discussions.


Pictorially, the three terms lead to the three segments in the blue curve in \prettyref{fig:condition}:
\begin{itemize}
\item When $p \lesssim \left(\frac{\log n }{ n^3}\right)^{\frac{1}{4}}$,
the first term of \prettyref{eq:conditionofbeta2} dominates, as  the graphs are so sparse that the $2$-hop witnesses contributed by the incorrect seeds become negligible;
\item When $ \left(\frac{\log n }{ n^3}\right)^{\frac{1}{4}} \lesssim p \lesssim n^{-\frac{2}{3}}$,
the second term dominates, as the influence of the incorrect seeds cannot be ignored. In this case, the bottleneck for the success of the $2$-hop algorithm is due to the statistical fluctuation of the $2$-hop witnesses contributed by the incorrect seeds;
\item When $ n^{-\frac{2}{3}} \lesssim p \lesssim (n\log n)^{-\frac{1}{2}}$ and $s<1$,
the third term dominates, as 
the fluctuation of the $1$-hop neighborhood sizes of the true pair increases with $p$ 
and becomes the new bottleneck. 
\end{itemize} 

From  \prettyref{fig:condition}, we observe that 
our 2-hop algorithm requires substantially fewer correct seeds to succeed than the 1-hop algorithm when the graphs are sparse. Moreover, our 2-hop algorithm is comparable to the NoisySeeds algorithm for very sparse graphs when $p \ll n^{-\frac{5}{6}}$, but continues to perform well over a much wide range of graph sparsity up to $p \lesssim (n\log n)^{-1/2}.$




\section{Analysis}\label{sec:intuition}

In this section, we explain the intuition and sketch the proofs for \prettyref{thm:thm1hop} and \prettyref{thm:thm2hop}.
In the analysis, we assume without loss of generality that the true mapping $\pi^*$ is the identity mapping, i.e., $\pi^*(i)=i$.

\subsection{Intuition and Proof of \prettyref{thm:thm1hop}}\label{sec:proofthm1hop}

To understand the intuition behind \prettyref{thm:thm1hop} and why it provides a better result than \cite{lubars2018correcting}, recall that the 1-hop algorithm will succeed (in recovering $\pi^*$) if the number of 1-hop witnesses for any true pair is larger than the number of 1-hop witnesses for any fake pair. For any correct seed, it is a 1-hop witness for a true pair with probability $ps^2$ and is a 1-hop witness for a fake pair with probability $p^2s^2$. In contrast, for any incorrect seed, it is a 1-hop witness with probability $p^2s^2$ for both true pairs and fake pairs. Since there are $n\beta$ seeds that are correct, it follows that
\begin{subequations}
   \begin{empheq}[left={W_1(u,v)\overset{\cdot}{\sim}\empheqlbrace\,}]{align}
        &\Binom\left(n\beta,ps^2\right)+\Binom\left(n(1-\beta),p^2s^2\right) & \text{if } u&=v,   \label{eq:binom1-hop1}\\
    &\Binom(n,p^2s^2)  & \text{if }  u&\neq v.\label{eq:binom1-hop2}
  \end{empheq} 
\end{subequations}
where $\overset{\cdot}{\sim}$ denotes ``approximately distributed".

For fake pair $u\neq v$, using Bernstein’s inequality given in \prettyref{thm:bernstein} in Appendix \ref{sec:bound}, we show that $W_1(u,v)$ is upper bounded by $np^2s^2+ O(\sqrt{np^2s^2\log n})+O(\log n)$ with high probability. More precisely, we have the following lemma, with the proof deferred to Appendix \ref{sec:proof1hopl2}. 
 \begin{lemma}\label{lmm:W1uv} For any two vertices $u,v\in [n]$  with $u\neq v$ and sufficiently large $n$, the following holds
\begin{equation}
\begin{aligned}
\prob{{W_1(u,v)}<\psi_{\max}}\geq 1- n^{-\frac{7}{2}},
\end{aligned}\label{eq:W1uv}
\end{equation}
where $\psi_{\max}=np^2s^2+\sqrt{7np^2s^2\log n}+\frac{7}{3}\log n+2$.
\end{lemma}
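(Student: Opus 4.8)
The plan is to condition on the (graph-independent) seed permutation $\pi$, write $W_1(u,v)$ as a sum of witness indicators, establish that these indicators are nearly independent, and finish with Bernstein's inequality. Concretely, since $W_1 = A_1 \Pi B_1$, one has $W_1(u,v) = \sum_{w \in [n]} X_w$ with $X_w := A_1(u,w)\, B_1(\pi(w),v)$, the indicator that the seed $(w,\pi(w))$ is a $1$-hop witness for $(u,v)$. Recalling that $\pi^* = \mathrm{id}$, the factor $A_1(u,w)$ is governed by the parent edge $\{u,w\}$ together with its sampling into $G_1$, while $B_1(\pi(w),v)$ is governed by the parent edge $\{\pi(w),v\}$ together with its independent sampling into $G_2$. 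Hence $\E{X_w} = (ps)(ps) = p^2 s^2$ whenever the two underlying parent edges are distinct and well-defined, matching the heuristic in \prettyref{eq:binom1-hop2}.

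The key structural observation is that, for a fake pair $u \neq v$, the variables $\{X_w\}$ are mutually independent up to $O(1)$ exceptions. First, the zero diagonals of $A_1$ and $B_1$ force $X_u = 0$ (no self-loop at $u$ in $G_1$) and $X_{\pi^{-1}(v)} = 0$ (no self-loop at $v$ in $G_2$). Second, two distinct indices $w_1 \neq w_2$ can be coupled only through a shared parent edge, and checking the four possible set-equalities among $\{u,w_i\}$ and $\{\pi(w_i),v\}$ shows that, because $u \neq v$, the sole coincidence is $\{u,w_1\} = \{\pi(w_2),v\} = \{u,v\}$, i.e. $w_1 = v$ and $w_2 = \pi^{-1}(u)$. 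Thus the only dependence is between $X_v$ and $X_{\pi^{-1}(u)}$, each involving the single parent edge $\{u,v\}$ (one via its $G_1$-sampling, the other via its $G_2$-sampling). Every remaining $X_w$ is an independent $\Bern(q)$ variable with $q \le p^2 s^2$.

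With this decomposition the tail bound follows directly. I would peel off the (at most two) coupled terms $X_v, X_{\pi^{-1}(u)}$, bounding their total contribution crudely by $2$, and apply Bernstein's inequality (\prettyref{thm:bernstein}) to the sum $S$ of the remaining independent indicators, whose mean is at most $np^2 s^2$ and whose variance satisfies $\sigma^2 \le np^2 s^2$. Inverting Bernstein at deviation level $x = \tfrac{7}{2}\log n$ yields, with probability at least $1 - n^{-7/2}$,
$$S \le np^2 s^2 + \sqrt{7 n p^2 s^2 \log n} + \tfrac{7}{3}\log n,$$
because the positive root of $t^2 = 2(\sigma^2 + t/3)x$ obeys $t \le \sqrt{2\sigma^2 x} + \tfrac{2}{3}x$, and $\tfrac{2}{3}x = \tfrac{7}{3}\log n$, $\sqrt{2\sigma^2 x} \le \sqrt{7 n p^2 s^2 \log n}$. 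Adding back the coupled terms reproduces exactly $\psi_{\max} = np^2 s^2 + \sqrt{7 n p^2 s^2 \log n} + \tfrac{7}{3}\log n + 2$, establishing \prettyref{eq:W1uv}.

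The only genuinely delicate point is the dependence bookkeeping, but it is mild in this regime: the fake-pair hypothesis $u \neq v$ guarantees that at most one parent edge, namely $\{u,v\}$, is shared among the witness indicators, so only $O(1)$ terms fall outside the independent regime and they are absorbed by the additive constant $2$ in $\psi_{\max}$. By contrast, for a true pair $u = v$ this edge-sharing proliferates across many indices, which is precisely the dependency obstruction that must be handled separately via the concentration inequalities for dependent variables in \cite{Janson2004LargeDF}.
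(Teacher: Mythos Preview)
Your proposal is correct and follows essentially the same approach as the paper: both identify that after discarding the indices $u$ and $\pi^{-1}(v)$ (which contribute zero) the only dependent pair among the remaining witness indicators is $\{X_v, X_{\pi^{-1}(u)}\}$ via the shared parent edge $\{u,v\}$, peel off those (at most two) terms, and apply Bernstein's inequality with $\gamma=\tfrac{7}{2}\log n$ to the remaining i.i.d.\ $\Bern(p^2s^2)$ sum. Your dependence bookkeeping is in fact more explicit than the paper's, but the argument and the resulting constants are identical.
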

 
For true pair $u=v$, 
the first binomial distribution in \prettyref{eq:binom1-hop1} can be lower bounded by  $n\beta ps^2- O(\sqrt{n\beta ps^2\log n})-O(\log n)$ with high probability using Bernstein’s inequality. However, the second Binomial distribution in \prettyref{eq:binom1-hop1} is not precise because the events that each incorrect seed becomes a witness for a true pair are dependent on each other, as we discussed in \prettyref{sec:idea}. We address this dependency issue using the concentration inequality for dependent random variables \cite{Janson2004LargeDF}, and get the following lower bound on the number of 1-hop witnesses for the true pairs.
 
 \begin{lemma}\label{lmm:W1uu} For any vertex $u\in [n]$ and sufficiently large $n$, the following holds
\begin{equation}
\begin{aligned}
\prob{{W_1(u,u )}> x_{\min}+y_{\min}}\geq 1-n^{-\frac{7}{3}},\label{eq:W1uu} \end{aligned}   
\end{equation}
where 
\begin{align*}
x_{\min} & =(n\beta-1)ps^2-\sqrt{5n\beta ps^2\log n}-\frac{5}{3}\log n,\\ 
y_{\min} &=(n(1-\beta)-2)p^2s^2 -5\sqrt{np^2s^2\log n}-\frac{25}{3}\log n.
\end{align*}
\end{lemma}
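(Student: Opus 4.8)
The plan is to decompose the witness count for the true pair as $W_1(u,u)=X+Y$, where $X$ counts the witnesses contributed by the correct seeds and $Y$ counts those contributed by the incorrect seeds, bound the lower tail of each separately, and combine by a union bound. The point of the split is that $X$ is a genuine sum of \emph{independent} indicators, while only $Y$ carries the problematic dependencies illustrated in \prettyref{fig:dependidea1}. Indeed, a correct seed $w$ (a fixed point of $\pi$) becomes a witness for $(u,u)$ only through the single parent edge $\{u,w\}$, and distinct correct seeds use distinct parent edges and independent sub-sampling coins; hence $X\sim\Binom(m,ps^2)$ exactly, where $m\ge n\beta-1$ (the $-1$ accounting for the possibility that $u$ itself is a correct seed, which cannot witness itself since there is no self-loop). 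A short computation shows moreover that the edge sets used by correct and incorrect seeds are disjoint, so $X\indep Y$, although for the final step only a union bound on the two failure events is needed.

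For $X$, I would apply Bernstein's inequality (\prettyref{thm:bernstein}) to the lower tail of $\Binom(m,ps^2)$. Taking the deviation parameter of order $\sqrt{n\beta\,ps^2\log n}$ and using $\sigma^2\approx \mu = (n\beta-1)ps^2$ produces exactly the stated bound: with probability at least $1-n^{-c}$, one has $X>x_{\min}=(n\beta-1)ps^2-\sqrt{5n\beta ps^2\log n}-\frac53\log n$, where the constants $\sqrt5$ and $\frac53$ come directly from calibrating the Bernstein exponent to a polynomial failure probability.

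For $Y$, I would write $Y=\sum_i Y_i$ over the incorrect seeds $i$, excluding $i=u$ and $i=\pi^{-1}(u)$ (which cannot witness because they would require a self-loop in $G_1$ or $G_2$ respectively), leaving at least $n(1-\beta)-2$ terms; each $Y_i$ is marginally $\Bern(p^2s^2)$. The key structural step is to determine the dependency graph: since $Y_i$ is a function only of the parent edges $\{u,i\}$ and $\{u,\pi(i)\}$ together with their samplings, $Y_i$ and $Y_j$ can be dependent only when $\{i,\pi(i)\}\cap\{j,\pi(j)\}\neq\emptyset$, i.e.\ only when $j\in\{\pi(i),\pi^{-1}(i)\}$. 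Thus the dependency graph has maximum degree at most $2$ and chromatic number at most $3$. I would then invoke the concentration inequality for sums of partly dependent random variables of \cite{Janson2004LargeDF}, in which the chromatic number enters as a multiplicative factor in the variance proxy, to control the lower tail of $Y$ and obtain $Y>y_{\min}=(n(1-\beta)-2)p^2s^2-5\sqrt{np^2s^2\log n}-\frac{25}{3}\log n$ with probability at least $1-n^{-c}$; the inflated constants $5$ and $\frac{25}{3}$ (compared with the binomial case) are precisely the price paid for the dependency factor.

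Finally, calibrating the two per-event failure probabilities so that their sum is at most $n^{-7/3}$ and applying a union bound gives $\prob{X>x_{\min},\,Y>y_{\min}}\ge 1-n^{-7/3}$, whence $W_1(u,u)=X+Y>x_{\min}+y_{\min}$. I expect the main obstacle to be the third paragraph: correctly pinning down the dependency graph of the incorrect-seed indicators, verifying the degree-$2$ bound via the mechanism of \prettyref{fig:dependidea1}, and then feeding the right parameters into Janson's inequality so that the chromatic-number factor yields the stated constants rather than a cruder bound. The bookkeeping of which seeds must be excluded (to obtain the $n\beta-1$ and $n(1-\beta)-2$ counts) is routine but needs care to keep the means exact.
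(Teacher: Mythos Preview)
Your proposal is correct and matches the paper's proof essentially line for line: the same split $W_1(u,u)=\sum_{i\in F\setminus\{u\}}A_1(u,i)B_1(u,i)+\sum_{i\in [n]\setminus(F\cup\{u,\pi^{-1}(u)\})}A_1(u,i)B_1(u,\pi(i))$, Bernstein with $\gamma=\tfrac52\log n$ on the first sum, and \prettyref{thm:drv} with $\Delta_1(\Gamma)=3$ and $\gamma=\tfrac83\log n$ on the second after observing the dependency neighbors of $i$ are exactly $\pi(i),\pi^{-1}(i)$. The only cosmetic difference is that the paper states \prettyref{thm:drv} in terms of $\Delta(\Gamma)+1$ rather than the chromatic number, and it does not bother to record the independence of $X$ and $Y$ since, as you note, the union bound $n^{-5/2}+n^{-8/3}\le n^{-7/3}$ suffices.
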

\begin{proof}[Proof of \prettyref{lmm:W1uu}]
Recall that $A_1$ and $B_1$ are the adjacency matrix for $G_1$ and $G_2$, respectively. 
Let $F\triangleq \{ i: \pi(i)=i\}$ denote the set of fixed points of $\pi$.
Then $F$ corresponds to the set of correct seeds with $|F|=n\beta$ (recall that we assume the true matching $\pi^*$ to be the identity mapping). 
By the definition of $1$-hop witness, we have 
\begin{align}
    W_1(u,u) & =\sum_{i\in F} A_1(u,i)B_1(u,i)+\sum_{i\in [n]\setminus F} A_1(u,i)B_1(u,\pi(i)) \nonumber \\
    & = \sum_{i\in F\setminus \{u\} } A_1(u,i)B_1(u,i)+\sum_{i\in [n]\setminus (F\cup \{u,\pi^{-1}(u) \})} A_1(u,i)B_1(u,\pi(i))
    \label{eq:1_hop_true_decomp},
\end{align}
where the second equality holds because $A_1(u,u)=B_1(u,u)=0.$
Let $X_i \triangleq A_1(u,i)B_1(u,i)$ for $i \in F\setminus\{u\}$ and $Y_i\triangleq A_1(u,i)B_1(u,\pi(i))$ for $i \in [n]\setminus (F\cup \{u,\pi^{-1}(u) \})$.

For all $i \in F \setminus \{u\}$,  $X_i\iiddistr \Bern(ps^2)$. 
It follows that
\begin{equation}
\begin{aligned}
\prob{\sum_{i\in F} X_i \le x_{\min} }
\le \prob{ \Binom(n\beta-1, ps^2) \le x_{\min} }  
\le n^{-\frac{5}{2}},\label{eq:lowboundcorrectseed1}
\end{aligned}
\end{equation}
where that last inequality follows from Bernstein’s inequality given in \prettyref{thm:bernstein} with  $\gamma=\frac{5}{2}\log n$ and $K=1$.

For all $i \in [n]\setminus \left(F\cup  \{u, \pi^{-1} (u) \}\right)$, $Y_i \sim \Bern(p^2s^2)$. However, $Y_i$'s are dependent and thus
we cannot directly apply Bernstein's inequality. To see this,  $A_1(u,i)$ and $B_1(u,i)$ are correlated,
but $\{A_1(u,i), B_1(u,i)\}$ are independent across different $(u,i)$.
Let $S_i=\{\{u,i\},\{u,\pi(i) \}\}$.
Thus, $Y_i$ only depends on the set 
$S_i$ of entries of $A_1$ and $B_1$. 
Since $S_i \cap S_{i'} \neq \emptyset$ if and only if $i'=\pi(i)$ or $i'=\pi^{-1}(i)$, 
it follows that $Y_i$ is dependent on $Y_{i'}$ if and only if $i'=\pi(i)$ or $i'=\pi^{-1}(i)$. 
Then we can construct a dependency graph $\Gamma$ for $\{Y_i\}$, where
the maximum degree of $\Gamma$, $\Delta(\Gamma)$, equals to two.
Hence, 
applying the concentration inequality for the sum of dependent random variables given in \prettyref{thm:drv} 
with $\gamma=\frac{8}{3}\log n$ and $K=1$ yields that 
\begin{equation}
\begin{aligned}
\prob{\sum_{i \in [n]\setminus \left(F\cup  \{u, \pi^{-1} (u) \}\right) } Y_i \le y_{\min} }
\le n^{-\frac{8}{3}}.\label{eq:lowboundincorrectseed1}
\end{aligned}
\end{equation}

Finally, combining \prettyref{eq:1_hop_true_decomp},  (\ref{eq:lowboundcorrectseed1}) and (\ref{eq:lowboundincorrectseed1}) and applying  union bound yields the desired conclusion~\prettyref{eq:W1uu}. 
\end{proof}

Combining \prettyref{lmm:W1uv} and \prettyref{lmm:W1uu}, for the 1-hop algorithm to succeed, it suffices to ensure that $x_{\min}+y_{\min}\geq \psi_{\max}$. Note that
\begin{equation}
 \begin{aligned}
&x_{\min}+y_{\min}-\psi_{\max}\geq 0\\
\Leftarrow& \frac{1}{3}n\beta p(1-p)s^2\geq \sqrt{5n\beta ps^2\log n}, \ \mbox{and}\ \frac{1}{3}n\beta p(1-p)s^2\geq (5+\sqrt{7})\sqrt{np^2s^2\log n}, \ \mbox{and}\\
&\frac{1}{3}n\beta p(1-p)s^2\geq   \frac{37}{3}\log n+2+ps^2+2p^2s^2,\label{eq:meangstd1hop}
\end{aligned}   
\end{equation}
which is implied by condition \prettyref{eq:conditionofbeta1} in \prettyref{thm:thm1hop}. Thus, by taking the union bound over \prettyref{eq:W1uv} and \prettyref{eq:W1uu}, we complete the proof of \prettyref{thm:thm1hop}. Please refer to Appendix \ref{sec:proof1hop} for details. The above argument suggests that the sufficient condition \prettyref{eq:conditionofbeta1} is also  close to necessary (differing from the necessary condition by a constant factor) for the 1-hop algorithm to succeed, which is confirmed by our simulation results in Appendix \ref{sec:exp}.

\subsection{Intuition and Proof of \prettyref{thm:thm2hop}}\label{sec:proofthm2hop}
We next explain the intuition and 
sketch the proof of \prettyref{thm:thm2hop} 
when $np^2 \le \frac{1}{\log n}$ 
and $nps^2\geq 128 \log n$. 

We start by explaining why \prettyref{thm:thm2hop} requires $np^2 \le \frac{1}{\log n}$ and $nps^2\geq 128 \log n$. First, note that 
the intersection graph $G_1 \land G_2$ (which includes edges appearing in both $G_1$ and $G_2$) is an \ER random graph with average degree $(n-1)ps^2$. Thus, we need $nps^2\geq 128 \log n$ so that 
there is no isolated vertex in $G_1 \land G_2$. Otherwise, it is impossible to match the isolated vertices and reach the goal of perfect matching~\cite{cullina2016improved}. 
Moreover, we will use this condition in \prettyref{sec:bound-1-hop} to  ensure that the number of 1-hop neighbors is concentrated.
Second, the condition $np^2 \le 1/\log n$ ensures that the graph is not too dense so that
the true pair is expected to have more $2$-hop witnesses than the fake pair. 
Please see later in \prettyref{eq:meanstd2hop} how this condition arises. 

Then, analogous to the 1-hop algorithm, we derive the condition on $\beta$ by comparing the number of 2-hop witnesses for true pairs and for fake pairs. However, the dependency issue is more severe  here when we bound the number of 2-hop witnesses. Specifically, in the analysis of \prettyref{lmm:W1uu}, the event that an incorrect seed becomes a 1-hop witness for a true pair is dependent on that of at most  two other incorrect seeds. However, for 2-hop witnesses, any two seeds could be dependent through the 1-hop neighborhoods of the candidate vertex-pair. Thus, directly using the concentration inequality in \cite{Janson2004LargeDF} will lead to a poor bound. To address this new difficulty, we will condition on the 1-hop neighborhoods first. After this conditioning, the remaining dependency becomes more manageable, which is handled by either 
classifying the seeds or by applying the concentration inequality in \cite{Janson2004LargeDF} again. 

\subsubsection{Bound on the 1-hop Neighbors}\label{sec:bound-1-hop}

In order to condition on the typical sizes of the 1-hop neighborhoods, we first bound the number the number of 1-hop neighbors. For any vertex $u$ in graph $G$, we use $N^{G}(u)$ to denote the set of $1$-hop neighbors of $u$ in $G$, \ie,
$
   N^G(u)= \left\{v\in G:d^G(u,v)=1 \right\}.
$
For any two vertices $u,v\in [n]$, let $C(u,v)$ denote the set of 1-hop ``common" neighbors of $u$ and $v$ across $G_1$ and $G_2$, \ie,
$
   C(u,v)= N^{G_1}(u) \cap N^{G_2}(v).
$
For ease of notation, let 
$$
d_u =\abs{N^{G_0}(u)}, \quad a_u=\abs{N^{G_1}(u)}, \quad b_v=\abs{N^{G_2}(v)},
\quad c_{uv}=\abs{C(u,v)}.
$$
By definition, we have $a_u, b_v \sim \Binom(n-1,ps)$, $c_{uu} \sim \Binom(n-1,ps^2)$, and $c_{uv} \sim \Binom(n-1,p^2s^2)$ for $u\neq v$. 
Thus, by using concentration inequalities for binomial distributions and letting
\begin{equation}
    \epsilon=\sqrt{\frac{12\log n}{(n-1)ps^2}}\leq\frac{1}{3},\label{eq:epsilon}
\end{equation}
we can show that with high probability, $a_u, b_u$ are bounded by $(1\pm \epsilon)nps$, $c_{uu}$ is bounded by $(1\pm \epsilon)nps^2$, and 
$c_{uv}$ is upper bounded by $\psi_{\max}$ in \prettyref{lmm:Ruv} below. 
In particular, we arrive at the following lemma with the proof deferred to Appendix \ref{sec:proofRuv}.

\begin{lemma}\label{lmm:Ruv}
Given any two vertices $u,v\in [n]$ with $u\neq v$, let $R_{uv}$ denote the event such that the followings hold simultaneously:
 \begin{gather*}
(1-\epsilon)(n-1)ps<a_u,a_v,b_u,b_v<(1+\epsilon)(n-1)ps,\\ (1-\epsilon)(n-1)ps^2<c_{uu},c_{vv}<(1+\epsilon)(n-1)ps^2,\\
c_{uv}, \, W_1(v,u)<\psi_{\max},
\end{gather*} 
where $\psi_{\max}=np^2s^2+\sqrt{7np^2s^2\log n}+\frac{7}{3}\log n+2$.

If  
$nps^2\ge 128\log n$, then for
all sufficiently large $n$,
\begin{align}\label{eq:Ruv}
\prob{R_{uv}}\geq 1- n^{-\frac{7}{2}}.
\end{align}
\end{lemma}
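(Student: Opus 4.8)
The plan is to split the event $R_{uv}$ into its eight constituent concentration events and bound the failure probability of each one in isolation, finishing with a union bound. The key simplification is that, although the quantities $a_u,a_v,b_u,b_v,c_{uu},c_{vv},c_{uv}$ and $W_1(v,u)$ share many of the same underlying edges and are therefore highly dependent, a union bound only needs a marginal tail estimate for each of them, so their joint law never enters the argument.

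First I would record the marginal laws. Since an edge of $G_1$ or $G_2$ is present with probability $ps$, the degrees satisfy $a_u,a_v,b_u,b_v\sim\Binom(n-1,ps)$; for a true pair both incident edges descend from the same parent edge, so $c_{uu},c_{vv}\sim\Binom(n-1,ps^2)$; and for $u\neq v$ the edges $\{u,w\}$ and $\{v,w\}$ come from distinct parent edges and are independent, giving $c_{uv}\sim\Binom(n-1,p^2s^2)$. For the six two-sided windows (the four degrees and $c_{uu},c_{vv}$) I would invoke the multiplicative Chernoff bound $\prob{|X-\E{X}|\ge\epsilon\,\E{X}}\le 2\exp(-\epsilon^2\E{X}/3)$, valid for $\epsilon\le1$. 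The choice of $\epsilon$ in \prettyref{eq:epsilon} is calibrated so that the smallest-mean variable, $c_{uu}$ with $\E{c_{uu}}=(n-1)ps^2$, has exponent $\epsilon^2\E{c_{uu}}/3=4\log n$ and hence two-sided failure at most $2n^{-4}$; since $a_u$ has the larger mean $(n-1)ps$, its exponent is $4\log n/s\ge 4\log n$, so all six events fail with probability $\le 2n^{-4}$. This is exactly where the hypothesis $nps^2\ge 128\log n$ is used: it forces $\epsilon\le 1/3<1$, so the Chernoff regime applies and the two-sided window is nonvacuous.

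For the two one-sided upper bounds I would treat $c_{uv}$ by Bernstein's inequality (\prettyref{thm:bernstein}) applied to $\Binom(n-1,p^2s^2)$ with $\gamma=\tfrac72\log n$ and $K=1$, which is precisely the computation behind $\psi_{\max}$, while the event $W_1(v,u)<\psi_{\max}$ is simply \prettyref{lmm:W1uv} applied to the fake pair $(v,u)$. A union bound over all eight events gives, at face value, a total failure of order $2n^{-7/2}+O(n^{-4})$; however, because $\psi_{\max}$ is defined with slack beyond what the exponent $\tfrac72\log n$ requires (the $\tfrac73\log n$ term and additive $2$ exceed the $\tfrac76\log n$ that the $K\gamma/3$ correction needs), re-running the two Bernstein estimates against this actual threshold drives each of the $c_{uv}$ and $W_1(v,u)$ tails strictly below $\tfrac12 n^{-7/2}$, and the six Chernoff terms are $O(n^{-4})=o(n^{-7/2})$. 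Hence the total stays below $n^{-7/2}$, giving $\prob{R_{uv}}\ge 1-n^{-7/2}$ for all sufficiently large $n$.

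I expect no deep difficulty, since the lemma is at bottom a routine concentration estimate; the only real care is the constant bookkeeping — confirming that the single $\epsilon$ of \prettyref{eq:epsilon} simultaneously controls all six two-sided events at the $n^{-4}$ level (the binding case being the smallest-mean $c_{uu}$), that $nps^2\ge128\log n$ indeed forces $\epsilon\le1/3$, and that the slack built into $\psi_{\max}$ is enough to close the union bound at $n^{-7/2}$. The one conceptual point worth flagging is that the heavy dependence among these eight quantities is harmless precisely because the argument never invokes their joint distribution.
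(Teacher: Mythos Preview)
Your proposal is correct and follows essentially the same route as the paper: identify the marginal binomial laws, apply multiplicative Chernoff (the paper packages this as \prettyref{lmm:bound}) for the six two-sided degree events, Bernstein's inequality with $\gamma=\tfrac72\log n$ for $c_{uv}$, invoke \prettyref{lmm:W1uv} for $W_1(v,u)$, and finish with a union bound. One small bookkeeping caveat: the $+2$ in $\psi_{\max}$ is already consumed inside the proof of \prettyref{lmm:W1uv} (to reinsert the two excluded seeds $Z_v,Z_{\pi^{-1}(u)}$), so there is no remaining slack on the $W_1(v,u)$ tail to push it strictly below $\tfrac12 n^{-7/2}$ as you suggest --- the paper itself simply leaves this final constant loose, and you may do the same.
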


 \subsubsection{Bound on the 2-hop Witnesses}
In the sequel, we condition on the 1-hop neighborhoods of $u$ and $v$ such that event $R_{uv}$ holds, and bound the 2-hop witnesses for both the true pairs and fake pairs. To compute the probability that a seed $(j,\pi(j))$ becomes a 2-hop witness
for pair $(u,v)$, 
we calculate the \emph{joint probability} that
$j$ connects to some 1-hop neighbor of $u$ in $G_1$ and $\pi(j)$ connects to some 1-hop neighbor of $v$ in $G_2$. 
For any correct seed, it is a 2-hop witness for a true pair $(u,u)$ with probability about $c_{uu} ps^2$ and is a 2-hop witness for a fake pair $(u,v)$ with probability about $a_u b_v p^2s^2$. In contrast, for any incorrect seed, it is a 2-hop witness for a true pair $(u,u)$ with probability about $a_u b_u p^2s^2$ and is a 2-hop witness for a fake pair $(u,v)$ with probability about $a_u b_v p^2s^2$. Thus we have
\begin{subequations}
   \begin{empheq}[left={W_2(u,v)\overset{\cdot}{\sim}\empheqlbrace\,}]{align}
        & \Binom\left(n\beta, c_{uu} ps^2\right) +\Binom\left(n(1-\beta),a_u b_u p^2s^2\right) & \text{if } u&=v,    \label{eq:binom2-hop1}\\
    &\Binom\left(n,a_u b_v p^2s^2\right) & \text{if } u&\neq v.\label{eq:binom2-hop2}
  \end{empheq} 
\end{subequations}

To ensure that the numbers of $2$-hop witnesses are separated between true pairs and fake pairs, we need $\expect{W_2(u,u)} \ge \expect{W_2(u,v)}$ for $u \neq v$, which, in view of \prettyref{eq:binom2-hop1} and \prettyref{eq:binom2-hop2}, 
$a_u, b_u, b_v \approx nps$, 
and $c_{uu} \approx nps^2$, amounts to 
\begin{equation}
\begin{aligned}
n\beta(nps^2)ps^2 +n(1-\beta) (nps)^2p^2s^2 -n(nps)^2p^2s^2\geq 0 \Leftrightarrow np^2\leq 1.\label{eq:meanstd2hop}
\end{aligned}
\end{equation}
This shows that the $2$-hop algorithm is only effective when the graphs are sufficiently sparse.
For this reason, we assume $np^2 \le 1/\log n$ so that \prettyref{eq:meanstd2hop} is satisfied.

For the $2$-hop algorithm to be effective, 
we also need to consider the statistical fluctuation of $W_2(u,v)$. For true pair $u=v$, using Bernstein’s inequality, $\Binom(n\beta,c_{uu}ps^2)$ is lower bounded by $n\beta c_{uu}ps^2- O(\sqrt{n\beta c_{uu}ps^2\log n})-O(\log n)$ with high probability. However, the second Binomial distribution in \prettyref{eq:binom2-hop1} is not precise because the events that each incorrect seed becomes a 2-hop witness for a true pair are dependent on other incorrect seeds. Fortunately, similar to the proof of \prettyref{lmm:W1uu}, we can deal with this dependency issue using the concentration inequality for dependent random variables \cite{Janson2004LargeDF}. Thus, we can get the following lower bound on the number of 2-hop witnesses for the true pairs conditional on the 1-hop neighborhoods.

\begin{lemma}\label{lmm:W2uu}
Given any two vertices $u, v\in [n]$ with $u\neq v$, we use $\Quv$ to collect all information of 1-hop neighborhood of $u$ and $v$, i.e.,
\begin{align*}
\Quv=\left\{N^{G_1}(u),N^{G_2}(u),N^{G_1}(v),N^{G_2}(v)\right\}.
\end{align*}

If $n$ is sufficiently large and $nps^2\ge 128\log n$, then 
\begin{align}\label{eq:W2uu}
\prob{{W_2(u,u)}\leq l_{\min}+m_{\min}\mid\Quv}\cdot\mathds{1}(R_{uv})\leq n^{-\frac{7}{2}},
\end{align}
where 
\begin{align}
l_{\min} =&\frac{7}{24}(1-\delta_1) \beta n^2 p^2s^4  - \sqrt{\frac{35}{16} \beta n^2 p^2s^4\log n}-\frac{5}{2}\log n,\label{eq:lmin}\\
m_{\min}  =&n(1-\beta)\left(1-(1-ps)^{a_{u\backslash v}}\right)\left(1-(1-ps)^{b_{u\backslash v}}\right)\nonumber\\&-21n^3p^5s^5- \frac{15}{2}\sqrt{\frac{3}{2}n^3p^4s^4\log n}-\frac{25}{2}\log n, \label{eq:mmin}
\end{align}
with $\delta_1=\frac{6ps}{\beta}$, 
$a_{u\backslash v} = \abs{N^{G_1}(u)\setminus\{v\}}$,
and $b_{u\backslash v}= \abs{N^{G_2}(u)\setminus\{v\}}$.
\end{lemma}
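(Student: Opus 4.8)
The plan is to condition on a realization of the $1$-hop neighborhoods $\Quv$ on which $R_{uv}$ holds (note that $R_{uv}$ is determined by $\Quv$ together with the given map $\pi$), so that $a_u,b_u,c_{uu}$ and the remaining neighborhood statistics are pinned inside the ranges supplied by \prettyref{lmm:Ruv}, and then to lower bound $W_2(u,u)$ by keeping only the seed-contributions whose conditional law I can control. Writing $F=\{i:\pi(i)=i\}$ for the set of correct seeds, I would use the decomposition
\begin{equation*}
W_2(u,u)\ \ge\ \sum_{i\in F'}X_i\ +\ \sum_{i\in G'}Y_i,
\end{equation*}
where $X_i$ indicates that a correct seed is a $2$-hop witness for $(u,u)$, $Y_i$ indicates the same for an incorrect seed, and $F'\subseteq F$, $G'\subseteq [n]\setminus F$ are pruned index sets specified below. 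The two sums will produce $l_{\min}$ and $m_{\min}$ respectively, and a union bound over their failure events gives the stated $n^{-7/2}$.

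For the correct seeds I would credit only the cleanest route: a correct seed $i$ counts if it attaches to some common neighbor $w\in C(u,u)$ through a single parent edge that survives into \emph{both} $G_1$ and $G_2$, which happens with probability $ps^2$ per choice of $w$ and independently across the $c_{uu}$ common neighbors. Thus, conditional on $\Quv$, the event has probability at least $1-(1-ps^2)^{c_{uu}}\ge \tfrac12 c_{uu}ps^2$ (the last step is valid since $c_{uu}ps^2\lesssim np^2s^4\to 0$), and restricting $F'$ to correct seeds lying outside $N^{G_1}(u)\cup N^{G_2}(u)$, so that $d^{G_1}(u,i)=d^{G_2}(u,i)=2$ is even possible, discards at most $a_u+b_u$ seeds; this is the source of the factor $(1-\delta_1)$ with $\delta_1=6ps/\beta$. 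The decisive point is that for distinct correct seeds these attaching edges are vertex-disjoint, so the $X_i$ are \emph{independent} given $\Quv$; feeding the lower bound $c_{uu}>(1-\epsilon)(n-1)ps^2$ from $R_{uv}$ into Bernstein's inequality (\prettyref{thm:bernstein}) then yields $\sum_{i\in F'}X_i>l_{\min}$ with conditional probability at least $1-n^{-5/2}$.

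The delicate part, and the main obstacle, is the incorrect seeds. Given $\Quv$, an incorrect seed $i$ is a $2$-hop witness iff $i$ attaches in $G_1$ to $N^{G_1}(u)$ and $\pi(i)$ attaches in $G_2$ to $N^{G_2}(u)$; since $i\neq\pi(i)$ and the edges incident to $v$ are already revealed by $\Quv$, these two events rest on disjoint, still-random parent edges, so the per-seed conditional probability is exactly $\bigl(1-(1-ps)^{a_{u\backslash v}}\bigr)\bigl(1-(1-ps)^{b_{u\backslash v}}\bigr)$ --- the leading term of $m_{\min}$, explaining the appearance of $a_{u\backslash v}$ and $b_{u\backslash v}$. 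What is genuinely harder is the dependence \emph{across} seeds. By tracking which parent edges two witness events can share, I would show that after conditioning $Y_i$ and $Y_{i'}$ are dependent only when they reuse a revealed neighbor, and that for ``good'' seeds this occurs solely through $i'=\pi(i)$ or $i'=\pi^{-1}(i)$ hitting a common neighbor in $C(u,u)$ --- exactly the degree-two pattern of \prettyref{lmm:W1uu}. The dangerous case is a \emph{third} mechanism: if $i\in N^{G_2}(u)$ or $\pi(i)\in N^{G_1}(u)$, then the parent edge joining $i$ to a revealed neighbor couples $i$ with as many as $a_u\asymp nps$ other seeds, so a naive application of \prettyref{thm:drv} would give a useless bound.

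The structural fix is therefore to take $G'$ to be the incorrect seeds with $\{i,\pi(i)\}\cap\bigl(N^{G_1}(u)\cup N^{G_2}(u)\bigr)=\emptyset$: this removes every high-degree seed, there are only $O(nps)$ of them, and since each contributes witness-probability $O(n^2p^4s^4)$ the deletion costs at most the $21n^3p^5s^5$ correction in $m_{\min}$. On the pruned set $G'$ the dependency graph of $\{Y_i\}$ has maximum degree $2$, so I would finish by invoking the concentration inequality for sums of dependent variables (\prettyref{thm:drv}) with $\Delta(\Gamma)=2$, $K=1$, and $\gamma=\Theta(\log n)$; this produces the Gaussian-type deviation $\tfrac{15}{2}\sqrt{\tfrac32 n^3p^4s^4\log n}$ together with the additive $\tfrac{25}{2}\log n$, yielding $\sum_{i\in G'}Y_i>m_{\min}$ and, after the union bound, the lemma.
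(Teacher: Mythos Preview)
Your proposal is essentially correct and follows the same architecture as the paper: condition on $\Quv$ on the event $R_{uv}$, split into correct and incorrect seeds, lower-bound the correct-seed contribution via the ``attaches to a common neighbor in $C(u,u)$'' indicator and Bernstein, and handle the incorrect seeds with the degree-$2$ dependency graph and \prettyref{thm:drv}. Two minor points of divergence are worth flagging. First, because $\Quv$ also reveals the edges incident to $v$, the paper additionally excludes the seed set $J_v=N^{G_1}(v)\cup\{v\}\cup\pi^{-1}(N^{G_2}(v))\cup\pi^{-1}(v)$ (and uses $C(u,u)\setminus\{v\}$ in the correct-seed indicator); you implicitly acknowledge this when you write $a_{u\backslash v},b_{u\backslash v}$, but your pruning sets $F',G'$ as stated do not remove these seeds, so the claim that the per-seed probability is ``exactly'' the product needs this extra excision. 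Second, for the incorrect seeds the paper excludes only $\tilde J_u=N^{G_2}(u)\cap\pi^{-1}(N^{G_1}(u))$ (to make $A_2(u,j)$ and $B_2(u,\pi(j))$ independent for a \emph{single} $j$), whereas your broader exclusion $\{j:\{j,\pi(j)\}\cap(N^{G_1}(u)\cup N^{G_2}(u))\neq\emptyset\}$ is what actually guarantees the cross-seed dependency graph has maximum degree $2$; both exclusions are $O(nps)$, so either route yields the same $21n^3p^5s^5$ correction and the lemma goes through.
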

\begin{remark}
Note that $l_{\min}$ is contributed by the correct seeds and
$m_{\min}$ is contributed by the incorrect seeds. Specifically, conditional on the 1-hop neighbors, a correct seed becomes a 2-hop witness for the true pair $(u, u)$ with probability about $c_{uu} ps^2 \approx
np^2s^4$. Multiplying by $n\beta$ gives an expression close to the first term of $l_{\min}$. Similarly, an incorrect seed becomes a 2-hop witness for the true pair $(u,u)$ with probability about $\left(1-(1-ps)^{a_{u\backslash v}}\right)\left(1-(1-ps)^{b_{u\backslash v}}\right)$. Multiplying by $n(1-\beta)$ gives the first term of $m_{\min}$. In summary, the first term in $l_{\min}$ and $m_{\min}$ is a lower bound of the expectation, and the rest of the terms are due to the tail bounds.
\end{remark}
Due to the conditioning of 1-hop neighborhoods, we exclude seeds that are 1-hop neighbors of $u$ when
bounding $W_2(u,u)$, giving rise to the additional $\delta_1$ and $21n^3p^5s^5$ terms in \prettyref{lmm:W2uu}. 
Please refer to Appendix \ref{sec:proofW2uu} for the proof.

\medskip
For the fake pair $u\neq v$, 
we have the following upper bound on the number of 2-hop witnesses for the fake pairs conditional on the 1-hop neighborhoods.

\begin{lemma}\label{lmm:W2uv}
For any two vertices $u,v\in [n]$ with $u\neq v$, if   $nps^2\ge 128\log n$, then for all sufficiently large $n$,
\begin{equation}\label{eq:W2uv}
\prob{{W_2(u,v)}\geq x_{\max}+y_{\max}+2z_{\max}+\psi_{\max}+28\log n\mid\Quv}\cdot\Indicator{R_{uv}}\\
\leq n^{-\frac{7}{2}}.
\end{equation}
where 
\begin{align}
x_{\max} & =2 n\beta\left( \psi_{\max}ps+\frac{9}{4}n^2p^4s^4\right), \label{eq:xmax}\\
y_{\max} &= n(1-\beta)\left(1-(1-ps)^{a_{u\backslash v}}\right)\left(1-(1-ps)^{b_{v\backslash u}}\right)+n^2p^3s^3
+\frac{5}{2}\sqrt{15n^3p^4s^4\log n},\label{eq:ymax} \\
z_{\max} &=\frac{9}{2}n^2p^3s^3,\nonumber\\
\psi_{\max}&=np^2s^2+\sqrt{7np^2s^2\log n}+\frac{7}{3}\log n+2.\nonumber
\end{align}
\end{lemma}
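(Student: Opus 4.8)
The plan is to bound $W_2(u,v)$ for a fake pair $u \neq v$ by decomposing the contribution of the seeds into several groups and bounding each separately, in direct parallel with the structure of the target expression $x_{\max}+y_{\max}+2z_{\max}+\psi_{\max}+28\log n$. Conditioning throughout on $\Quv$ (and using $\Indicator{R_{uv}}$ so that all the size bounds from \prettyref{lmm:Ruv} are in force), I would first split the seeds into the \emph{correct} seeds (fixed points of $\pi$) and the \emph{incorrect} seeds, since these behave differently: a correct seed $(i,i)$ contributes through the event that $i$ attaches to $N^{G_1}(u)$ \emph{and} $i$ attaches to $N^{G_2}(v)$, whereas for an incorrect seed $(j,\pi(j))$ the two endpoints are distinct and the two attachment events are governed by (conditionally) independent edges. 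The correct-seed contribution should give rise to the $x_{\max}$ term: each correct seed is a $2$-hop witness for the fake pair with probability about $a_u b_v p^2 s^2 \approx n^2p^4s^4$ (roughly $\psi_{\max} ps$ after using $c_{uv} < \psi_{\max}$ for the shared-neighbor correction), and multiplying the per-seed bound by $n\beta$ and accounting for the tail fluctuation yields the factor $2n\beta(\psi_{\max}ps + \frac94 n^2p^4s^4)$.

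Second, for the incorrect seeds I would compute the conditional probability that $(j,\pi(j))$ is a $2$-hop witness for $(u,v)$, which factors (given $\Quv$) as $\bigl(1-(1-ps)^{a_{u\backslash v}}\bigr)\bigl(1-(1-ps)^{b_{v\backslash u}}\bigr)$ — the chance that $j$ reaches $N^{G_1}(u)$ times the chance that $\pi(j)$ reaches $N^{G_2}(v)$; this is precisely the first term of $y_{\max}$. The remaining terms in $y_{\max}$ (namely $n^2p^3s^3$ and $\frac52\sqrt{15 n^3 p^4 s^4 \log n}$) come from a concentration bound on the sum over incorrect seeds of these indicator random variables. The crux is that these indicators are \emph{not} independent: two incorrect seeds $j, j'$ can be correlated when $\pi(j)$ or $\pi(j')$ lies in a shared neighborhood, or when the images under $\pi$ collide with the preimages in a way that reuses parent-graph edges. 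I would isolate the ``bad'' seeds whose dependency cannot be ignored — those for which $j$, $\pi(j)$, $\pi^{-1}(j)$ interact with $u$, $v$, or their neighborhoods — bound their total number crudely (this is where the $2z_{\max} = 9n^2p^3s^3$ correction and the $\psi_{\max}$ and $28\log n$ slack terms enter), and apply the dependent-variable concentration inequality \prettyref{thm:drv} of \cite{Janson2004LargeDF} to the well-behaved majority, whose dependency graph has small maximum degree after the conditioning.

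Concretely, I would structure the argument as follows. After conditioning on $\Quv$ and restricting to $R_{uv}$, write $W_2(u,v) = \sum_{\text{correct }i} \xi_i + \sum_{\text{incorrect }j} \eta_j$. Bound the first sum by a Bernstein/\prettyref{thm:drv} argument to obtain $x_{\max}$, using that conditional on the $1$-hop neighborhoods the events $\{i \text{ attaches to } N^{G_1}(u)\}$ and $\{i \text{ attaches to } N^{G_2}(v)\}$ overlap only through the at-most-$\psi_{\max}$ common neighbors $C(u,v)$. For the second sum, further partition the incorrect seeds: those whose witness event depends only on edges disjoint from the conditioning and from each other (to which I apply \prettyref{thm:drv} with small $\Delta(\Gamma)$, yielding $y_{\max}$), and a small residual set of ``entangled'' seeds whose count I bound deterministically on $R_{uv}$ (yielding $2z_{\max}$, $\psi_{\max}$, and $28\log n$). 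Finally, a union bound over the (constantly many) tail events gives the claimed $n^{-7/2}$ probability.

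The main obstacle will be the dependency bookkeeping for the incorrect seeds: unlike the $1$-hop case in \prettyref{lmm:W1uu}, where each $Y_i$ depended on at most two others and $\Delta(\Gamma)=2$, here a single seed's $2$-hop witness event couples to many others through the shared $1$-hop neighborhoods of $u$ and $v$, so a naive application of \prettyref{thm:drv} gives a dependency degree too large to be useful. The key technical device — which I would expect to be delicate — is that \emph{conditioning on $\Quv$ first} decouples the neighborhood-size randomness from the seed-attachment randomness, reducing the residual dependency among the $\eta_j$ to collisions of the form $\pi(j) = \pi^{-1}(j')$ (or image/preimage coincidences with $u,v$), which are sparse; carefully identifying exactly which seeds must be quarantined, and showing their number is at most $O(n^2 p^3 s^3 + \log n)$ on the event $R_{uv}$, is the heart of the proof and the source of the somewhat unusual $z_{\max}$, $\psi_{\max}$, and $28\log n$ terms in the statement.
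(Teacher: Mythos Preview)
Your high-level decomposition into correct seeds ($x_{\max}$) and incorrect seeds ($y_{\max}$), together with the use of \prettyref{thm:drv} after conditioning on $\Quv$, matches the paper. You are also right that after conditioning the residual dependency graph among the incorrect-seed indicators has $\Delta(\Gamma)=2$ (collisions $j'=\pi(j)$ or $j'=\pi^{-1}(j)$), exactly as in \prettyref{lmm:W1uu}.

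However, you misdiagnose the origin of the $2z_{\max}+\psi_{\max}$ terms, and the plan you describe for them would not go through. These terms do \emph{not} arise from quarantining seeds with unmanageable inter-seed dependency. They arise because conditioning on $\Quv$ reveals $N^{G_1}(v)$ and $N^{G_2}(u)$, and if (say) $u\sim v$ in $G_1$ then every $j\in N^{G_1}(v)$ is \emph{automatically} a $2$-hop neighbor of $u$; the event $\{A_2(u,j)=1\}$ is no longer governed by the ``fresh'' edges to $N^{G_1}(u)\setminus\{v\}$. The paper therefore partitions the seeds by whether $j\in N^{G_1}(v)$ and/or $\pi(j)\in N^{G_2}(u)$: seeds with both memberships are $1$-hop witnesses for $(v,u)$ and number at most $\psi_{\max}$ on $R_{uv}$; seeds with exactly one membership (two symmetric types) have an \emph{elevated} per-seed witness probability of order $np^2s^2$ rather than $n^2p^4s^4$, and since there are at most $\frac{3}{2}nps$ of them a Binomial tail bound gives $z_{\max}=\frac{9}{2}n^2p^3s^3$ for each type. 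Only the remaining ``Type~5'' seeds have the clean product probability $(1-(1-ps)^{a_{u\backslash v}})(1-(1-ps)^{b_{v\backslash u}})$ you wrote down.

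Your plan to ``bound their total number crudely'' by $O(n^2p^3s^3+\log n)$ would fail: the number of seeds with $j\in N^{G_1}(v)$ is of order $nps$, which under $np^2\le 1/\log n$ is strictly larger than $n^2p^3s^3=nps\cdot np^2s^2$, and inserting a bound of order $nps$ into \prettyref{eq:lowerbound-upperbound2} would break \prettyref{lmm:wmingwmax}. The $z_{\max}$ term is not a count of bad seeds; it is a concentration bound on the number of \emph{witnesses} among those $\sim nps$ seeds, using that only the \emph{other} side (probability $\sim np^2s^2$) remains random.
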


\begin{remark}
Note that if $u$ and $v$ are connected in $G_1$, the conditioning on $Q_{uv}$ changes the probability that the seed $(j,\pi(j))$ with $j\in N^{G_1}(v)$ becomes a 2-hop witness for $(u,v)$. Thus, we have to divide the seeds into several types depending on whether $j\in N^{G_1}(v)$ or $\pi(j)\in N^{G_2}(u)$, and consider their contribution to the number of 2-hop witnesses separately:
\begin{itemize}
\item [1)] $x_{\max}+y_{\max}$ is the major term in \prettyref{eq:W2uv} and is contributed by the seeds such that $j\notin N^{G_1}(v)\cup\pi^{-1}\left(N^{G_2}(u)\right)$. In the analysis, we further divide such seeds into two categories, where $x_{\max}$ is contributed by the correct seeds,  and $y_{\max}$ is contributed by the incorrect seeds. Specifically, conditional on the 1-hop neighbors, a correct seed becomes a 2-hop witness for the fake pair $(u, v)$ either when the two vertices of the seed connect to different 1-hop neighbors of $u$ and $v$, respectively, or when they connect to a common 1-hop neighbor of  $u$ and $v$. Thus, the conditional probability of such event is about $c_{uv}ps+a_u b_v ps^2$.  According to \prettyref{lmm:Ruv}, $\psi_{\max}$ is an upper bound estimate of $c_{uv}$, and both $a_u$ and $b_v$ are approximately $nps$. Therefore, the above conditional probability can be approximately estimated as $\psi_{\max}ps+n^2p^4s^4$. Multiplying by $n\beta$ gives an expression close to $x_{\max}$. Similarly, an incorrect seed becomes a 2-hop witness for the fake pair $(u,v)$ with probability about $\left(1-(1-ps)^{a_{u\backslash v}}\right)\left(1-(1-ps)^{b_{v\backslash u}}\right)$. Multiplying by $n(1-\beta)$ gives the first term of $y_{\max}$. In summary, the first term in $x_{\max}$ and $y_{\max}$ is an upper bound of the expectation, and the rest of the terms are due to the tail bounds.
\item [2)] One multiple of $z_{\max}$ in (\ref{eq:W2uv}) is contributed by the seeds such that $j\in N^{G_1}(v)\setminus\pi^{-1}\left( N^{G_2}(u)\right)$. To see this, note that there are roughly $nps$ such seeds $(j,\pi(j))$. If $u$ and $v$ are connected in $G_1$, then $j$ must be a 2-hop neighbor of $u$, i.e., $A_2(u,j)=1$. On the other hand, the probability that $\pi(j)$ becomes a 2-hop neighbor of $v$ is approximately $np^2s^2$. Thus, the expected number of 2-hop witnesses contributed by this type of seeds is approximately $n^2p^3s^3$. The other multiple of $z_{\max}$ in (\ref{eq:W2uv}) is for the opposite case: it is contributed by the seeds such that $j\in \pi^{-1}\left(N^{G_2}(u)\right)\setminus N^{G_1}(v)$. 
\item [3)] The term $\psi_{\max}$ in (\ref{eq:W2uv}) is contributed by the seeds such that $j\in N^{G_1}(v)\cap \pi^{-1}\left( N^{G_2}(u)\right)$. In this case, $(j,\pi(j))$ becomes a 1-hop witness for $(v,u)$. Since $W_1(v,u)<\psi_{\max}$ according to \prettyref{lmm:Ruv},  there are at most $\psi_{\max}$
such seeds. 
\item [4)] The term $28\log n$ in (\ref{eq:W2uv}) comes from the sub-exponential tail bounds when applying concentration inequalities. 
\end{itemize}

Please refer to Appendix \ref{sec:proofW2uv} for the proof.
\end{remark}

\subsubsection{Derivation of a Sub-optimal Version of Condition \prettyref{eq:conditionofbeta2}}\label{rmk:2_hop_cond_not_tight}
By combining \prettyref{lmm:W2uu} and \prettyref{lmm:W2uv}, we are ready to derive a sufficient (but not tight) condition for the success of the 2-hop algorithm. First, analogous to the proof of \prettyref{thm:thm1hop}, for the 2-hop algorithm to succeed, it suffices that
\begin{align}
\min_{u} W_2 (u, u) > \max_{u\neq v} W_2 (u,v).
\label{eq:2_hop_cond_weak}
\end{align}
Then by combining \prettyref{lmm:W2uu} and \prettyref{lmm:W2uv}, \prettyref{eq:2_hop_cond_weak}
is guaranteed when 
\begin{align}\label{eq:lmxyz}
l_{\min}+m_{\min}\geq x_{\max}+y_{\max}+2z_{\max}+\psi_{\max}+28\log n. 
\end{align}
Finally, to ensure \prettyref{eq:lmxyz} is satisfied when $np^2 \le \frac{1}{\log n}$ and $nps^2 \ge 128\log n$, we 
arrive at the following sufficient condition:
\begin{align}\label{eq:oldcriteria}
    \beta\gtrsim \max\left\{ \frac{\log n}{n^2p^2s^4}, \, \sqrt{\frac{\log n}{ns^4}},\, \sqrt{ \frac{np^3\log n}{s}}\right\}.
\end{align}
Note that condition \prettyref{eq:oldcriteria} is similar to condition \prettyref{eq:conditionofbeta2} except for the third term. It is instructive to see  how \prettyref{eq:oldcriteria} implies \prettyref{eq:lmxyz}:
\begin{itemize}
    \item When $\beta \gtrsim \frac{\log n}{n^2p^2s^4}$, $\beta \gtrsim \sqrt{\frac{\log n}{ns^4}}$, and $np^2 \le \frac{1}{\log n}$, we have from \prettyref{eq:lmin} that $l_{\min} \geq c\cdot\beta n^2  p^2 s^4$ for some constant $c$. In other words, the true pair should have sufficiently many $2$-hop witnesses from the correct seeds. 
    \item When $np^2 \le \frac{1}{\log n}$ and $nps^2 \ge 128 \log n$, we have from \prettyref{eq:xmax} that $x_{\max} \lesssim \beta n ps^2 \log n \le \frac{c}{3}\beta n^2  p^2 s^4 $, ensuring that the fake pairs have fewer $2$-hop witnesses from the correct seeds than the true pairs.
    \item For the $2$-hop witnesses from the incorrect seeds, we have from \prettyref{eq:mmin} and \prettyref{eq:ymax} that
    \begin{align*}
        m_{\min} & \approx n (1-\beta) a_u b_u p^2 s^2 - \Delta \\
        y_{\max} & \approx n (1-\beta) a_u b_v p^2s^2 + \Delta, 
    \end{align*}
    where $\Delta = O(\sqrt{n^3 p^4 s^4 \log n}) + O(\log n)$ captures the statistical deviation.
    \begin{itemize}
        \item When $\beta \gtrsim \frac{\log n}{n^2p^2s^4}$ and $\beta \gtrsim \sqrt{\frac{\log n}{ns^4}}$, we have
        $\Delta \le \frac{c}{3} \beta n^2p^2s^4$. 
        \item When $\beta \gtrsim \sqrt{ \frac{np^3\log n}{s}}$, in view of $a_u \lesssim nps$ and $b_v-b_u \lesssim \sqrt{nps\log n}$ (the latter one is due to the fluctuation of the $1$-hop neighborhood sizes), we have that         \begin{equation}\label{eq:fluctuation-1-hop}
            n (1-\beta) a_u b_v p^2s^2 -n(1-\beta) a_u b_u p^2 s^2 
         \lesssim n (1-\beta) nps \sqrt{nps\log n} p^2 s^2 \le \frac{c}{3}\beta n^2p^2s^4.     \end{equation}
    \end{itemize}
    The above two claims together ensure that $y_{\max}-m_{\min}$, i.e., the difference between the true pairs and fake pairs of the $2$-hop witnesses from the incorrect seeds, is dominated by $\frac{2c}{3}\beta n^2 p^4 s^4.$
    \item Finally, when $np^2 \le \frac{1}{\log n}$, 
    $2z_{\max} + \psi_{\max}+28\log n \lesssim \Delta$ and hence is negligible.
\end{itemize}
In sum, if condition \prettyref{eq:oldcriteria} holds, then with high probability \prettyref{eq:2_hop_cond_weak} is satisfied and thus the $2$-hop algorithm exactly recovers the
true vertex mapping $\pi^*$.
 
\subsubsection{Derivation of  the Tight Condition \prettyref{eq:conditionofbeta2}}\label{sec:derivation-tight-condition}
Unfortunately, condition \prettyref{eq:oldcriteria} does not completely coincide with the desired  condition \prettyref{eq:conditionofbeta2}. This is because the criteria \prettyref{eq:2_hop_cond_weak} that we used for GMWM to succeed is too strict. 
 Indeed, the GMWM algorithm may succeed even when \prettyref{eq:2_hop_cond_weak} does not hold. For example, consider the case in \prettyref{fig:Newcriteria} when $a_u$ and $b_u$ are both small, while $b_v$ is large . Then, $W_2(u,v)$ may be larger than $W_2(u,u)$ and hence  \prettyref{eq:conditionofbeta2} is not satisfied. However, since $N^{G_1}(v)$ and $N^{G_2}(v)$ are expected to overlap significantly, when $b_v$ is large, $a_v$ is also likely to be large. Hence $W_2(v,v)$ is likely to be larger than $W_2(u,v)$. Thus, GMWM will still select the true pair $(v,v)$ and eliminate the fake pair $(u,v)$. From the above example, we can see that, for the 2-hop algorithm to succeed, it is sufficient to satisfy the following new criteria: 
 \begin{align}
 W_2(u,v) < W_2 (u, u) \text{ or } W_2(u,v) < W_2(v,v), \quad \forall u \neq v. \label{eq:2_hop_cond_strong}
\end{align}
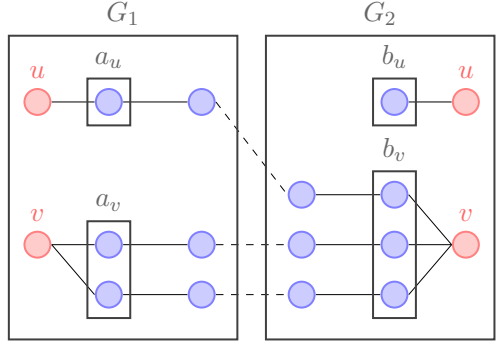
\begin{figure}[h]
\centering
\resizebox{0.4\textwidth}{!}{
\begin{tikzpicture} [
    node distance = 1.3cm, on grid, > = stealth, bend angle = 45, auto,
    vertex/.style = {circle, draw = red!50, fill = red!20, thick, minimum size = 3mm},
    seed/.style = {circle, draw = blue!50, fill = blue!20, thick, minimum size = 3mm},
     graph/.style = {rectangle, draw = black!75,thick}
    ]
\node (u) at (-3,1) [vertex,label={[red!60]above:$u$}]{};
\node (piu) at (3,1) [vertex,label={[red!60]above:$u$}]{};
\node (v) at (-3,-1) [vertex,label={[red!60]above:$v$}]{};
\node (piv) at (3,-1) [vertex,label={[red!60]above:$v$}]{};
\node (n2) at (-2,1) [seed]{};
\node (n3) at (-2,-1.7) [seed]{};
\node (n4) at (-2,-1) [seed]{};
\node (n1) at (2,1) [seed]{};
\node (n2) at (2,-0.3) [seed]{};
\node (n3) at (2,-1) [seed]{};
\node (n4) at (2,-1.7) [seed]{};
\node (s1) at (-0.7,1) [seed]{};
\node (s1) at (-0.7,-1.7) [seed]{};
\node (s1) at (-0.7,-1) [seed]{};
\node (s1) at (0.7,-0.3) [seed]{};
\node (s1) at (0.7,-1) [seed]{};
\node (s1) at (0.7,-1.7) [seed]{};
\draw [dashed] (-0.5, 1) -- (0.5, -0.3);
\draw [dashed] (-0.5, -1) -- (0.5, -1);
\draw [dashed] (-0.5, -1.7) -- (0.5, -1.7);
\draw  (-1.8, 1) -- (-0.9, 1);
\draw  (-1.8, -1) -- (-0.9, -1);
\draw  (-1.8, -1.7) -- (-0.9, -1.7);
\draw  (1.8, -0.3) -- (0.9, -0.3);
\draw  (1.8, -1) -- (0.9, -1);
\draw  (1.8, -1.7) -- (0.9, -1.7);
\draw  (-2.8, 1) -- (-2.2, 1);
\draw  (-2.8, -1) -- (-2.2, -1.7);
\draw  (-2.8, -1) -- (-2.2, -1);
\draw  (2.8, 1) -- (2.2, 1);
\draw  (2.8, -1) -- (2.2, -0.3);
\draw  (2.8, -1) -- (2.2, -1);
\draw  (2.8, -1) -- (2.2, -1.7);
\node (au) at (-2,1) [graph,text height = 0.4cm, minimum width = 0.6cm,label={[black!60]above:$a_u$}]{};
\node (av) at (-2,-1.35) [graph,text height = 1.1cm, minimum width = 0.6cm,label={[black!60]above:$a_v$}]{};
\node (bu) at (2,1) [graph,text height = 0.4cm, minimum width = 0.6cm,label={[black!60]above:$b_u$}]{};
\node (bv) at (2,-1) [graph,text height = 1.8cm, minimum width = 0.6cm,label={[black!60]above:$b_v$}]{};
\node (G1) at (-1.8,-0.2) [graph,text height = 4cm, minimum width = 3.2cm,label={[black!60]above:$G_1$}]{};
\node (G2) at (1.8,-0.2) [graph,text height = 4cm, minimum width = 3.2cm,label={[black!60]above:$G_2$}]{};
\end{tikzpicture}}
\caption{The 2-hop algorithm with GMWM still selects the true pair $(v,v)$ and eliminate the fake pair $(u,v)$ when $W_2(u,v)>W_2(u,u)$ but $W_2(v,v)>W_2(u,v)$.}
\label{fig:Newcriteria}
\end{figure}
Next, we show that under condition \prettyref{eq:conditionofbeta2}, 
with high probability the new criteria \prettyref{eq:2_hop_cond_strong}
is satisfied and hence the 2-hop algorithm succeeds.
Since $N^{G_1}(u)$ and $N^{G_2}(u)$ are both generated by sampling with probability $s$ from $N^{G_0}(u)$ in the parent graph $G_0$, we have $a_u,b_u\sim\Binom(d_u,s)$. Similarly, $a_v,b_v\sim\Binom(d_v,s)$. Therefore, if $d_u\leq d_v$, 
we expect  $a_u-a_v$ not to be too large. 
If instead $d_u> d_v$,
we expect $b_v-b_u$ not to be too large. More precisely, we have the following lemma, with
the proof deferred to Appendix \ref{sec:proofTuv}.

\begin{lemma}\label{lmm:Tuv}
Given any $u ,v\in [n]$, let $T_{uv}$ denote the event:
\begin{align}
T_{uv}= &\left\{a_u-a_v\leq\tau\right\}\cup\left\{ b_v-b_u\leq \tau\right\},\label{eq:tuv}
\end{align}
where 
\begin{align}
\tau \triangleq 2\sqrt{10nps(1-s)\log n}+5\log n. \label{eq:def_tau}
\end{align} 
If $n$ is sufficiently large 
and $nps^2\ge 128\log n$, then 
\begin{align*}
\mathbb{P}(T_{uv})\geq1-n^{-\frac{7}{2}}.
\end{align*}
\end{lemma}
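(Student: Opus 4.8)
The plan is to condition on the parent graph $G_0$, so that the parent degrees $d_u,d_v$ and the parent neighbor sets $N^{G_0}(u),N^{G_0}(v)$ become deterministic, and then to split into the two cases $d_u\le d_v$ and $d_u>d_v$. In the first case I would bound the upper tail of $a_u-a_v$, and in the second (symmetric) case the upper tail of $b_v-b_u$. Since $T_{uv}^{\mathrm c}=\{a_u-a_v>\tau\}\cap\{b_v-b_u>\tau\}$, controlling just one of these two differences in each case already bounds $\prob{T_{uv}^{\mathrm c}\mid G_0}$, and integrating over $G_0$ then yields the claim.

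The key computation is the conditional law of $a_u-a_v$ given $G_0$. Writing $N_u=N^{G_0}(u)\setminus\{v\}$ and $N_v=N^{G_0}(v)\setminus\{u\}$, the shared potential edge $\{u,v\}$ contributes equally to $a_u$ and $a_v$ and hence cancels, so $a_u-a_v=\sum_{w\in N_u}A_1(u,w)-\sum_{w\in N_v}A_1(v,w)$, a sum of independent edge-sampling indicators, each $\Bern(s)$. A common parent-neighbor $w\in N_u\cap N_v$ contributes a mean-zero term $A_1(u,w)-A_1(v,w)$ of variance $2s(1-s)$, while an exclusive neighbor contributes mean $\pm s$ and variance $s(1-s)$. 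Summing gives conditional mean $\Expect[a_u-a_v\mid G_0]=s(d_u-d_v)$ and conditional variance $\var(a_u-a_v\mid G_0)=s(1-s)\,(\abs{N_u}+\abs{N_v})\le s(1-s)(d_u+d_v)$. This is precisely where the factor $(1-s)$ in $\tau$ enters: conditioning on the parent degrees reduces the per-edge variance from the unconditional $ps(1-ps)$ to $s(1-s)$.

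With these two quantities in hand, in the case $d_u\le d_v$ the conditional mean $s(d_u-d_v)$ is nonpositive, so $\{a_u-a_v>\tau\}\subseteq\{Z>\tau\}$ for the centered variable $Z=(a_u-a_v)-s(d_u-d_v)$, whose increments are bounded by $1$ and whose variance is at most $s(1-s)(d_u+d_v)$. To turn this into a quantity of order $nps(1-s)$ I would intersect with the high-probability event $\calD=\{d_u,d_v\le(1+\epsilon)(n-1)p\}$, where $\epsilon$ is as in \prettyref{eq:epsilon}; since $nps^2\ge 128\log n$ forces $(n-1)p\ge 128\log n$, a Chernoff bound on $d_u\sim\Binom(n-1,p)$ gives $\prob{\calD^{\mathrm c}}\le 2n^{-4}$. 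On $\calD$ the conditional variance is at most $\tfrac{8}{3}nps(1-s)$, and applying Bernstein's inequality (\prettyref{thm:bernstein}) with $t=\tau=2\sqrt{10nps(1-s)\log n}+5\log n$ yields $\prob{Z>\tau\mid G_0}\le \tfrac12 n^{-7/2}$ (in fact $n^{-7.5}$ in both the variance-dominated and the sub-exponential regimes, the additive $5\log n$ in $\tau$ absorbing the $Kt/3$ term). The case $d_u>d_v$ is identical after exchanging $(a,G_1)\leftrightarrow(b,G_2)$ and $u\leftrightarrow v$, because $\Expect[b_v-b_u\mid G_0]=s(d_v-d_u)<0$ and $\var(b_v-b_u\mid G_0)\le s(1-s)(d_u+d_v)$ by the same cancellation. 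Integrating $\prob{T_{uv}^{\mathrm c}\mid G_0}$ over $G_0$ and adding $\prob{\calD^{\mathrm c}}$ then gives $\prob{T_{uv}^{\mathrm c}}\le \tfrac12 n^{-7/2}+2n^{-4}\le n^{-7/2}$ for large $n$.

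The main obstacle is the variance computation: a naive unconditional treatment of $a_u-a_v$ (using only the marginals $a_u,a_v\sim\Binom(n-1,ps)$) would produce a per-edge variance $ps(1-ps)$ and hence a threshold of order $\sqrt{nps\log n}$, missing the sharp factor $(1-s)$ and becoming vacuous as $s\to 1$. Recovering the correct $s(1-s)$ scaling is exactly what forces the conditioning on the parent degrees, and one must then pair the two cases of the degree comparison with the correct difference ($a_u-a_v$ versus $b_v-b_u$) so that the conditional mean always carries the favorable nonpositive sign. The remaining steps—bounding the degrees to tame the conditional variance and verifying that the Bernstein constants close comfortably below $n^{-7/2}$—are then routine.
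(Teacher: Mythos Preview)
Your proposal is correct and follows essentially the same strategy as the paper: condition on the parent neighborhoods, restrict to the high-probability event that $d_u,d_v\lesssim np$, split into the cases $d_u\le d_v$ and $d_u>d_v$, and in each case use Bernstein's inequality to control the favorable difference ($a_u-a_v$ or $b_v-b_u$) via the per-edge variance $s(1-s)$ coming from the sub-sampling. The only cosmetic difference is that the paper bounds $a_u$ and $a_v$ separately (two Bernstein applications with $\gamma=\tfrac{15}{4}\log n$) and then subtracts, whereas you apply Bernstein once directly to the centered difference; both routes produce exactly the same threshold $\tau=2\sqrt{10nps(1-s)\log n}+5\log n$.
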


Next we show the new criteria \prettyref{eq:2_hop_cond_strong}
is satisfied by separately considering two cases: $b_v-b_u\leq \tau$ and $a_u-a_v\leq \tau$. We first consider the case $b_v-b_u\leq \tau$. When $\beta \gtrsim \sqrt{ \frac{np^3(1-s)\log n}{s}}$, $\beta \gtrsim \sqrt{\frac{\log n}{ns^4}}$ and $np^2\leq \frac{1}{\log n}$, in view of $a_u \lesssim nps$, we can get a tighter upper bound to the left hand side of \prettyref{eq:fluctuation-1-hop}: 
\begin{align}\label{eq:fluctation-1-hop-new}
 &n (1-\beta) a_u b_v p^2s^2 -n(1-\beta) a_u b_u p^2 s^2\nonumber \\
         \lesssim &n (1-\beta) nps \left(\sqrt{nps(1-s)\log n}+\log n \right) p^2 s^2 \nonumber\\
         \overset{(a)}{\le} & n^2p^3s^3\sqrt{nps(1-s)\log n}+n^2p^2s^3\sqrt{\frac{\log n}{n}} \overset{(b)}{\le} \frac{c}{3}\beta n^2p^2s^4,
\end{align}
where the inequality $(a)$ holds due to $p\le \sqrt{1/(n\log n)}$; the inequality $(b)$ is guaranteed by the last two terms in  condition \prettyref{eq:conditionofbeta2}.

To be more precise, the following lemma combined with
\prettyref{lmm:W2uu} and \prettyref{lmm:W2uv} ensures that 
$W_2(u,u) > W_2(u,v)$ with high probability. 
\begin{lemma}\label{lmm:wmingwmax}
Given any two vertices $u,v\in [n]$ with $u\neq v$, if $R_{uv}$ occurs,  $b_v- b_u\leq \tau$, $nps^2\geq 128\log n$, $\nps$, and condition \prettyref{eq:conditionofbeta2} holds, 
then for sufficiently large $n$,
\begin{align*}
l_{\min}+m_{\min}\geq x_{\max}+y_{\max}+2z_{\max}+\psi_{\max}+28\log n.
\end{align*}
where  $l_{\min}$, $m_{\min}$, $x_{\max}$, $y_{\max}$, $z_{\max}$ and $\psi_{\max}$ are given in \prettyref{lmm:W2uu} and \prettyref{lmm:W2uv}.
\end{lemma}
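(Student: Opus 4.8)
The plan is to rearrange the target inequality into a single comparison between the positive ``signal'' contributed by the correct seeds and a budget of error terms, and then to verify that condition \prettyref{eq:conditionofbeta2} forces every error term to be a small constant fraction of the signal. The dominant positive contribution is the leading term of $l_{\min}$, namely $\frac{7}{24}(1-\delta_1)\beta n^2 p^2 s^4$; since $\delta_1=6ps/\beta$ and $\beta$ dominates $ps$ under \prettyref{eq:conditionofbeta2}, this is $\gtrsim \beta n^2 p^2 s^4$. I would first isolate this term and treat everything else as quantities to be absorbed into it.

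The crux is the pair of incorrect-seed main terms. Both $m_{\min}$ and $y_{\max}$ carry a term $n(1-\beta)\bigl(1-(1-ps)^{a_{u\backslash v}}\bigr)\bigl(1-(1-ps)^{\,\cdot\,}\bigr)$, identical except that $m_{\min}$ uses the exponent $b_{u\backslash v}$ while $y_{\max}$ uses $b_{v\backslash u}$. Individually these are of order $n(1-\beta)n^2p^4s^4$, which for small $\beta$ dwarfs the signal, so the key is that they nearly cancel and only their difference survives. I would write this difference as $n(1-\beta)\bigl(1-(1-ps)^{a_{u\backslash v}}\bigr)\bigl[(1-ps)^{b_{u\backslash v}}-(1-ps)^{b_{v\backslash u}}\bigr]$ and bound it using the elementary inequalities $1-(1-ps)^{a_{u\backslash v}}\le a_{u\backslash v}\,ps\lesssim np^2s^2$ together with $(1-ps)^{b_{u\backslash v}}-(1-ps)^{b_{v\backslash u}}\le (b_{v\backslash u}-b_{u\backslash v})\,ps\le(\tau+1)\,ps$, where the last step invokes the hypothesis $b_v-b_u\le\tau$. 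This reproduces exactly the fluctuation estimate \prettyref{eq:fluctation-1-hop-new}, and unwinding $\tau=2\sqrt{10nps(1-s)\log n}+5\log n$ shows the difference is at most $n^2p^3s^3\sqrt{nps(1-s)\log n}+n^2p^2s^3\sqrt{\log n/n}$, which the third and second terms of \prettyref{eq:conditionofbeta2} push below $\frac{c}{3}\beta n^2p^2s^4$.

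It then remains to dispatch the genuinely lower-order pieces. Using $\nps$, I would bound $x_{\max}=2n\beta\bigl(\psi_{\max}ps+\frac{9}{4}n^2p^4s^4\bigr)\lesssim \beta nps^2\log n$, which is $\le\frac{c}{3}\beta n^2p^2s^4$ because $nps^2\ge128\log n$; the same sparsity bound renders $2z_{\max}=9n^2p^3s^3$, $\psi_{\max}$, and $28\log n$ negligible. The remaining square-root deviation terms in $l_{\min}$, $m_{\min}$ and $y_{\max}$ --- all of the form $O\bigl(\sqrt{\beta n^2p^2s^4\log n}\bigr)$ or $O\bigl(\sqrt{n^3p^4s^4\log n}\bigr)$ plus $O(\log n)$ --- are each controlled by the first term $\beta\gtrsim\log n/(n^2p^2s^4)$ and the second term $\beta\gtrsim\sqrt{\log n/(ns^4)}$ of \prettyref{eq:conditionofbeta2}. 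Summing the fractional bounds and taking the constant $600$ in \prettyref{eq:conditionofbeta2} large enough, the total error budget stays strictly below the signal $\frac{7}{24}(1-\delta_1)\beta n^2p^2s^4$, which yields the claimed inequality.

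I expect the main obstacle to be the second paragraph: the two incorrect-seed terms are individually far larger than the signal, so the entire argument hinges on extracting their difference cleanly and matching it to the neighborhood-size fluctuation $\tau$. Care is needed so that the passage from $b_{u\backslash v},b_{v\backslash u}$ to $b_u,b_v$ (which differ by at most one) and the linearization of $(1-ps)^{\,\cdot\,}$ do not leak terms larger than those that condition \prettyref{eq:conditionofbeta2} is designed to absorb; this is precisely where the novel third term $\sqrt{np^3(1-s)\log n/s}$ enters and why the case $s<1$ behaves differently from $s=1$.
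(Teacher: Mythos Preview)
Your proposal is correct and follows essentially the same approach as the paper's proof: isolate the signal $\tfrac{7}{24}(1-\delta_1)\beta n^2p^2s^4$, recognize that the incorrect-seed main terms in $m_{\min}$ and $y_{\max}$ share the factor $\bigl(1-(1-ps)^{a_{u\backslash v}}\bigr)$ and differ only through the exponent $b_{u\backslash v}$ versus $b_{v\backslash u}$, bound that difference by linearizing and invoking $b_v-b_u\le\tau$, and then dispatch the remaining deviation and lower-order terms one by one against fractions of the signal using the three clauses of \prettyref{eq:conditionofbeta2} together with $nps^2\ge128\log n$ and $np^2\le1/\log n$. The paper carries out exactly this decomposition (see \prettyref{eq:lowerbound-upperbound}--\prettyref{eq:boundholdend}), with the same cancellation step and the same term-by-term absorption; your mean-value bound $(1-ps)^{b_{u\backslash v}}-(1-ps)^{b_{v\backslash u}}\le (b_{v\backslash u}-b_{u\backslash v})\,ps$ is equivalent to the paper's use of Bernoulli's inequality $(1-ps)^\tau\ge1-\tau ps$, and the ``$+1$'' you allow in passing from $b_{v\backslash u}-b_{u\backslash v}$ to $b_v-b_u$ is in fact unnecessary since those two differences coincide.
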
 
Please refer to Appendix \ref{sec:proofwmingwmax} for details.

For the other case that $a_u-a_v\leq \tau$,  we instead bound $W_2(v,v)$ from below analogous to \prettyref{lmm:W2uu}, and prove that 
$W_2(v,v)>W_2(u,v)$ with high probability
analogous to \prettyref{lmm:wmingwmax}.

Thus, by combining the two cases and applying union bound, we ensure that
with high probability
the new criteria \prettyref{eq:2_hop_cond_strong} is satisfied and hence the GMWM outputs
the true matching,
completing the proof of \prettyref{thm:thm2hop}. Please refer to Appendix \ref{sec:proof2hop} for details.  Note that, when $1-s=o(1)$, condition \prettyref{eq:conditionofbeta2} given by the new criteria \prettyref{eq:2_hop_cond_strong} requires a smaller $\beta$ than \prettyref{eq:oldcriteria} given by the old criteria \prettyref{eq:2_hop_cond_weak}.

\section{Numerical experiments}\label{sec:experiment}
In this section, we present numerical studies,
comparing the performance of our $2$-hop algorithm to
the 1-hop algorithm \cite{lubars2018correcting} and the NoisySeeds algorithm \cite{10.14778/2794367.2794371}, which are the state-of-the-art for graph matching with imperfect seeds.
Additional numerical studies to verify our theoretical results are deferred to Appendix \ref{sec:exp}. In all our experiments except for the last one on the computer vision dataset in \prettyref{sec:exp-3d}, 
the initial seeded mappings are constructed in the same way as our model given in~\prettyref{sec:model}, i.e., the initial mappings are uniformly chosen at random with a given number of correctly matched pairs. In contrast, in the computer vision experiment in \prettyref{sec:exp-3d}, the initial seeded mapping is from the output of a seedless matching algorithm. The computational environment is  Matlab R2017a on a standard PC with
$2.4$ GHz CPU and $8$ GB RAM.
Our code has been released on GitHub at \url{https://github.com/Leron33/Graph-matching}.


\subsection{Performance Comparison with Synthetic Data}

For our experiments on synthetic data, we generate $G_1$, $G_2$ and $\pi^*$  according to the correlated \ER model. We calculate the accuracy rate as the median of the proportion of vertices that are correctly matched, taken over 10 independent simulations. In \prettyref{fig:Compare1}, we fix $n=10000$ and $s=0.9$, and plot the accuracy rates for $p=n^{-\frac{3}{4}}$ and $p=n^{-\frac{6}{7}}$. 
We observe that the 2-hop algorithm significantly outperforms the 1-hop algorithm. For the NoisySeeds algorithm, its performance is sensitive to the threshold value $r$.
The 2-hop algorithm performs either comparably or better than the NoisySeeds algorithm even with the best choice of $r$. 
Note that it is \emph{a priori} unclear how to choose the best value of $r$ for the NoisySeeds algorithm, while our 2-hop algorithm does not need to tune any parameters. 
Computationally, when we match two graphs of size $10000$ with $p=n^{-\frac{6}{7}}$ and $\beta=0.5$, 
the average running times of the 1-hop algorithm, 2-hop algorithm, and NoisySeeds algorithm are about 52s, 86s and 101s, respectively. 
Similar to the  NoisySeeds algorithm, we can modify GMWM for parallel implementation to make our $2$-hop algorithm even more scalable. Please refer to \prettyref{app:scalable} 
for details.

\begin{figure}[h]
\centering
\subfigure[$p=n^{-\frac{3}{4}}$.]{
\includegraphics[scale=0.37]{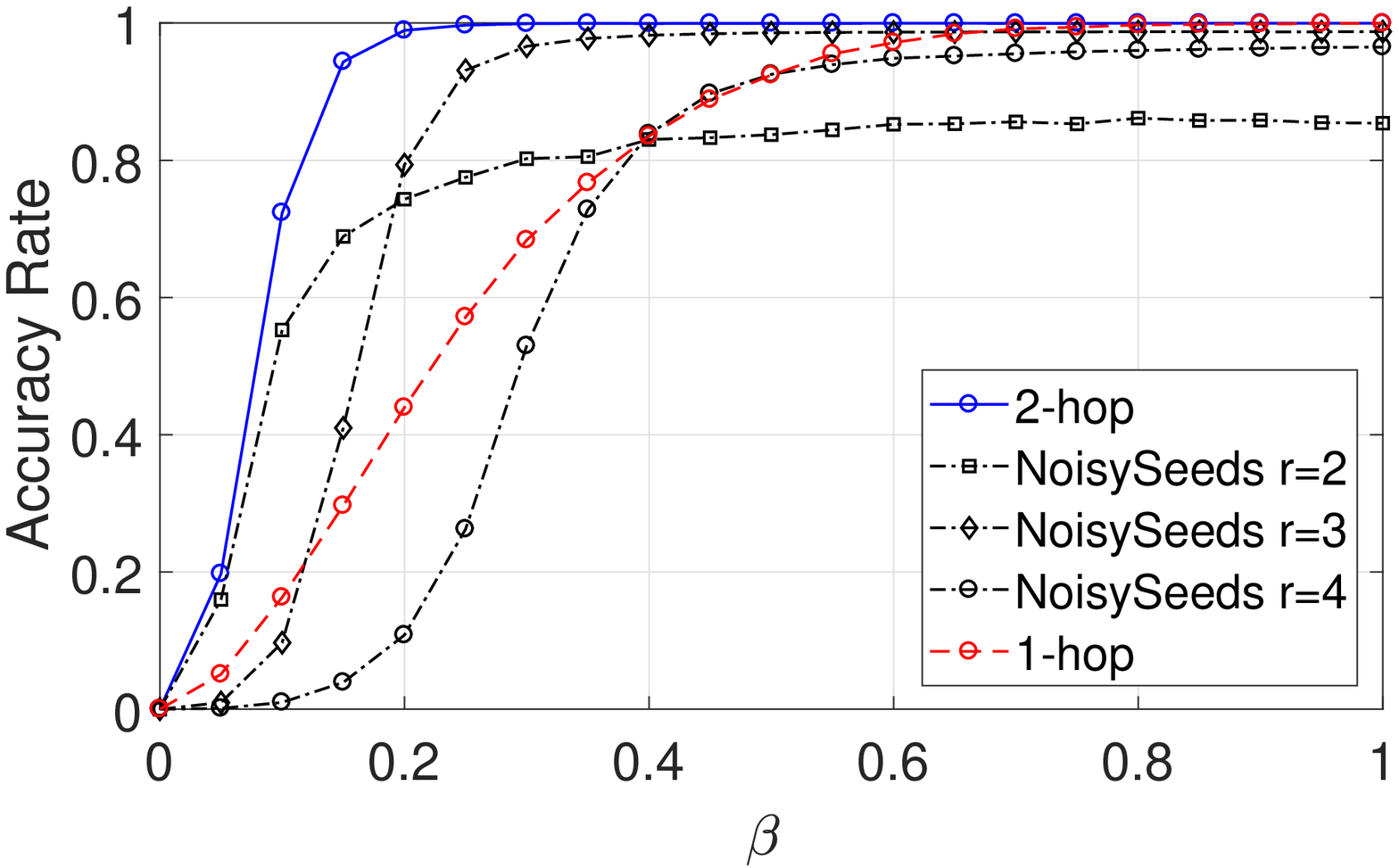}\label{fig:CompareP1}}
\subfigure[$p=n^{-\frac{6}{7}}$.]{
 \includegraphics[scale=0.37]{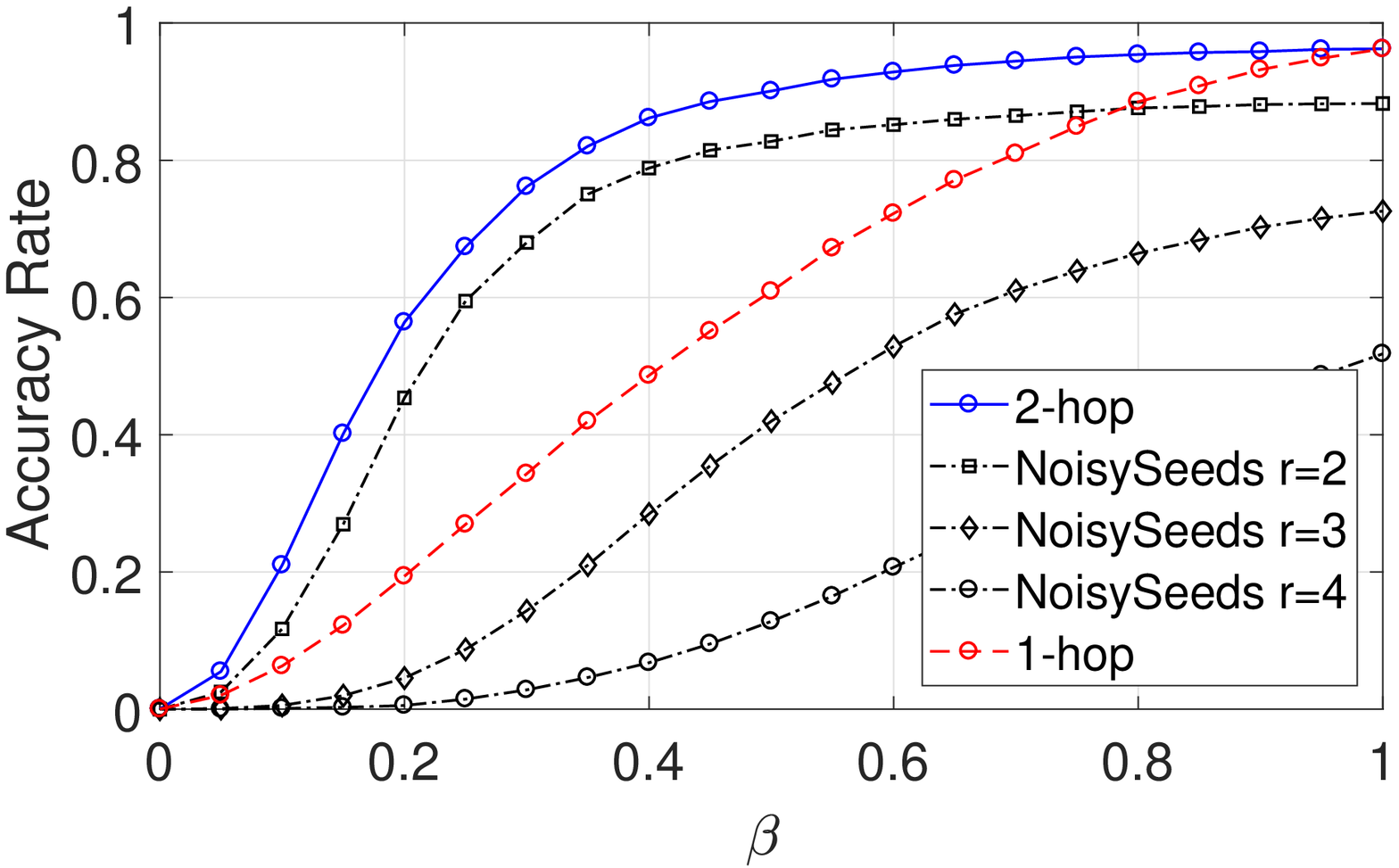}\label{fig:CompareP2}}
\caption{Performance comparison of the 1-hop algorithm, 2-hop algorithm and NoisySeeds algorithm with $p=n^{-\frac{3}{4}}$ and $p=n^{-\frac{6}{7}}$. Fix $n=10000$ and $s=0.9$.}
\label{fig:Compare1}
\end{figure}

In \cite{lubars2018correcting}, the authors suggest iteratively applying the 1-hop algorithm to further boost its accuracy. Thus, we further iteratively apply the 1-hop algorithm, 2-hop algorithm and NoisySeeds algorithm and compare their performance.
By using the matching output of the previous iteration as the new partially-correct seeds for the next iteration, we run the three algorithms with a given number of iterations $L=0,1,2$. In \prettyref{fig:CompareIteration}, we consider the same setup as in \prettyref{fig:Compare1}. We fix $n=10000$ and $s=0.9$, and plot the accuracy rates for $p=n^{-\frac{3}{4}}$ and $p=n^{-\frac{6}{7}}$. For the NoisySeeds algorithm, we choose the threshold $r=3$ for $p=n^{-\frac{3}{4}}$ and $r=2$ for $p=n^{-\frac{6}{7}}$. We observe that iteratively applying these algorithms boost their performance and the 2-hop algorithm still performs the best among the three algorithms when the number of iterations is the same. In particular, when $p=n^{-\frac{6}{7}}$, while the matching accuracy of the 2-hop with multiple iterations gets close to $1$,  the matching accuracy of NoisySeeds saturates at $0.8\sim0.9$. This is because about $10\%$ true pairs  have only one common 1-hop neighbor and thus cannot be correctly matched by the NoisySeeds algorithm with $r=2$.

\begin{figure}[h]
\centering
\subfigure[$p=n^{-\frac{3}{4}}$.]{
\includegraphics[scale=0.37]{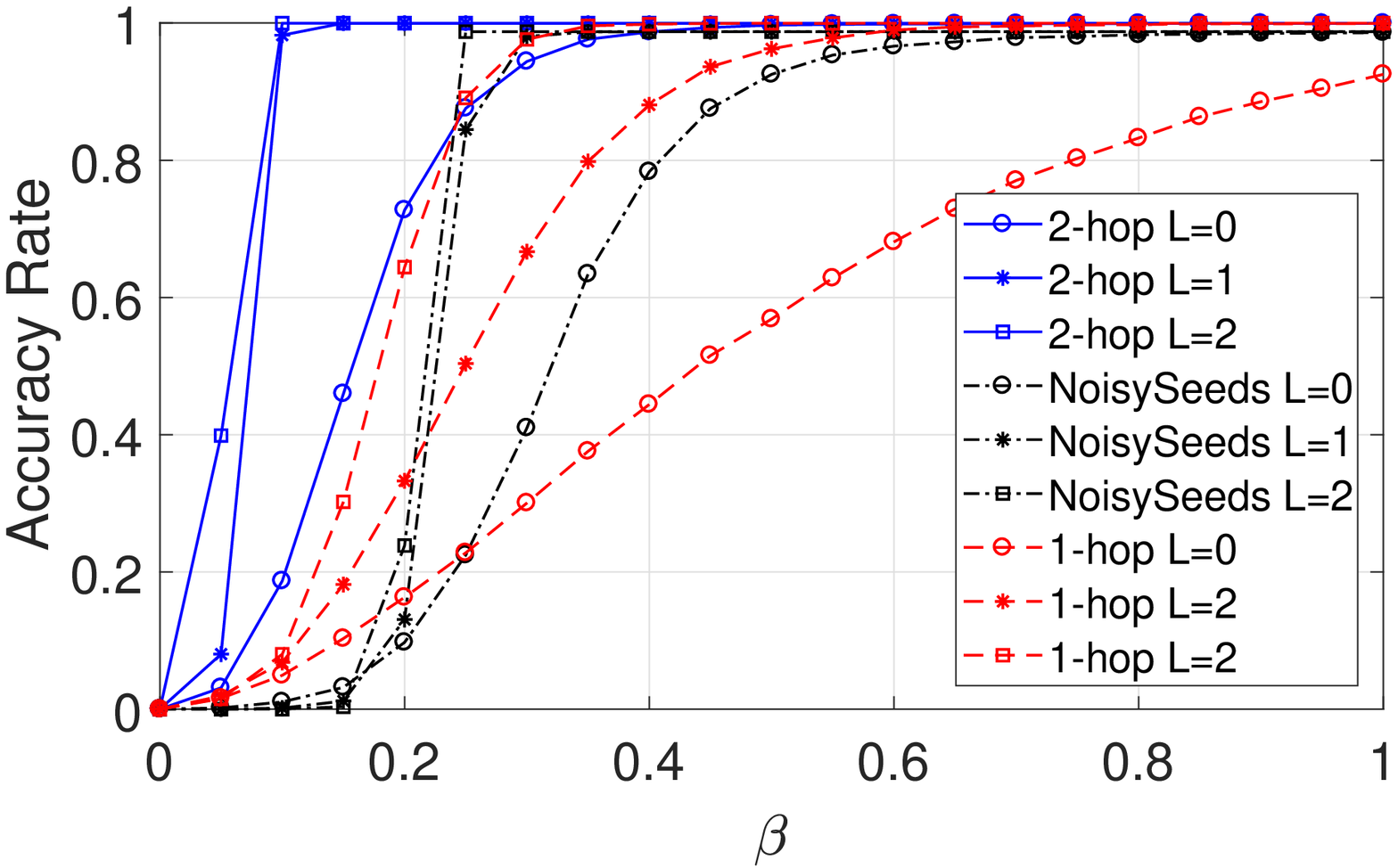}\label{fig:CompareI1}}
\subfigure[$p=n^{-\frac{6}{7}}$.]{
 \includegraphics[scale=0.37]{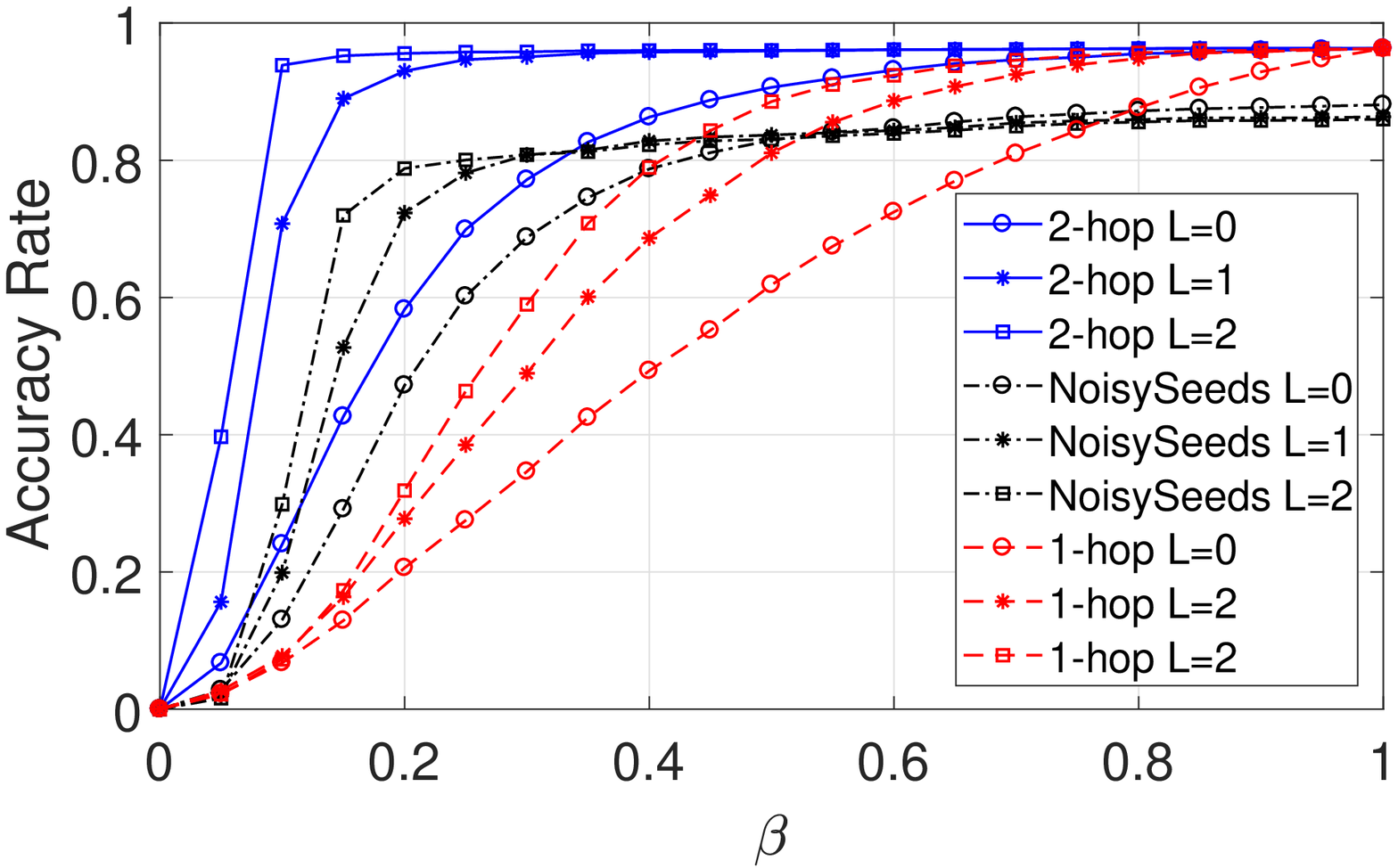}\label{fig:CompareI2}}
\caption{Performance comparison of the 1-hop algorithm, 2-hop algorithm and NoisySeeds algorithm applied iteratively with $p=n^{-\frac{3}{4}}$ and $p=n^{-\frac{6}{7}}$. Fix $n=10000$ and $s=0.9$.}
\label{fig:CompareIteration}
\end{figure}

\subsection{Performance  Comparison with Real Data}\label{sec:experiment-real}

In this section, we will show that our 2-hop algorithm also performs well on real-world graphs. Further, departing from our simulation with synthetic data where the two correlated graphs have the same number of vertices, we will evaluate the performance of the algorithms when the two correlated graphs have a different number of vertices. In \prettyref{sec:exp-facebook}, we consider de-anonymizing social networks, which is a popular application of graph matching. 
In this example, the two correlated graphs are generated by sampling a Facebook friendship network. In \prettyref{sec:exp-auto}, we match networks of Autonomous Systems in which the correlated graphs
are povided by the real dataset. In both \prettyref{sec:exp-facebook} and \prettyref{sec:exp-auto}, the initial seeds are chosen uniformly randomly. In \prettyref{sec:exp-3d}, we match deformable 3D shapes, where not only the correlated graphs are provided by the real dataset, but also the initial seeds are generated by a seedless graph match algorithm.

\subsubsection{Facebook Friendship Networks}\label{sec:exp-facebook}
We use a Facebook friendship network of 11621 students and staffs from Standford university provided in \cite{Traud_2012} as the parent graph $G_0$. There are 1136660 edges in $G_0$. The Facebook social network has an approximate power-law degree distribution with $p(d)\sim d^{-1}$ with average degree about 100. 
To obtain two correlated subgraphs $G_1$ and $G_2$ of different sizes, 
we independently sample each edge of $G_0$ twice with probability $s=0.9$ and sample each vertex of $G_0$ twice with probability $\alpha=0.8$. 
Then, we relabel the vertices in $G_2$ according to a random permutation $\pi^*:[n_2] \to [n_2]$,
where $n_2$ is the number of nodes in $G_2$. Let $m$ denote the number of common vertices that appear in both $G_1$ and $G_2$. The initial seed mapping is constructed by uniformly and randomly choosing a mapping $\pi:[m] \to [m]$ between the common vertices of the two subgraphs such that $\beta$ fraction of vertices are correctly matched, \ie,
$\pi(u)=\pi^*(u)$ for exactly $\beta m$ common vertices.
We treat $G_1$ as the public network and $G_2$ as the private network, and the goal is to de-anonymize the node identities in $G_2$ by matching $G_1$ and $G_2$. 
We show the performance of the 1-hop algorithm, 2-hop algorithm, and NoisySeeds algorithm in \prettyref{fig:Compare3}. 
We choose the threshold $r=5,10,15$ for the NoisySeeds algorithm to search for the best value of $r$. We observe that our proposed 2-hop algorithm significantly outperforms the 1-hop algorithm and NoisySeeds algorithm.
Note that the matching accuracy is 
saturated at around $80\%$, because 
there are about $15\%$ common vertices that are isolated in the intersection graph 
$G_1 \wedge G_2$ and thus
can not be correctly matched.
Due to the power-law degree variation, the number of witnesses for some fake pairs could be larger than the threshold $r$. Thus, even the NoisySeeds algorithm with the best value of $r$ does not perform well.

\begin{figure}[h]
\centering
\includegraphics[width=0.6\columnwidth]{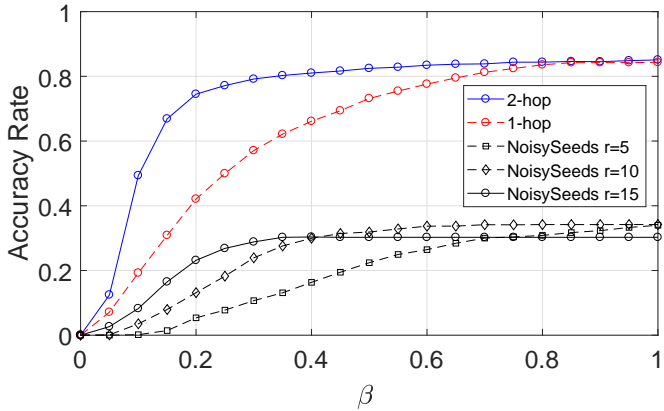}
\caption{Performance comparison of 1-hop algorithm, 2-hop algorithm and NoisySeeds algorithm applied to the Facebook networks.}
\label{fig:Compare3}
\end{figure}

\subsubsection{Autonomous Systems Networks}\label{sec:exp-auto}
Following~\cite{pmlr-v119-fan20a}, we use the Autonomous Systems (AS) dataset from \cite{snapnets} to test the graph matching performance on real graphs. The dataset consists of 9 graphs of Autonomous Systems peering information inferred from Oregon route-views between March 31, 2001, and May 26, 2001. Since some vertices and edges are changed over time, these nine graphs can be viewed as correlated versions of each other. The number of vertices of the 9 graphs ranges from 10,670 to 11,174 and the number of edges from 22,002 to 23,409. The Autonomous Systems networks have an approximate power-law degree distribution with $p(d)\sim d^{-2}$ with average degree about 2. 
We apply the 1-hop algorithm, 2-hop algorithm and NoisySeeds algorithm (with the best-performing threshold $r=2$) to match each graph to that on March 31, with vertices randomly permuted. 
To obtain the initial seed mapping, we uniformly and randomly choose the mapping between the common vertices among the two given graphs such that $\beta$ fraction of vertices are correctly matched.

The performance comparison of the three algorithms is plotted in \prettyref{fig:Compare2} for $\beta=0.3,0.6,0.9$. We observe that our proposed 2-hop algorithm significantly outperforms the 1-hop algorithm and NoisySeeds Algorithm. The NoisySeeds algorithm does not perform well due to its thresholding component: There exists high degree variation in these real graphs and thus a significant fraction of true pairs have only 1 witness, which falls below even the smallest threshold $r=2$. Note that the accuracy rates for all algorithms decay in time because over time the graphs become less correlated with the initial one on March 31. Computationally, when we match two real graphs with $\beta=0.6$, the average running time of the 1-hop algorithm, 2-hop algorithm, and NoisySeeds algorithm is about 46s, 73s and 91s, respectively.

\begin{figure}[h]
\centering
\includegraphics[width=0.6\columnwidth]{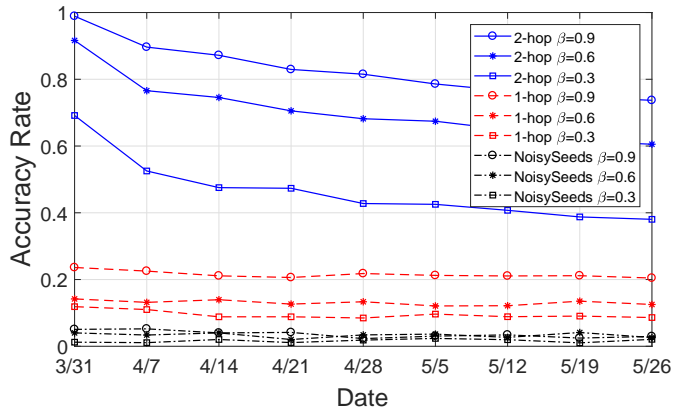}
\caption{Performance comparison of 1-hop algorithm, 2-hop algorithm and NoisySeeds algorithm applied to the Autonomous Systems  graphs.}
\label{fig:Compare2}
\end{figure}

\subsubsection{Computer Vision Dataset}\label{sec:exp-3d}
In this experiment, we use the output of seedless graph matching algorithms as partially correct seeds, and test the performance of 1-hop, 2-hop, and NoisySeeds 
in correcting the initial matching errors. 
We focus on the application of deformable shape matching. 
Matching 3D deformable shapes is a fundamental and ubiquitous problem in
computer vision with numerous applications such as object recognition, and has been extensively studied for decades (see~\cite{van2011survey} and~\cite{sahilliouglu2020recent} for
surveys). At a high-level, each 3D shape is represented as a mesh graph. 
For two 3D shapes corresponding to the same object but with different poses, their mesh graphs are approximately isomorphic. However, the exact vertex mapping is not knowm. Thus, the goal of deformable shape matching is to retrieve  the correct vertex correspondence by matching the two mesh graphs.

The previous work \cite{pmlr-v119-fan20a}  applied the 1-hop algorithm iteratively to boost the matching accuracy of their seedless graph matching algorithm, GRAMPA (GRAph Matching by Pairwise eigen-Alignments). Their experiment is carried on the SHREC'16 dataset  in \cite{lahner2016shrec}. 
The SHREC'16 dataset provides 25 deformable 3D shapes (15 for training and 10 for testing) undergoing different topological changes. At the lower resolution, each shape is represented by a triangulated mesh graph consisting of 
around 8K-11K vertices with 3D coordinates and  17K-22K
triangular faces, with vertex degrees highly concentrated on $6$.
It is demonstrated  in~\cite{pmlr-v119-fan20a}  that
the GRAMPA followed by the iterative 1-hop algorithm achieves  much higher matching accuracy compared to the existing methods tested in~\cite{lahner2016shrec}. 

We also use the SHRED'16 dataset in our experiment. When we match each pair of test shapes, we first apply the GRAMPA algorithm, and then repeatedly use the 1-hop algorithm, 2-hop algorithm and NoisySeeds algorithm with 100 iterations to boost the matching accuracy of the output of the GRAMPA algorithm. \prettyref{fig:Visual3D} provides a visualization of our results, where the matched vertices are colored with the same color. We can see that the 2-hop algorithm corrects most matching errors of the GRAMPA algorithm.

\begin{figure}[h]
\centering
\subfigure[]{
\includegraphics[scale=0.5]{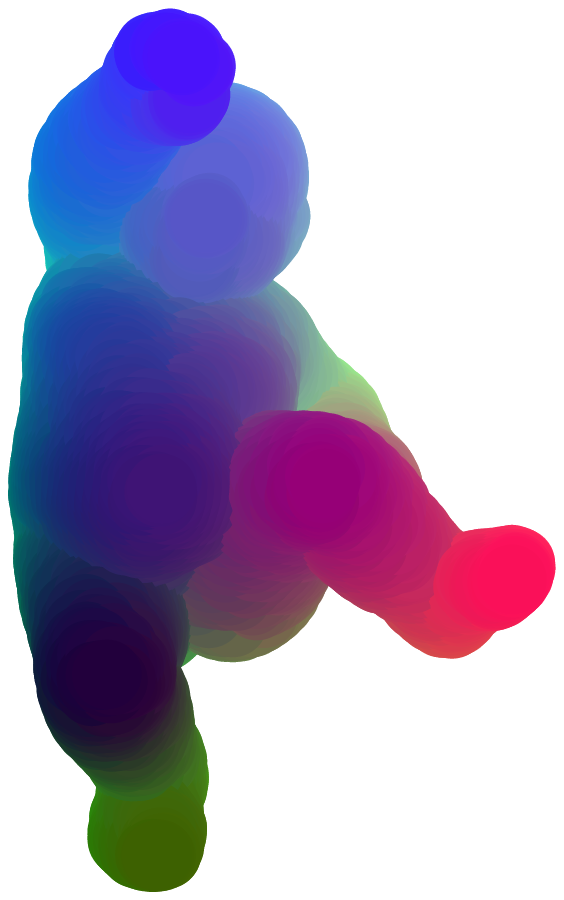}
\label{fig:visual3da}}
\subfigure[]{
\includegraphics[scale=0.5]{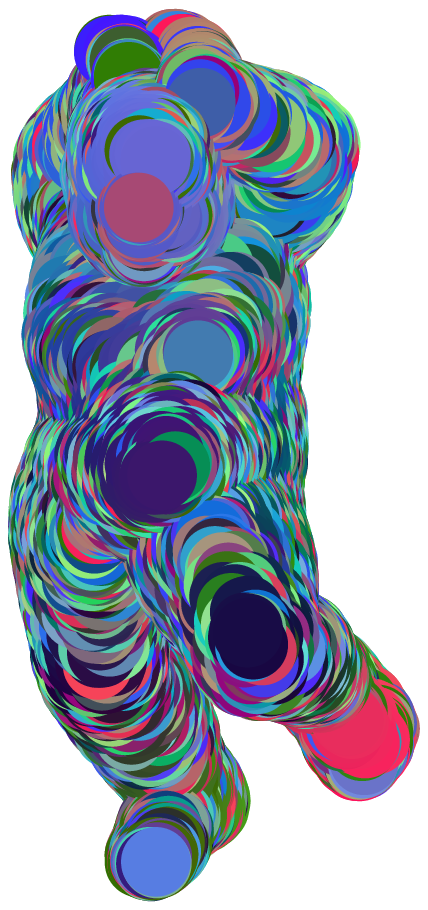}
\label{fig:visual3db}}
\subfigure[]{
\includegraphics[scale=0.5]{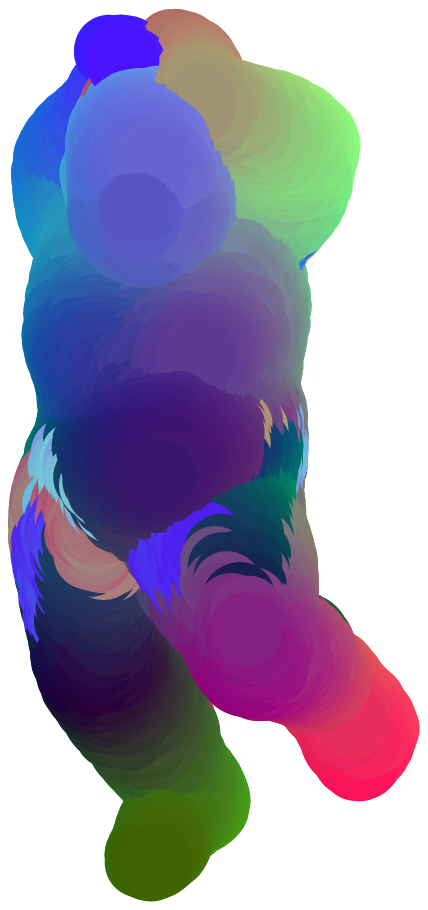}
\label{fig:visual3dc}}
\caption{Visualization of the matching results. \prettyref{fig:visual3da} is the mesh graph of a 3D shape randomly chosen from the SHREC'16 dataset, whose vertices are colored in a gradient. \prettyref{fig:visual3db} and \prettyref{fig:visual3dc} are the same mesh graph with a different pose that needs to be matched with \prettyref{fig:visual3da}, where vertices in \prettyref{fig:visual3db} and \prettyref{fig:visual3dc} are labeled with the same color as that of the vertices in \prettyref{fig:visual3da} matched by the corresponding graph matching algorithms.  \prettyref{fig:visual3db} shows the matching result of the 
GRAMPA algorithm. \prettyref{fig:visual3dc} shows the matching result when we use the output of the GRAMPA algorithm as the initial seeds and apply the 2-hop algorithm iteratively. We can observe that the 2-hop algorithm corrects most matching errors of the GRAMPA algorithm.}
\label{fig:Visual3D}
\end{figure}

We follow the Princeton benchmark protocol in \cite{10.1145/2010324.1964974} to evaluate the matching quality. Assume that a vertex-pair $(i,j)\in \mathcal{M}\times \mathcal{N}$ is matched between shapes $\mathcal{M}$ and $\mathcal{N}$, while the ground-truth correspondence is $(i, j^*)$.
Then the normalized geodesic error of this correspondence at vertex $i$ is defined as $\varepsilon(i) = \frac{d_{\mathcal{N}} (j,j^*)}{\sqrt{\text{area}(\mathcal{N})}}$, where $d_{\mathcal{N}}$ denotes the geodesic distance on $\mathcal{N}$ and $\text{area}(\mathcal{N})$ is the total surface area of $\mathcal{N}$. 
Finally, we plot the cumulative distribution function of $\{\varepsilon(i)\}_{i=1}^n$ in~\prettyref{fig:Compare3D},
where $\text{cdf}(\epsilon)$ is the fraction of vertices $i$ such that $\varepsilon(i)\le \epsilon$.
In particular, $\text{cdf}(0)$ is the fraction of correctly matched vertices in shape $\mathcal{M}$.

In \prettyref{fig:Compare3D}, we  observe that all three algorithms improve the initial matching accuracy of the GRAMPA algorithm, but the performance improvement of our 2-hop algorithm is most substantial. In particular, our 2-hop algorithm increases the fraction of correctly matched vertices to more than $80\%$, while the 1-hop algorithm 
and NoisySeeds (with the best choice threshold $r=2$) only correctly match around $60\%$ and $30\%$ of vertex-pairs, respectively.

\begin{figure}[h]
\centering
\includegraphics[width=0.6\columnwidth]{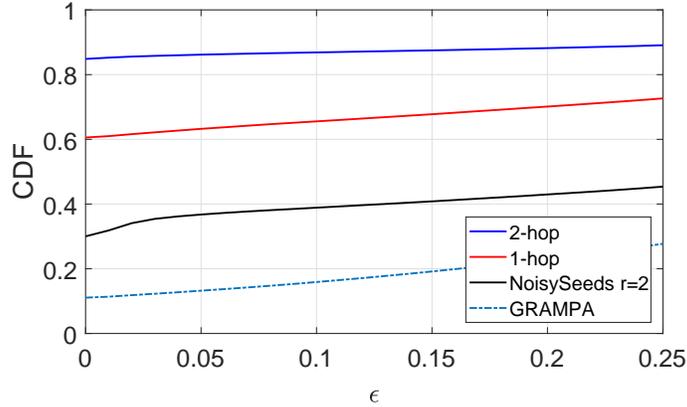}
\caption{Performance comparison of 1-hop algorithm, 2-hop algorithm and NoisySeeds algorithm applied to the SHREC'16 dataset, with initial noisy seeds generated by the GRAMPA algorithm. The higher the curve, the better the algorithm performance.
}
\label{fig:Compare3D}
\end{figure}

\section{Conclusion}\label{sec:con}
In this work, we tackle the graph matching problem with partially-correct seeds. Under the correlated \ER model, we first present a sharper characterization of the condition for the 1-hop algorithm to perfectly recover the vertex matching for dense graphs, which requires many fewer correct seeds than  the prior art when graphs are dense.
Then, for sparse graphs, by exploiting 2-hop neighbourhoods, we propose an efficient 2-hop algorithm that perfectly recovers the true vertex correspondence with even fewer correct seeds than the 1-hop algorithm in sparse graphs. 
Our performance guarantees for the $1$-hop and $2$-hop algorithm combined together achieve the best-known results across the entire range
of graph sparsity and significantly improve the state-of-the-art. 
Moreover, our results precisely characterize the graph sparsity at which the $2$-hop algorithm starts to outperform $1$-hop.
This reveals an interesting and delicate trade-off between the \emph{quantity} and the \emph{quality} of witnesses: while the $2$-hop algorithm exploits more seeds as witnesses than $1$-hop, the $2$-hop witnesses are less accurate than the $1$-hop counterparts in distinguishing true pairs from fake pairs when graphs are dense.

Experimental results validate our theoretical analysis, demonstrating that our $2$-hop algorithm continues to perform well in real graphs with power-law degree variations and different number of nodes. 
There are many exciting future directions such as 
analyzing the performance of $j$-hop algorithms for $j\ge 3$, investigating fundamental limits of seeded graph matching, and studying graph matching under other random graph models beyond \ER random graphs. 
%

\newpage
\begin{appendices}

\section{Numerical  Experiments to Verify The Scalings}\label{sec:exp}
In this section, we conduct numerical studies to verify the scaling results given in \prettyref{thm:thm1hop} and \prettyref{thm:thm2hop}. 
We observe that 
conditions~\prettyref{eq:conditionofbeta1} and~\prettyref{eq:conditionofbeta2} are not only sufficient, but also  close to necessary (differing from the necessary conditions by a constant factor) for the 1-hop and 2-hop algorithms to succeed, respectively.
Throughout, we generate $G_1$, $G_2$ and $\pi^*$ according to the correlated \ER model with fixed sampling probability $s=0.8$, and vary the number of vertices from 2000 to 8000. 


We first simulate the performance of the 1-hop algorithm for $p=n^{-\frac{1}{3}}$ and $p=n^{-\frac{2}{3}}$. The results are presented in \prettyref{fig:onehop1sub1} and \prettyref{fig:onehop2sub1} as a function of $\beta$. \prettyref{thm:thm1hop} predicts that the 1-hop algorithm succeeds in exact recovery when $\beta\gtrsim \sqrt{\log n/n}$ for $p=n^{-\frac{1}{3}}$ and 
when $\beta\gtrsim \frac{\log n}{np}$ for $p=n^{-\frac{2}{3}}$. Thus, we 
rescale the $x$-axis in \prettyref{fig:onehop1sub2} and \prettyref{fig:onehop2sub2} as $\beta/\sqrt{\frac{\log n}{n}}$ and $\beta/\left(\frac{\log n}{np}\right)$, respectively. 
We see that  after rescaling the curves for different $n$ align well with each other, suggesting that condition \prettyref{eq:conditionofbeta1} is both sufficient and close to necessary for the 1-hop algorithm to succeed.

\begin{figure}[H]
\centering
\subfigure[x-axis is $\beta$.]{
\includegraphics[scale=0.37]{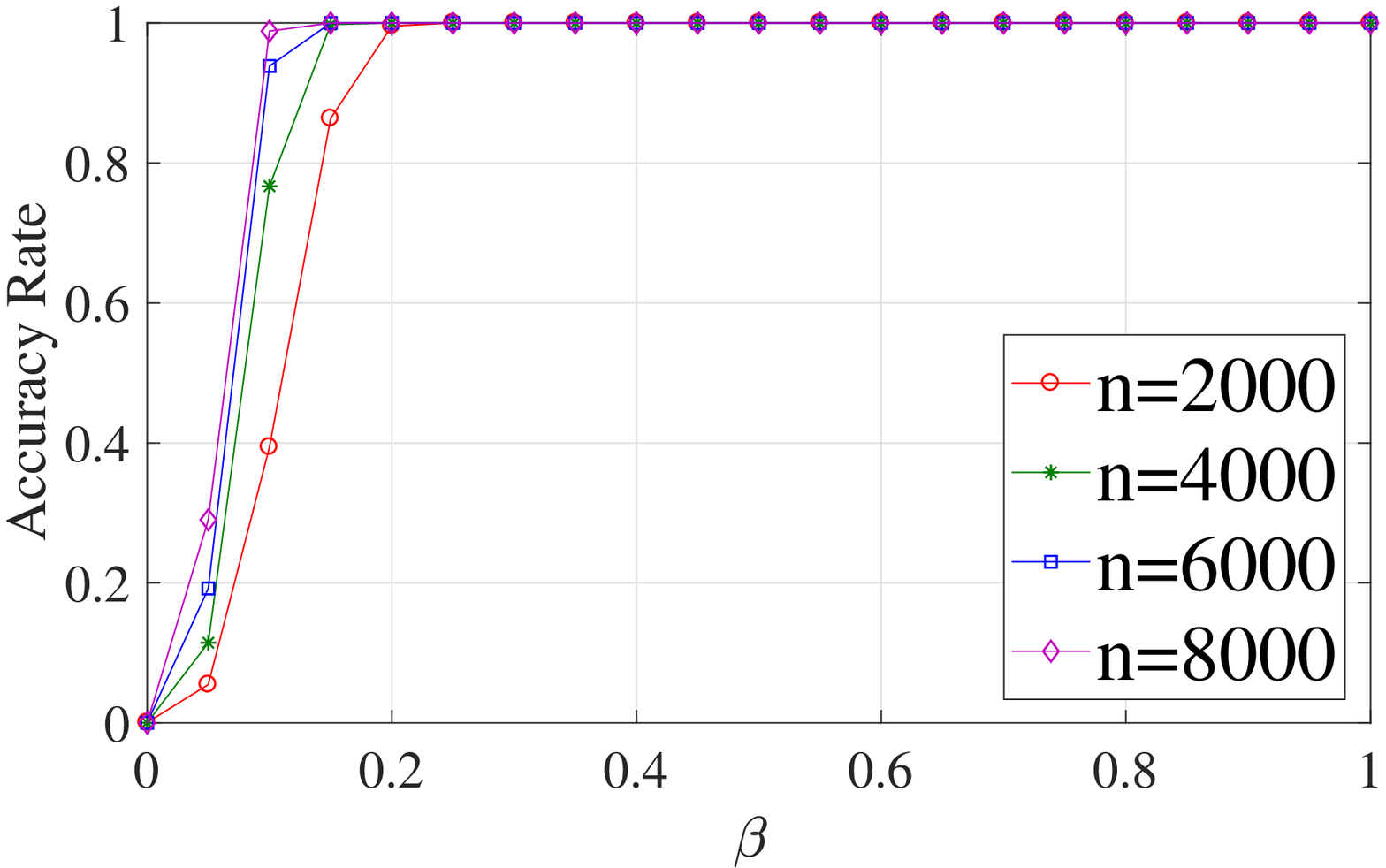}\label{fig:onehop1sub1}}
\subfigure[x-axis is $\beta/\sqrt{\log n/n}$.]{
 \includegraphics[scale=0.37]{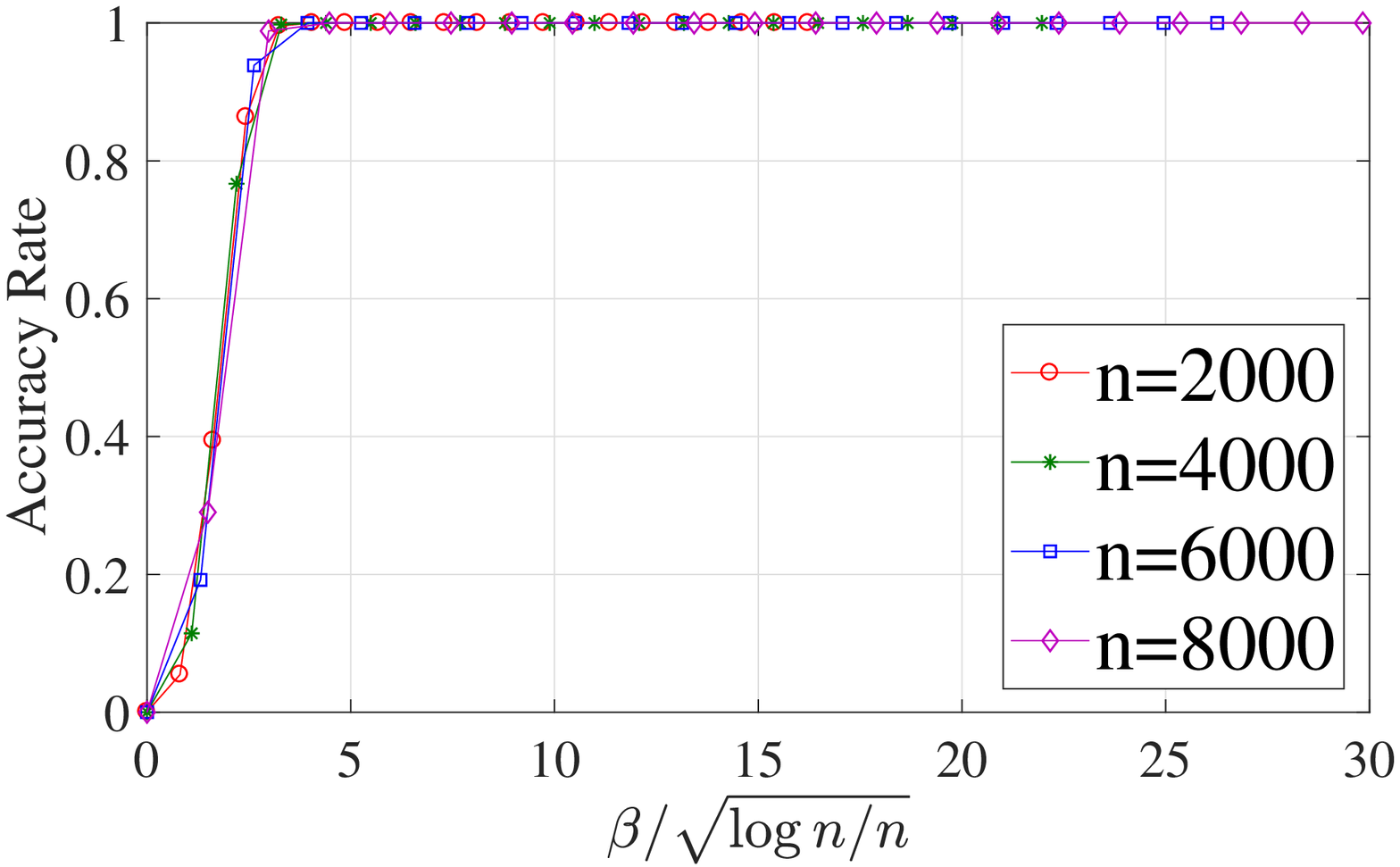}\label{fig:onehop1sub2}}
\caption{The 1-hop algorithm with varying $n$ and $p=n^{-\frac{1}{3}}$. Fix $s=0.8$.}
\label{fig:onehop1}
\end{figure}

\begin{figure}[H]
\centering
\subfigure[x-axis is $\beta$.]{
\includegraphics[scale=0.37]{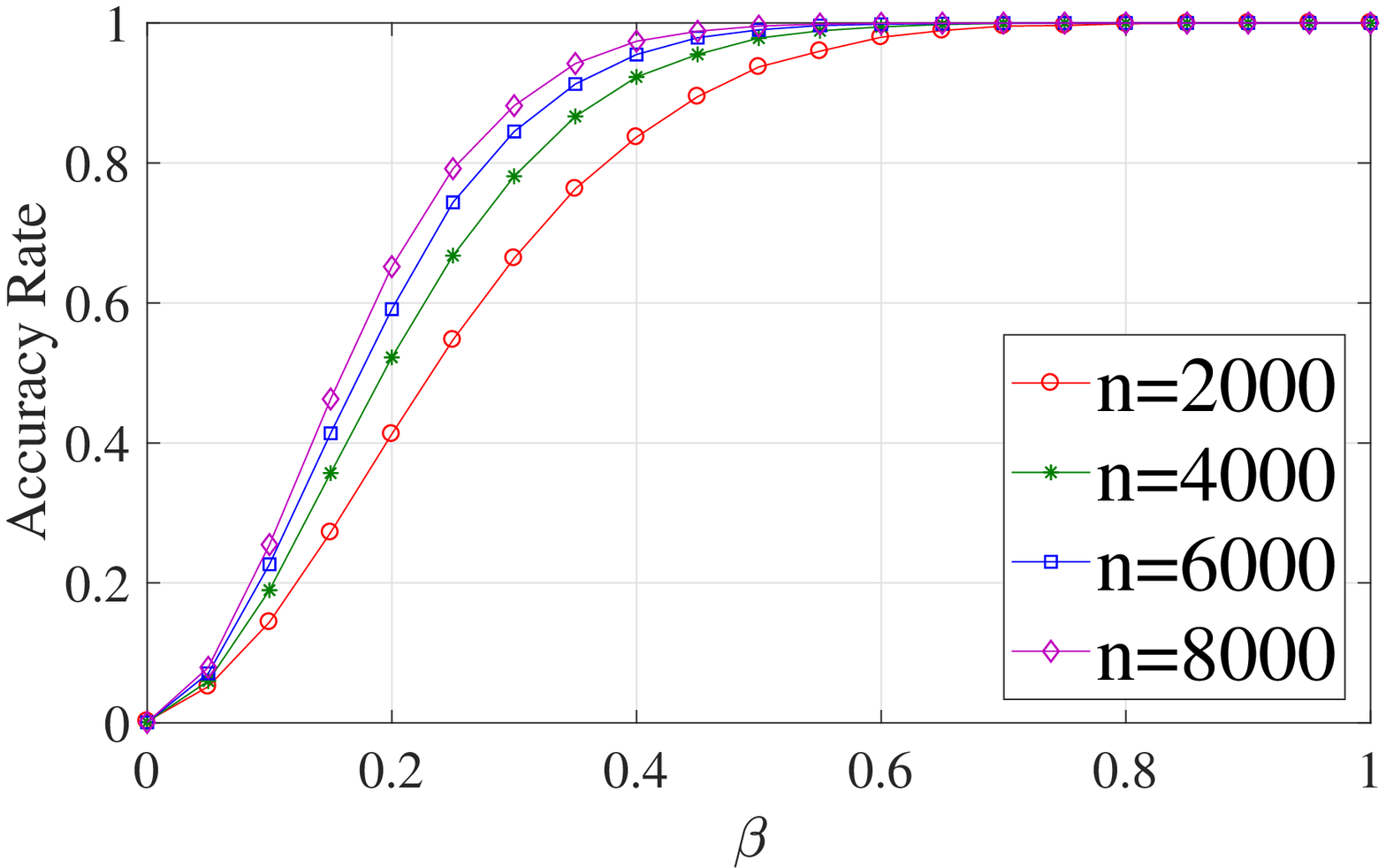}\label{fig:onehop2sub1}}
 \subfigure[x-axis is $\beta/(\log n/np)$.]{
 \includegraphics[scale=0.37]{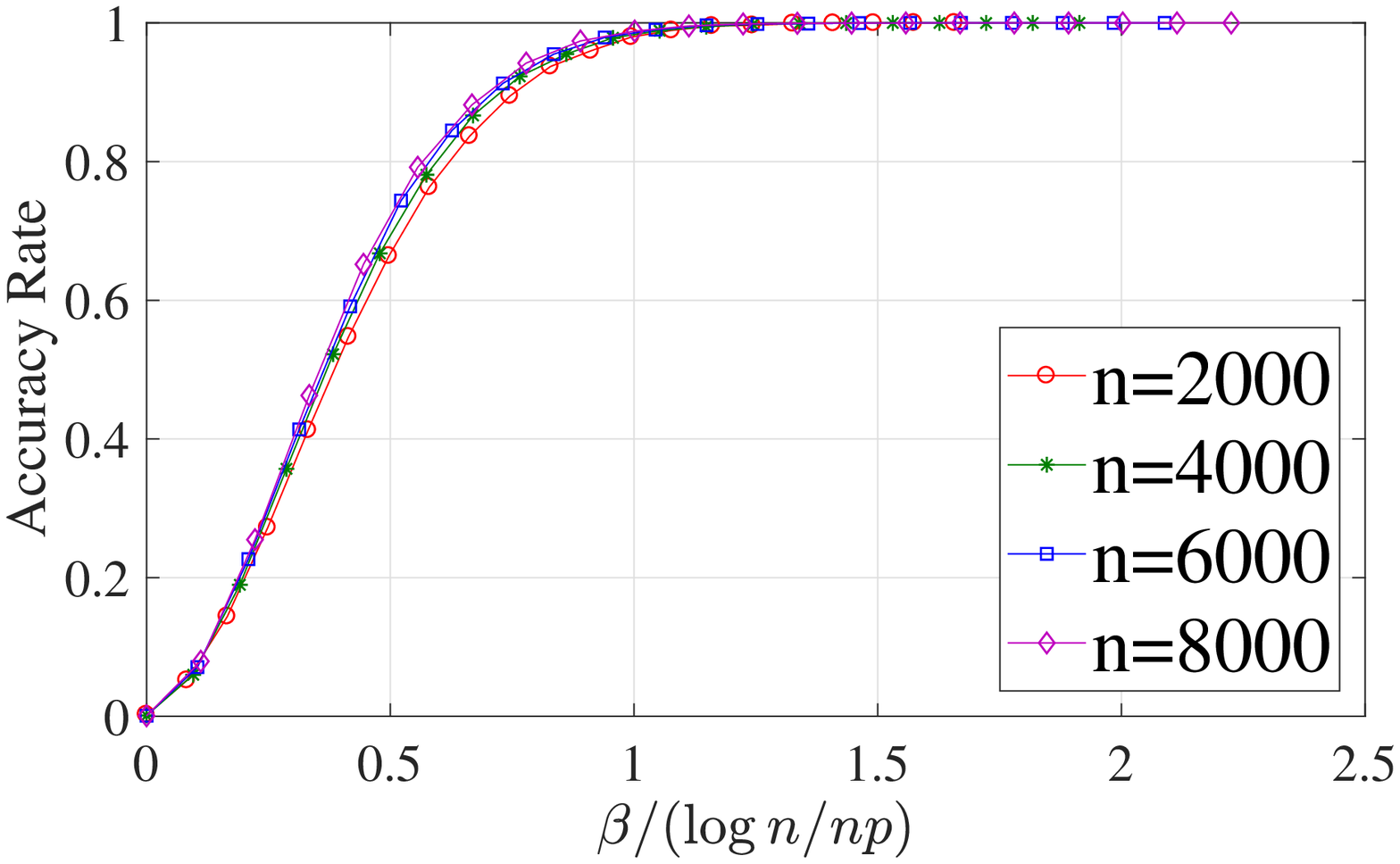}\label{fig:onehop2sub2}}
\caption{The 1-hop algorithm with varying $n$ and $p=n^{-\frac{2}{3}}$. Fix $s=0.8$.}
\label{fig:onehop2}
\end{figure}

Next, we simulate the performance of the 2-hop algorithm for $p=n^{-\frac{3}{5}}$, $p=n^{-\frac{17}{24}}$, and $p=n^{-\frac{4}{5}}$. 
 The results are presented in \prettyref{fig:twohop1sub1},  \prettyref{fig:twohop2sub1} and \prettyref{fig:twohop3sub1}. 
Since \prettyref{thm:thm2hop} predicts that the 2-hop algorithm succeeds in exact recovery with high probability when $\beta\gtrsim \max \left\{\sqrt{np^3\log n},\sqrt{\frac{\log n}{n}}, \frac{\log n}{n^2p^2}\right\}$, we rescale the x-axis in \prettyref{fig:twohop1sub2}, \prettyref{fig:twohop2sub2} and \prettyref{fig:twohop3sub2} as $\beta/\sqrt{np^3\log n}$ for $p=n^{-\frac{3}{5}}$, $\beta/\sqrt{\frac{\log n}{n}}$ for $p=n^{-\frac{17}{24}}$ and $\beta/\left(\frac{\log n}{n^2p^2}\right)$ for $p=n^{-\frac{4}{5}}$. As we can see in \prettyref{fig:twohop1sub2},  \prettyref{fig:twohop2sub2} and \prettyref{fig:twohop3sub2}, the curves for different $n$ align well with each other,  suggesting that condition \prettyref{eq:conditionofbeta2} is both sufficient and close to necessary for the 2-hop algorithm to succeed.

\begin{figure}[H]
\centering
\subfigure[x-axis is $\beta$.]{
\includegraphics[scale=0.37]{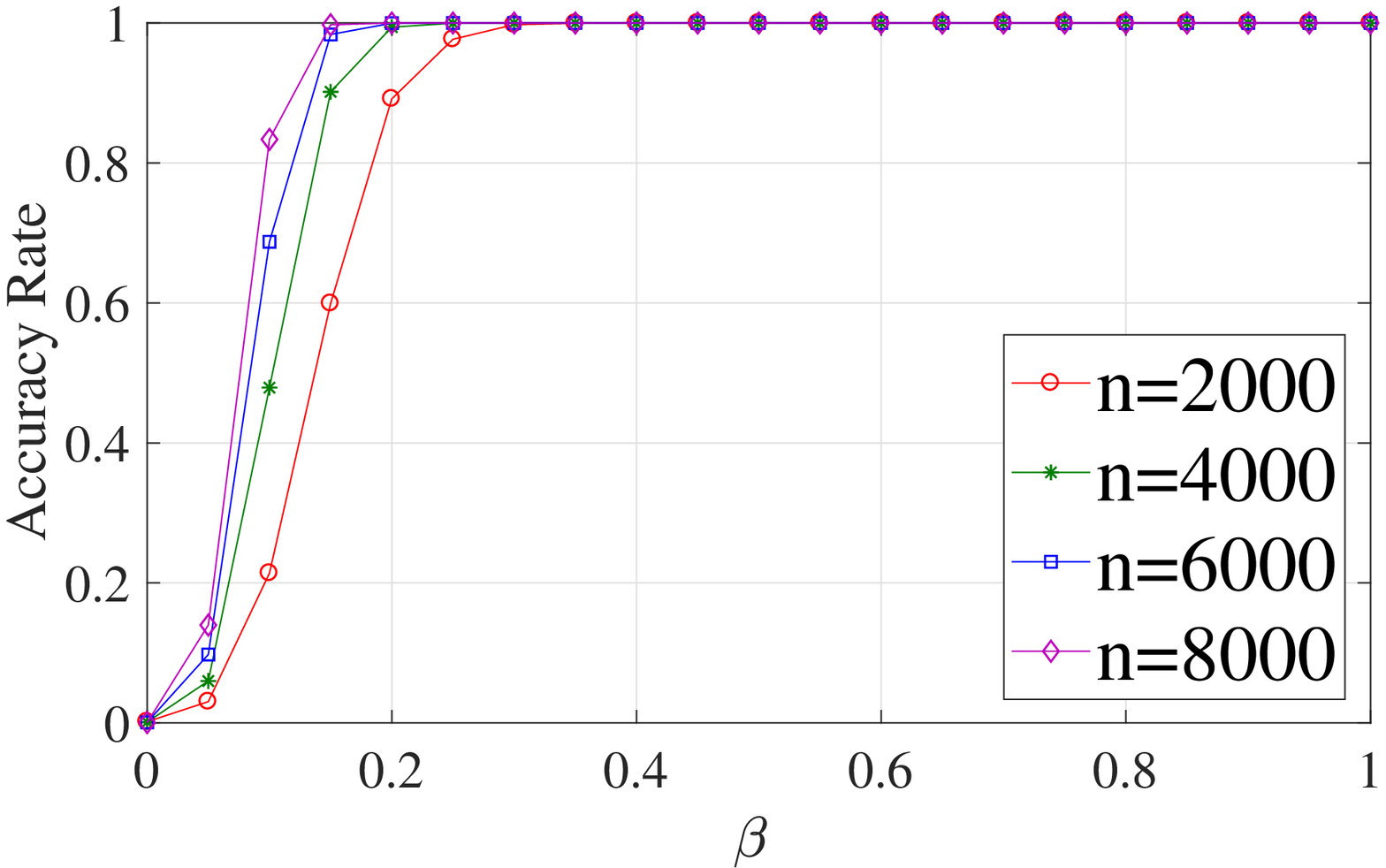}\label{fig:twohop1sub1}}
\subfigure[x-axis is $\beta/\sqrt{np^3\log n}$.]{
\includegraphics[scale=0.37]{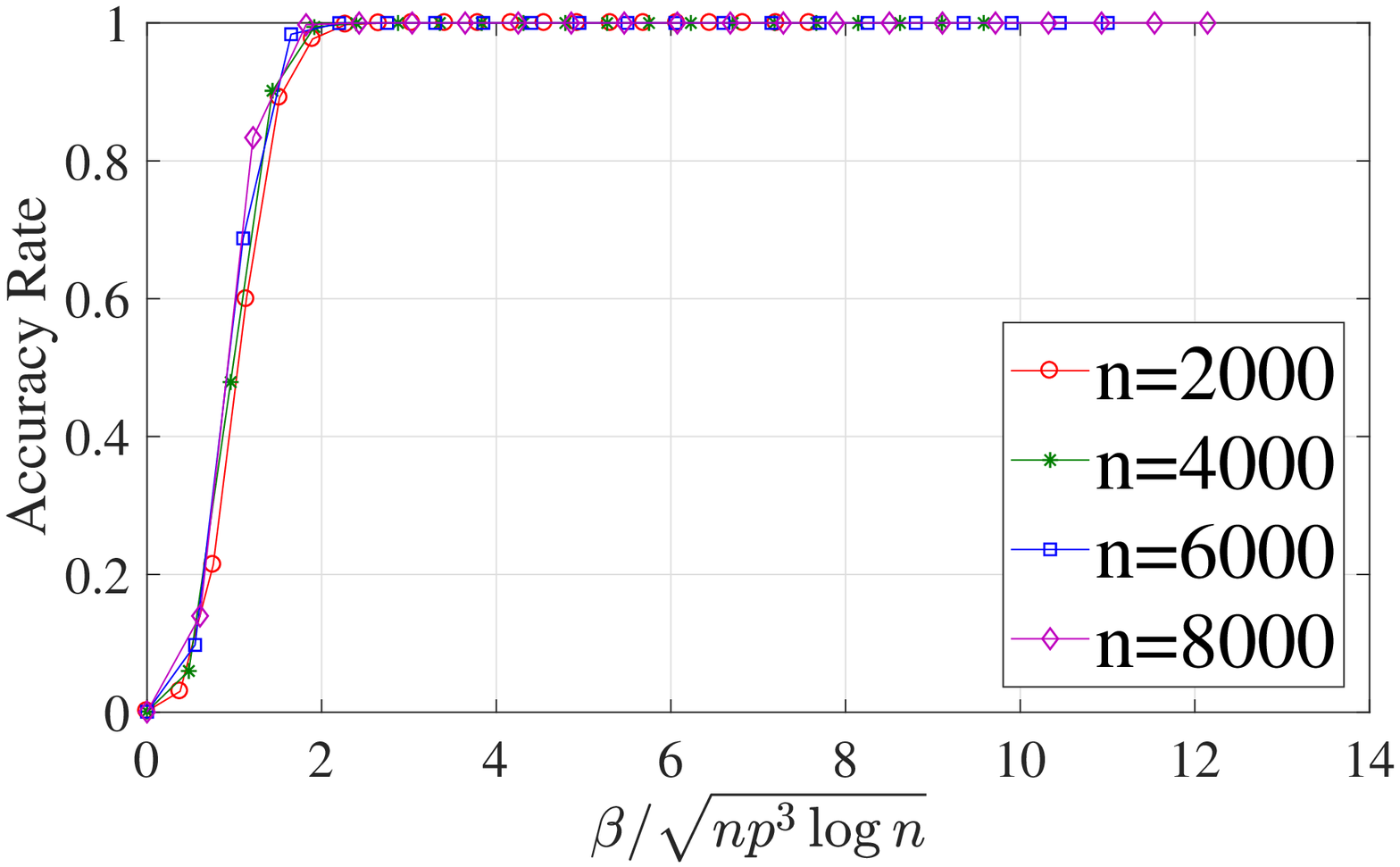}\label{fig:twohop1sub2}}
\caption{The 2-hop algorithm with varying $n$ and $p=n^{-\frac{3}{5}}$. Fix $s=0.8$.}
\label{fig:twohop1}
\end{figure}

\begin{figure}[H]
\centering
\subfigure[x-axis is $\beta$.]{
\includegraphics[scale=0.37]{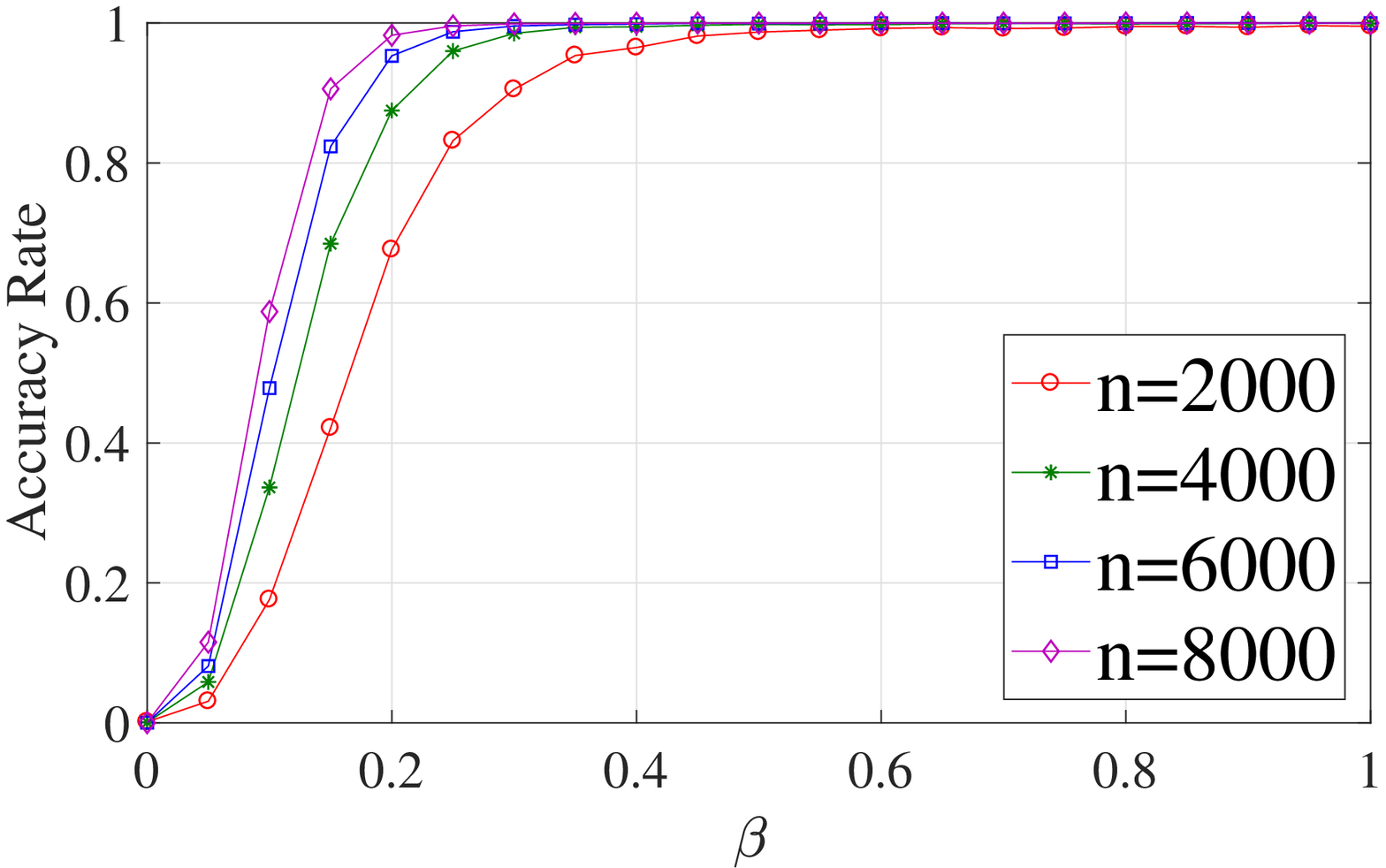}\label{fig:twohop2sub1}}
\subfigure[x-axis is $\beta/\sqrt{\log n/n}$.]{
\includegraphics[scale=0.37]{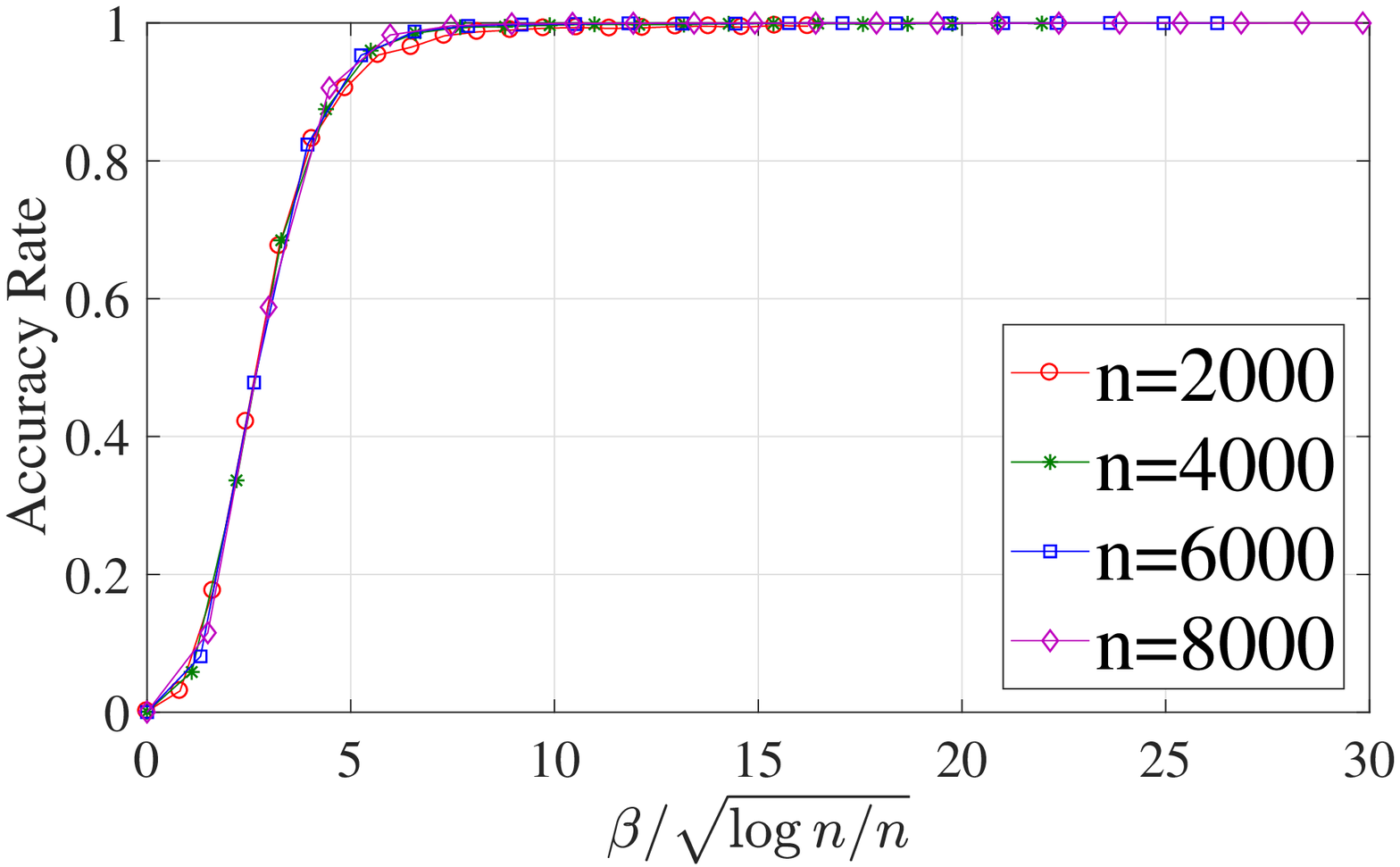}\label{fig:twohop2sub2}}
\caption{The 2-hop algorithm with varying $n$ and $p=n^{-\frac{17}{24}}$. Fix $s=0.8$.}
\label{fig:twohop2}
\end{figure}

\begin{figure}[H]
\centering
\subfigure[x-axis is $\beta$.]{
\includegraphics[scale=0.37]{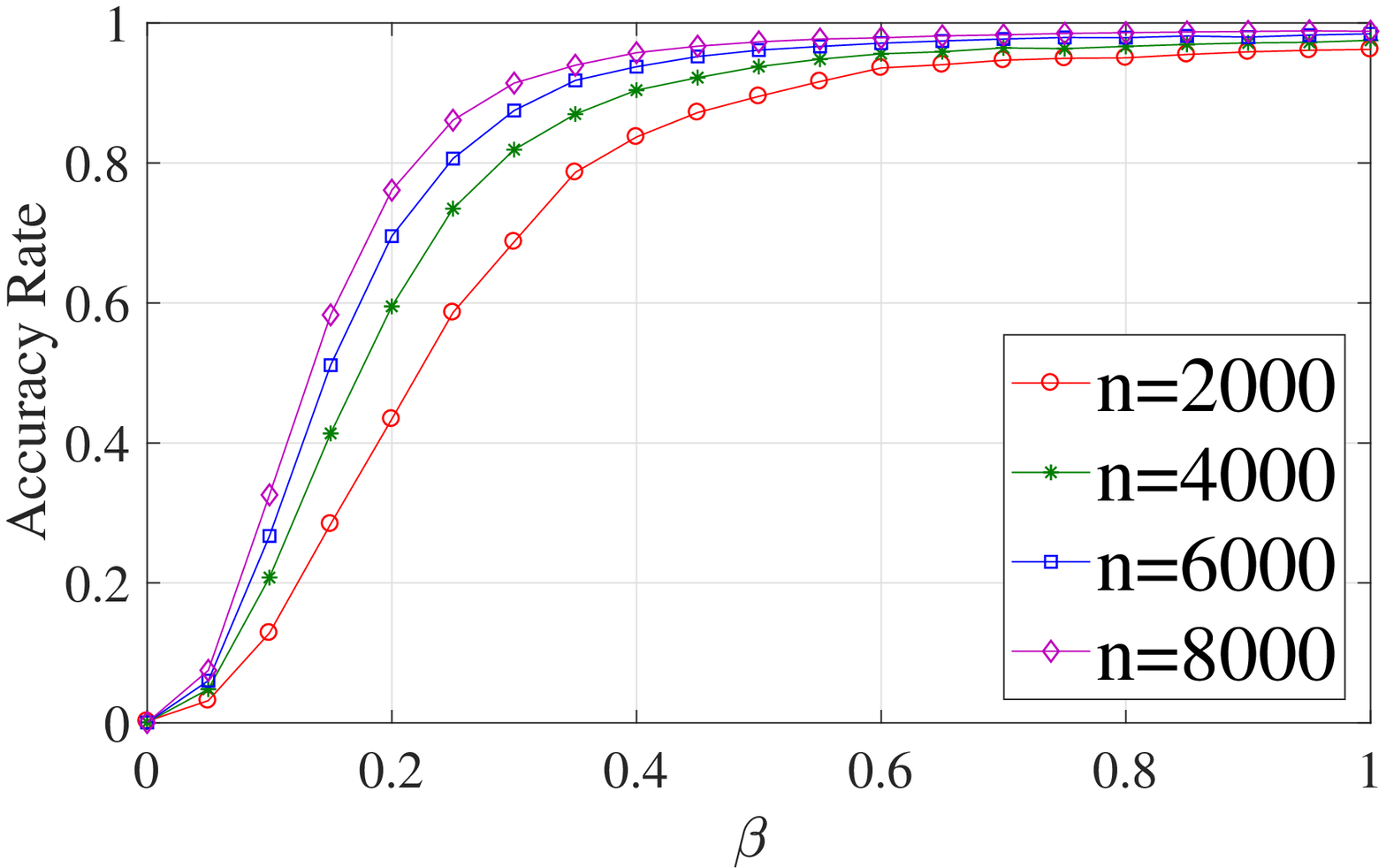}\label{fig:twohop3sub1}}
\subfigure[x-axis is $\beta/(\log n/n^2p^2)$.]{\includegraphics[scale=0.37]{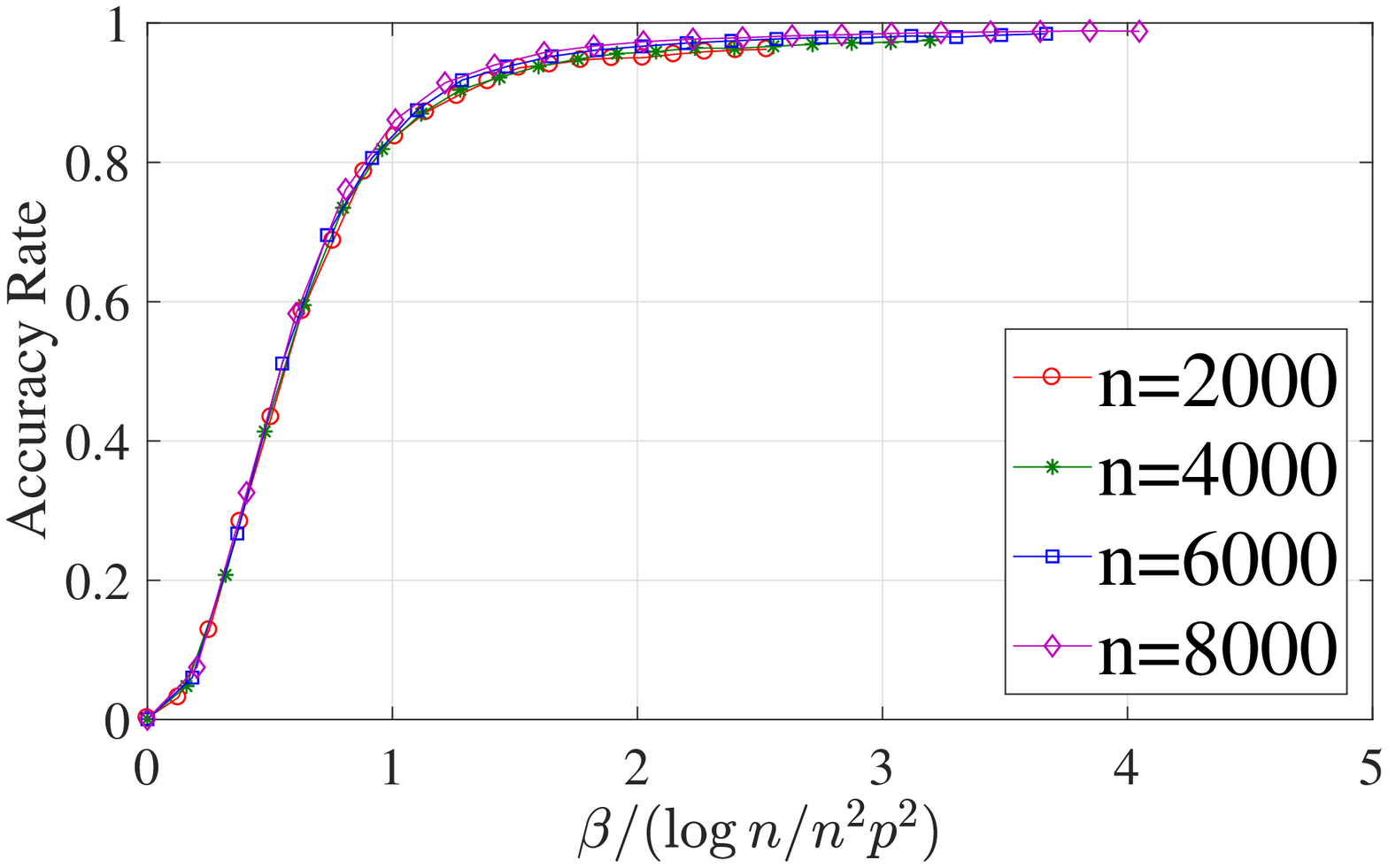}\label{fig:twohop3sub2}}
\caption{The 2-hop algorithm with varying $n$ and $p=n^{-\frac{4}{5}}$. Fix $s=0.8$.}
\label{fig:twohop3}
\end{figure}

In addition, if $s=1$ and $p=n^{-\frac{1}{2}}$, we show in \prettyref{fig:twohop4} that the curves for different $n$ align well when we rescale the x-axis as $\beta/\sqrt{\frac{\log n}{n}}$, but they do not align well with each other when the x-axis is rescaled as $\beta/\sqrt{np^3\log n}$. This result agrees with \prettyref{thm:thm2hop}, demonstrating that condition \prettyref{eq:oldcriteria} derived from the old criteria~\prettyref{eq:2_hop_cond_weak} is not tight.
\begin{figure}[H]
\centering
\subfigure[x-axis is $\beta/\sqrt{\log n/n}$.]{
\includegraphics[scale=0.37]{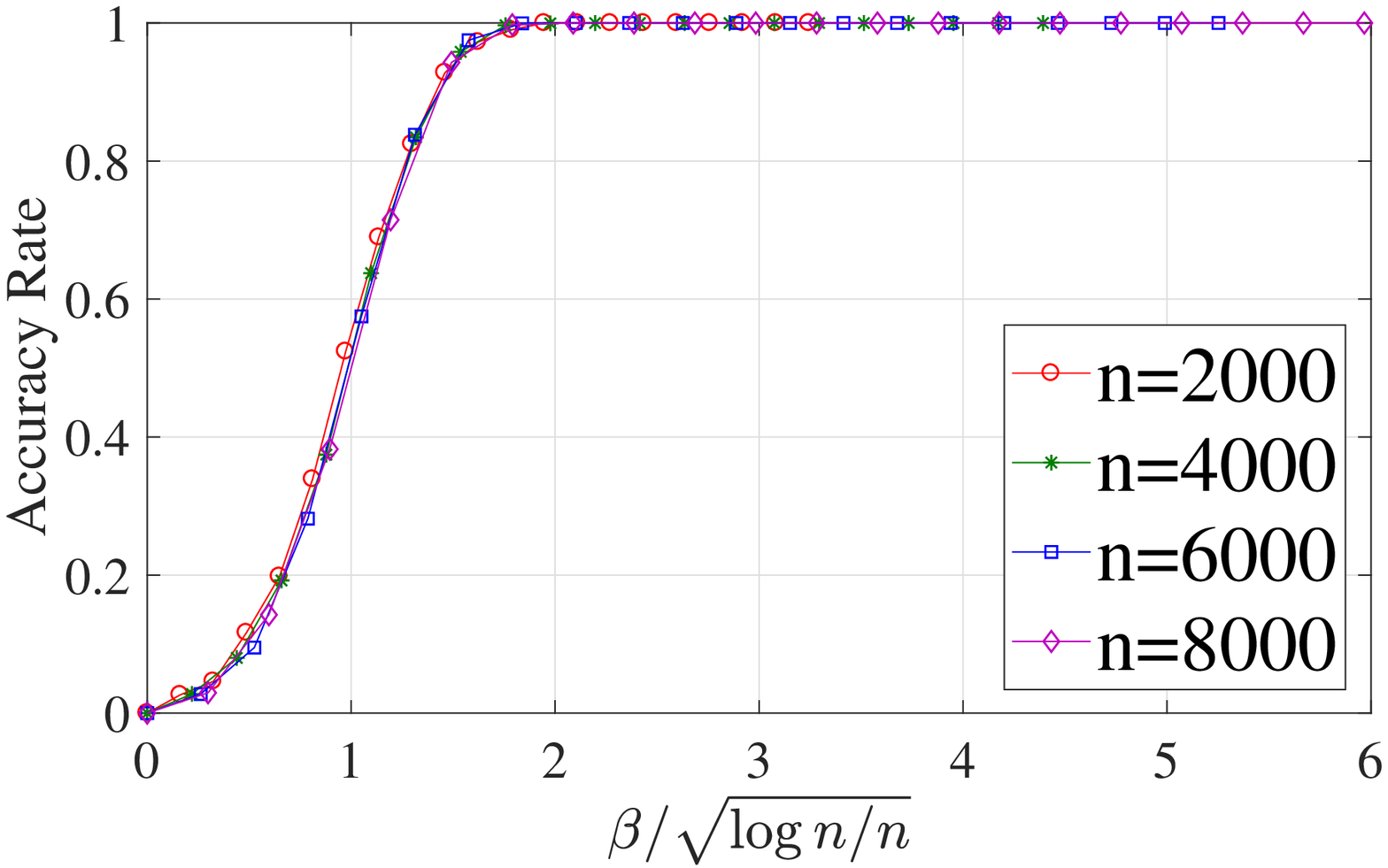}
\label{fig:twohop4sub1}}
\subfigure[x-axis is $\beta/\sqrt{np^3\log n}$.]{
\includegraphics[scale=0.37]{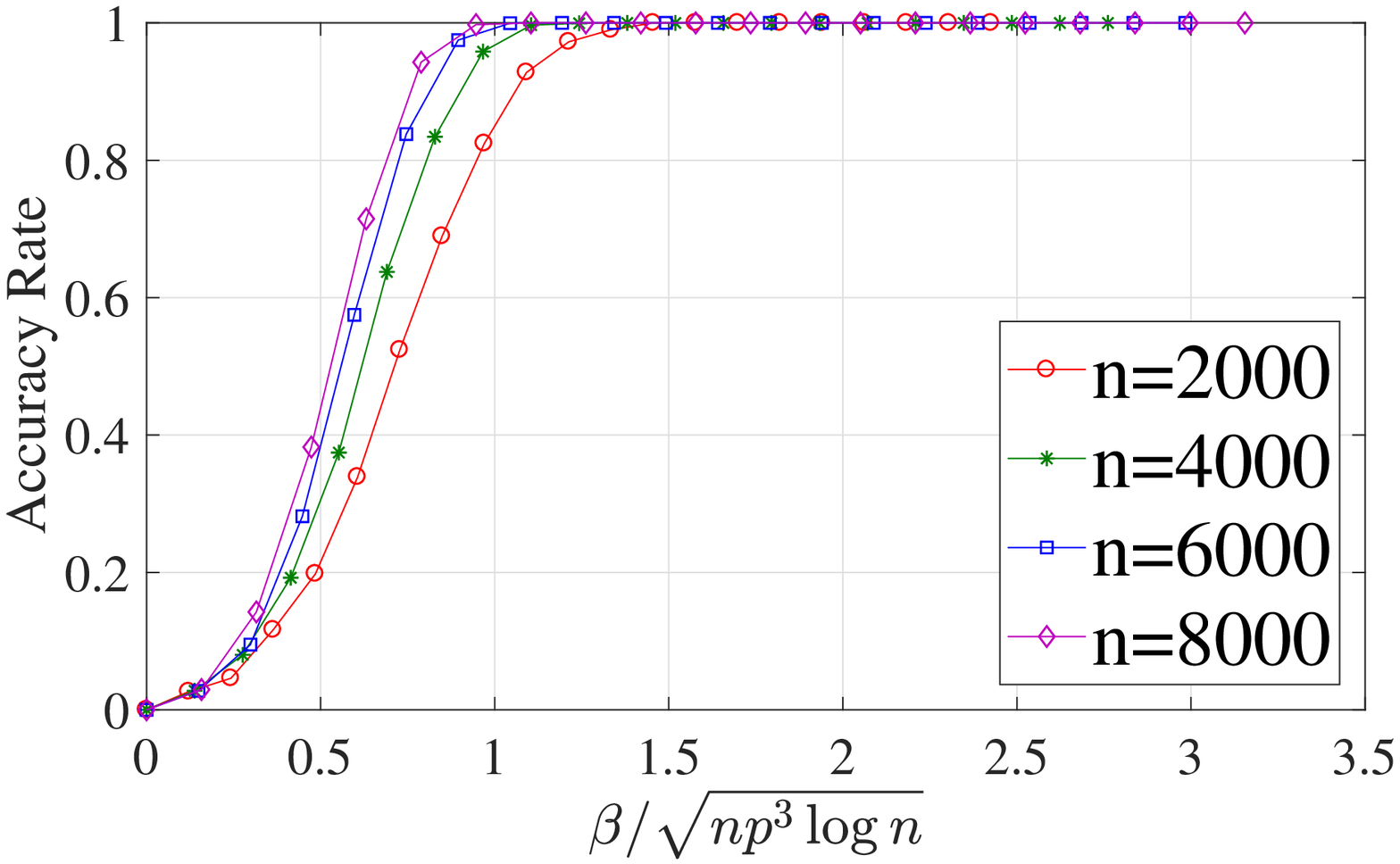}\label{fig:twohop4sub2}}
\caption{The 2-hop algorithm with varying $n$ and $p=n^{-\frac{1}{2}}$. Fix $s=1$.}
\label{fig:twohop4}
\end{figure}

\section{The scalability of our algorithm and feasible parallel implementation }\label{app:scalable}
We may further improve the time complexity of our $2$-hop algorithm by exploiting graph sparsity and parallel computing. 
Recall that the theoretical worst-case computational complexity of our algorithm is $O(n^\omega+n^2 \log n)$ for $2 \le \omega \le 2.373$, where $n^\omega$  denotes the complexity of $n\times n$ matrix multiplication.  
For sparse graphs, the computational complexity of our 2-hop algorithm is comparable to that of the NoisySeeds algorithm in \cite{10.14778/2794367.2794371} and the 1-hop algorithm in \cite{lubars2018correcting}. To see this, note that there are only two differences in the execution of our 2-hop algorithm: (i)
To compute the number of $2$-hop witnesses for all vertex-pairs, for every seed $(w,\pi(w))$, 
our algorithm needs to 
compute the set of 2-hop neighbors of $w$ (resp.\ $\pi(w)$) in $G_1$ (resp.\ $G_2$), which takes $O\left( c^4\right)$ steps, where $c$ denotes the average degree. 
Thus in total it takes $O\left( nc^4 \right)$ steps. Hence, for sparse graphs with small average degree $c$, finding 2-hop witnesses only increases complexity slightly compared to finding 1-hop witnesses;
(ii) We use greedy max weight matching  (GMWM) rather than simple thresholding in \cite{10.14778/2794367.2794371}. As the time complexity of GMWM is $O(n^2 \log n)$ and the thresholding procedure needs to go through all the $n^2$ vertex-pairs, GMWM only incurs an additional $\log n$ factor to time complexity. Thus, the computational complexity of our 2-hop algorithm is comparable to others. 

For very large graphs, one may want to run these algorithms parallelly. Our 2-hop algorithm can be executed in parallel as follows. First, it is easy to turn (i) into parallel implementation. Second, if the complexity of (ii) is an issue, we can instead run the following modification: For each vertex $u$ in $G_1$, matches it to $v$ in $G_2$ that has the largest number of 2-hop witnesses; Output failure if there is any inconsistency in the final matching. This procedure can then be executed across all nodes in $G_1$  (or $G_2$) in parallel. This parallelizable version of the 2-hop algorithm can still provide perfect recovery if criteria \prettyref{eq:2_hop_cond_weak} holds.
This criteria is satisfied with high probability under condition \prettyref{eq:oldcriteria}, as discussed in \prettyref{rmk:2_hop_cond_not_tight}. Hence the parallelizable version of the 2-hop algorithm can achieve perfect recovery under condition \prettyref{eq:oldcriteria}. Thus, we believe our $2$-hop algorithm can scale to very large graphs with strong matching performance.

\section{Preliminary Results}\label{sec:bound}

We first present some useful concentration inequalities
for the sum of independent random variables. 



\begin{theorem}\label{thm:chernoffbound} Chernoff Bound (\cite{Dubhashi2009ConcentrationOM}): Let $X=\sum_{i\in[n]}X_i$, where $X_i$'s are independent random variables taking values in $\{0,1\}$. Then, for $\delta\in (0,1)$,
\begin{align*}
\prob{X\leq (1-\delta)\expect{X}}\leq \exp\left(-\frac{{\delta}^2}{2}\expect{X}\right),\
\prob{X\geq (1+\delta)\expect{X}}\leq \exp\left(-\frac{{\delta}^2}{3}\expect{X}\right).
\end{align*}
\end{theorem}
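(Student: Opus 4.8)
The plan is to prove both tail bounds by the standard exponential-moment (Chernoff) method: bound a one-sided deviation probability by the moment generating function via Markov, factorize using independence, optimize the free parameter, and finally simplify the resulting rate functions with elementary one-variable inequalities. Write $\mu \triangleq \expect{X} = \sum_{i\in[n]} p_i$, where $p_i \triangleq \prob{X_i = 1}$.

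First I would handle the upper tail. For any $t > 0$, Markov's inequality applied to $\rexp{tX}$ gives $\prob{X \ge (1+\delta)\mu} \le \rexp{-t(1+\delta)\mu}\,\expect{\rexp{tX}}$. By independence the moment generating function factorizes, $\expect{\rexp{tX}} = \prod_{i\in[n]} \expect{\rexp{tX_i}}$, and each factor equals $1 + p_i(\rexp{t}-1) \le \rexp{p_i(\rexp{t}-1)}$ by the inequality $1+x \le \rexp{x}$; multiplying over $i$ yields $\expect{\rexp{tX}} \le \rexp{\mu(\rexp{t}-1)}$. Substituting and choosing $t = \ln(1+\delta)$ to minimize the exponent gives $\prob{X \ge (1+\delta)\mu} \le \left(\frac{\rexp{\delta}}{(1+\delta)^{1+\delta}}\right)^{\mu}$. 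The lower tail is symmetric: apply Markov to $\rexp{-tX}$ with $t>0$, use $\expect{\rexp{-tX}} \le \rexp{\mu(\rexp{-t}-1)}$, and set $t = -\ln(1-\delta)$ (positive since $\delta\in(0,1)$) to obtain $\prob{X \le (1-\delta)\mu} \le \left(\frac{\rexp{-\delta}}{(1-\delta)^{1-\delta}}\right)^{\mu}$.

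The only remaining step is to convert these exact rate functions into the clean quadratic exponents claimed, and this is where the genuine (albeit routine) work lies. For the lower tail I would expand $\ln(1-\delta) = -\sum_{k\ge 1}\delta^k/k$ to get the telescoping identity $\delta + (1-\delta)\ln(1-\delta) = \sum_{k\ge 2}\frac{\delta^k}{k(k-1)} \ge \frac{\delta^2}{2}$, which after exponentiation yields exactly $\rexp{-\delta^2\mu/2}$. For the upper tail the same expansion with $\delta \mapsto -\delta$ gives $(1+\delta)\ln(1+\delta) - \delta = \sum_{k\ge 2}\frac{(-1)^k\delta^k}{k(k-1)} \ge \frac{\delta^2}{2} - \frac{\delta^3}{6} = \frac{\delta^2}{2}\left(1 - \frac{\delta}{3}\right) \ge \frac{\delta^2}{3}$ for $\delta\in(0,1)$, where one checks that the discarded higher-order terms pair up into nonnegative blocks on $(0,1)$; exponentiating gives $\rexp{-\delta^2\mu/3}$.

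I expect the main (and essentially only) obstacle to be verifying these two analytic inequalities on $(0,1)$, since the probabilistic content is entirely standard. The upper-tail inequality is the more delicate of the two because of the alternating-sign series and the fact that the constant degrades from $\tfrac12$ to $\tfrac13$; pairing consecutive terms, or alternatively invoking the known bound $(1+\delta)\ln(1+\delta)-\delta \ge \frac{\delta^2}{2+\delta}$ together with $2+\delta \le 3$ on $(0,1)$, resolves it cleanly. Everything else is a mechanical application of Markov, independence, and the elementary bound $1+x\le \rexp{x}$.
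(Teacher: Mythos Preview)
Your proof is correct and is the standard exponential-moment argument for the Chernoff bound. Note, however, that the paper does not supply its own proof of this statement: the theorem is quoted directly from \cite{Dubhashi2009ConcentrationOM} as a preliminary tool, so there is no paper-side argument to compare against.
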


As a corollary of \prettyref{thm:chernoffbound},
we obtain the following lemma, which will be
useful for the proofs of \prettyref{lmm:Ruv} and \prettyref{lmm:Tuv}.
\begin{lemma}\label{lmm:bound}
Let $X$ denote a random variable such that $X\sim\Binom (n-1, \alpha)$. If $\alpha\in\left[ ps^2,1\right)$ and $nps^2\ge 128\log n$, then
\begin{align*}
    \prob{X\leq (1-\epsilon)(n-1)\alpha}\leq n^{-6},\ 
    \prob{X\geq (1+\epsilon)(n-1)\alpha}\leq n^{-4},
\end{align*}
where $\epsilon$ is given in \prettyref{eq:epsilon}, \ie, $\epsilon=\sqrt{\frac{12\log n}{(n-1)ps^2}}\leq\frac{1}{3}$.
\end{lemma}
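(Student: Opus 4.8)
The plan is to invoke the Chernoff bound of \prettyref{thm:chernoffbound} with deviation parameter $\delta=\epsilon$, exploiting the fact that the resulting decay exponents are controlled by the \emph{smallest} admissible mean. Since $X\sim\Binom(n-1,\alpha)$ has mean exactly $\expect{X}=(n-1)\alpha$, the two thresholds $(1-\epsilon)(n-1)\alpha$ and $(1+\epsilon)(n-1)\alpha$ coincide with $(1\mp\epsilon)\expect{X}$, so \prettyref{thm:chernoffbound} applies verbatim once we check $\epsilon\in(0,1)$. The latter follows from the hypothesis $nps^2\ge 128\log n$: for all sufficiently large $n$ one has $(n-1)ps^2\ge\frac{n-1}{n}\cdot 128\log n\ge 108\log n$, whence $\epsilon^2=\frac{12\log n}{(n-1)ps^2}\le\frac19$, confirming the bound $\epsilon\le\frac13$ recorded in \prettyref{eq:epsilon}.

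For the lower tail, \prettyref{thm:chernoffbound} gives $\prob{X\le(1-\epsilon)\expect{X}}\le\exp\!\left(-\frac{\epsilon^2}{2}\expect{X}\right)$. The key step is to bound the exponent from below using $\alpha\ge ps^2$: since $\expect{X}=(n-1)\alpha\ge(n-1)ps^2$, and $\epsilon$ is defined through $ps^2$ rather than through $\alpha$, the factor $(n-1)ps^2$ cancels the denominator of $\epsilon^2$ exactly, yielding $\frac{\epsilon^2}{2}\expect{X}\ge\frac{\epsilon^2}{2}(n-1)ps^2=6\log n$. Hence $\prob{X\le(1-\epsilon)(n-1)\alpha}\le e^{-6\log n}=n^{-6}$. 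The upper tail is identical: $\frac{\epsilon^2}{3}\expect{X}\ge\frac{\epsilon^2}{3}(n-1)ps^2=4\log n$, so $\prob{X\ge(1+\epsilon)(n-1)\alpha}\le e^{-4\log n}=n^{-4}$.

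There is no serious obstacle here; the one point that requires care is the direction of the monotonicity in $\alpha$. Because $\epsilon$ is held fixed (defined via $ps^2$) while the Chernoff exponent scales with the true mean $(n-1)\alpha$, larger values of $\alpha$ only enlarge the exponent and shrink the tail bound. Thus evaluating the exponent at the worst case $\alpha=ps^2$ produces a bound valid uniformly over all $\alpha\in[ps^2,1)$, which is precisely the form needed for the downstream applications in \prettyref{lmm:Ruv} and \prettyref{lmm:Tuv}.
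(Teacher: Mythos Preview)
Your proposal is correct and follows essentially the same approach as the paper: both invoke the Chernoff bound of \prettyref{thm:chernoffbound} with $\delta=\epsilon$, use $\alpha\ge ps^2$ to lower-bound the exponent $\epsilon^2(n-1)\alpha$ by $\epsilon^2(n-1)ps^2=12\log n$, and then divide by $2$ and $3$ for the lower and upper tails respectively. Your write-up adds a bit more justification for $\epsilon\le\frac13$ and an explicit remark on the monotonicity in $\alpha$, but the argument is identical in substance.
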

\begin{proof}
Since $X\sim $ Binom $(n-1, \alpha)$ and $\epsilon=\sqrt{\frac{12\log n}{(n-1)ps^2}}<\frac{1}{3}$, applying Chernoff bound in \prettyref{thm:chernoffbound} and using 
$\alpha \ge ps^2$ yields
\begin{align*}
\prob{X\leq (1-\epsilon)(n-1)\alpha}& \leq \exp \left(-\frac{\epsilon^2 (n-1)\alpha}{2}\right)
\leq n^{-6},\\
\prob{X\leq (1+\epsilon)(n-1)\alpha}&\leq \exp \left(-\frac{\epsilon^2 (n-1)\alpha}{3}\right)
\leq n^{-4}.
\end{align*}
\end{proof}

\begin{theorem}\label{thm:bernstein} Bernstein's Inequality (\cite{Dubhashi2009ConcentrationOM}):
Let $X=\sum_{i\in[n]}X_i$, where $X_i$'s are independent random variables such that $|X_i|\leq K$ almost surely. Then, for $t> 0$, we have
\begin{align*}
\prob{X\geq \expect{X}+t}\leq \exp\left(-\frac{{t}^2}{2(\sigma^2+Kt/3)}\right),
\end{align*}
where $\sigma ^2= \sum_{i \in [n]} \var(X_i)$ is the variance of $X$. It follows then for $\gamma>0$, we have
\begin{align*}
\prob{X\geq \expect{X}+\sqrt{2\sigma^2\gamma}+\frac{2K\gamma}{3}}\leq \exp(-\gamma).
\end{align*}

Similarly, by considering $-X,$ it follows that
\begin{align*}
\prob{X\leq \expect{X}-\sqrt{2\sigma^2\gamma}-\frac{2K\gamma}{3}}\leq \exp(-\gamma).
\end{align*}

\end{theorem}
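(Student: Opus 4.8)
The plan is to establish the upper-tail inequality by the classical exponential-moment (Chernoff) method, and then derive the two $\gamma$-parametrized forms by a direct substitution and a sign flip. First I would center the summands, writing $Y_i\triangleq X_i-\expect{X_i}$, so that $\expect{Y_i}=0$, $\sum_{i}\var(Y_i)=\sigma^2$, and $|Y_i|\le K$ almost surely (for the nonnegative, $K$-bounded variables to which this theorem is applied in the paper, $|X_i-\expect{X_i}|\le K$ follows immediately from $0\le X_i\le K$). For any $\lambda>0$, Markov's inequality applied to $e^{\lambda(X-\expect{X})}$ together with independence gives
\[
\prob{X\ge \expect{X}+t}\le e^{-\lambda t}\,\expect{e^{\lambda\sum_{i\in[n]} Y_i}}=e^{-\lambda t}\prod_{i\in[n]}\expect{e^{\lambda Y_i}}.
\]

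The crux is a clean bound on each centered moment generating factor. The key step I would carry out is the moment estimate $\expect{Y_i^k}\le\expect{|Y_i|^k}\le K^{k-2}\var(Y_i)$ for every $k\ge 2$ (using $|Y_i|\le K$), combined with the elementary factorial inequality $k!\ge 2\cdot 3^{k-2}$. Expanding the exponential and using $\expect{Y_i}=0$ then yields, for $0<\lambda<3/K$,
\[
\expect{e^{\lambda Y_i}}\le 1+\sum_{k\ge 2}\frac{\lambda^k K^{k-2}\var(Y_i)}{k!}\le 1+\frac{\lambda^2\var(Y_i)}{2}\sum_{k\ge 2}\Big(\frac{\lambda K}{3}\Big)^{k-2}=1+\frac{\lambda^2\var(Y_i)}{2(1-\lambda K/3)}.
\]
Applying $1+x\le e^x$ and taking the product over $i$ gives $\expect{e^{\lambda(X-\expect{X})}}\le \exp\!\big(\tfrac{\lambda^2\sigma^2}{2(1-\lambda K/3)}\big)$, and hence $\prob{X\ge\expect{X}+t}\le \exp\!\big(-\lambda t+\tfrac{\lambda^2\sigma^2}{2(1-\lambda K/3)}\big)$.

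It then remains to choose $\lambda$ and repackage. Taking $\lambda=\frac{t}{\sigma^2+Kt/3}$, which lies in $(0,3/K)$, one finds $1-\lambda K/3=\frac{\sigma^2}{\sigma^2+Kt/3}$, and substituting collapses the exponent to exactly $-\frac{t^2}{2(\sigma^2+Kt/3)}$, giving the first displayed bound. For the second form I would plug in $t=\sqrt{2\sigma^2\gamma}+\tfrac{2K\gamma}{3}$ and verify the exponent is at least $\gamma$; this amounts to checking $t^2-\tfrac{2K\gamma}{3}t-2\sigma^2\gamma\ge 0$, which holds because the positive root of the quadratic is $\tfrac{K\gamma}{3}+\sqrt{\tfrac{K^2\gamma^2}{9}+2\sigma^2\gamma}\le \tfrac{2K\gamma}{3}+\sqrt{2\sigma^2\gamma}=t$ by subadditivity of the square root. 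Finally, the lower-tail statement follows by applying the upper-tail bound to $-X$, whose summands $-X_i$ satisfy the same hypotheses with the same $\sigma^2$ and $K$.

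The main obstacle, insofar as there is one for this classical inequality, is the per-summand moment generating bound: obtaining the constant $3$ in the denominator (rather than a cruder constant) relies on the sharp factorial inequality $k!\ge 2\cdot 3^{k-2}$ and the moment bound $\expect{|Y_i|^k}\le K^{k-2}\var(Y_i)$, and the geometric resummation must be tracked so that the subsequent choice of $\lambda$ reproduces the stated denominator $\sigma^2+Kt/3$ exactly. Everything else is routine bookkeeping.
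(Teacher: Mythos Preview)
The paper does not supply a proof of this theorem; it is stated as a preliminary result and attributed to the cited textbook. Your argument is the classical exponential-moment (Chernoff) derivation of Bernstein's inequality, and the substitution $t=\sqrt{2\sigma^2\gamma}+\tfrac{2K\gamma}{3}$ together with the sign flip for the lower tail are exactly how the $\gamma$-parametrized corollaries are obtained, so there is nothing to compare.

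One small remark worth recording: under the hypothesis $|X_i|\le K$ as stated (rather than $|X_i-\expect{X_i}|\le K$), the centered variable $Y_i=X_i-\expect{X_i}$ satisfies only $|Y_i|\le 2K$ in general, so your moment bound $\expect{|Y_i|^k}\le K^{k-2}\var(Y_i)$ does not follow without further input. You already flagged this and observed that in every application in the paper the summands lie in $[0,K]$, whence $|Y_i|\le K$ does hold and your argument goes through verbatim; that caveat is the right way to handle it.
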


\begin{corollary}\label{lmm:Bernsteinbound}
Let $X$ denote a random variable such that $X\sim \Binom (n, \alpha)$. If $n\in[n_{\min},n_{
\max}]$, then for $\gamma>0$,
\begin{align}
  \prob{X\leq n_{\min}\alpha-\sqrt{2n_{\max}\alpha\gamma}-\frac{2\gamma}{3}} & \leq \exp(-\gamma), \label{eq:binom_low_tail}\\
    \prob{X\geq n_{\max}\alpha+\sqrt{2n_{\max}\alpha\gamma}+\frac{2\gamma}{3}} & \leq \exp(-\gamma). \label{eq:binom_upp_tail}
\end{align}
Moreover, 
\begin{align}
\prob{X\geq 2n_{\max}\alpha+\frac{4\gamma}{3}} & \leq \exp(-\gamma)
\label{eq:binom_upp_tail_2}
\end{align}
\end{corollary}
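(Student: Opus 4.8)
The plan is to invoke Bernstein's inequality (Theorem~\ref{thm:bernstein}) directly on $X=\sum_{i=1}^{n} X_i$ with $X_i\iiddistr\Bern(\alpha)$, and then enlarge the deviation thresholds so that they depend only on $n_{\min}$ and $n_{\max}$ rather than on the exact value of $n$. Since $X_i\in\{0,1\}$ we may take $K=1$ in Theorem~\ref{thm:bernstein}; moreover $\Expect[X]=n\alpha$ and $\sigma^2=\var(X)=n\alpha(1-\alpha)\le n\alpha\le n_{\max}\alpha$. These two observations are all the structural input needed.

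For the lower tail~\prettyref{eq:binom_low_tail}, I would first apply the lower-tail form of Theorem~\ref{thm:bernstein} to get $\prob{X\le \Expect[X]-\sqrt{2\sigma^2\gamma}-\tfrac{2\gamma}{3}}\le e^{-\gamma}$. The key step is then to verify that the deterministic threshold stated in~\prettyref{eq:binom_low_tail} is no larger than the Bernstein threshold: using $\Expect[X]=n\alpha\ge n_{\min}\alpha$ (as $n\ge n_{\min}$) and $\sqrt{2\sigma^2\gamma}\le\sqrt{2n_{\max}\alpha\gamma}$ (as $\sigma^2\le n_{\max}\alpha$), we have
\[
n_{\min}\alpha-\sqrt{2n_{\max}\alpha\gamma}-\frac{2\gamma}{3}\le \Expect[X]-\sqrt{2\sigma^2\gamma}-\frac{2\gamma}{3}.
\]
Thus the event in~\prettyref{eq:binom_low_tail} is contained in the Bernstein event, and monotonicity of probability gives the bound. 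The upper tail~\prettyref{eq:binom_upp_tail} is handled symmetrically: from $\Expect[X]=n\alpha\le n_{\max}\alpha$ and the same variance bound, the stated threshold $n_{\max}\alpha+\sqrt{2n_{\max}\alpha\gamma}+\tfrac{2\gamma}{3}$ dominates the Bernstein upper threshold, so the identical containment argument applies.

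Finally, for~\prettyref{eq:binom_upp_tail_2} I would relax the square-root term in~\prettyref{eq:binom_upp_tail} by AM--GM: taking $x=2n_{\max}\alpha$ and $y=\gamma$,
\[
\sqrt{2n_{\max}\alpha\gamma}=\sqrt{xy}\le\frac{x+y}{2}=n_{\max}\alpha+\frac{\gamma}{2},
\]
whence $n_{\max}\alpha+\sqrt{2n_{\max}\alpha\gamma}+\tfrac{2\gamma}{3}\le 2n_{\max}\alpha+\tfrac{7\gamma}{6}\le 2n_{\max}\alpha+\tfrac{4\gamma}{3}$, and~\prettyref{eq:binom_upp_tail_2} follows from~\prettyref{eq:binom_upp_tail} by monotonicity. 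There is no genuine obstacle in this argument; the only care required is to track the direction of every inequality so that each relaxation moves the threshold so as to \emph{enlarge} the tail event under consideration rather than shrink it, which is precisely what makes the containments (and hence the probability bounds) go through.
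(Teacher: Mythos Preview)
Your proposal is correct and matches the paper's own proof essentially line for line: the paper also invokes Theorem~\ref{thm:bernstein} with $K=1$ and $\sigma^2=n\alpha(1-\alpha)$, uses $n\in[n_{\min},n_{\max}]$ to relax the thresholds, and then derives~\prettyref{eq:binom_upp_tail_2} from~\prettyref{eq:binom_upp_tail} via the AM--GM inequality $2\sqrt{ab}\le a+b$.
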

\begin{proof}
The proof of \prettyref{eq:binom_low_tail}
and \prettyref{eq:binom_upp_tail} follows by invoking
\prettyref{thm:bernstein} with $\sigma^2=n\alpha(1-\alpha)$ and $K=1$ and using
the assumption that $n\in[n_{\min},n_{
\max}]$. In view of $2\sqrt{ab} \le a+b$, \prettyref{eq:binom_upp_tail_2}
follows from \prettyref{eq:binom_upp_tail}.
\end{proof}

Next, we present a concentration inequality for the sum of dependent random variables. To this end, we first introduce the notion of dependency graph.
\begin{definition}
Given random variables $\{X_i\}_{i\in[n]}$, the dependency graph is a graph $\Gamma$ with vertex set $[n]$ such that if $i\in [n]$ is not connected by an edge to any vertex in $\mathcal{J}\subset [n]$, then $X_i$ is independent of $\{X_j\}_{j\in \mathcal{J}}$.
\end{definition}

\begin{theorem}\label{thm:drv} (\cite{Janson2004LargeDF}) Let $X=\sum_{i\in[n]}X_i$, where $X_i$'s are random variables such that $X_i-\expect{X_i}\leq K$ for some $K>0$. Let $\Gamma$ denote a dependency graph for $\{X_i\}$ and $\Delta(\Gamma)$ denote the maximum degree of $\Gamma$. Let $\sigma^2=\sum_{i\in [n]}\var (X_i)$. Then, for $t\geq0$, 
\begin{align*}
\prob{X\geq \expect{X} +t}\leq \exp\left(-\frac{8{t}^2}{25\Delta_1(\Gamma)(\sigma^2+Kt/3)}\right),
\end{align*}
where $\Delta_1(\Gamma)= \Delta(\Gamma)+1$.  It follows then for $\gamma>0$, we have
\begin{align*}
\prob{X\geq \expect{X} +\sqrt{\frac{25\Delta_1(\Gamma)}{8}\sigma^2\gamma}+\frac{25\Delta_1(\Gamma)K\gamma}{24}}\leq \exp(-\gamma).
\end{align*}

If the assumption $X_i-\expect{X_i}\leq K$ is reversed to $X_i-\expect{X_i}\geq -K$, 
then by considering $-X$, it follows that
\begin{align*}
\prob{X\leq \expect{X} -\sqrt{\frac{25\Delta_1(\Gamma)}{8}\sigma^2\gamma}-\frac{25\Delta_1(\Gamma)K\gamma}{24}}\leq\exp(-\gamma).
\end{align*}
\end{theorem}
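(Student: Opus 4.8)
The substantive content of \prettyref{thm:drv} is the first displayed inequality, which is precisely the Bernstein-type large-deviation bound of \cite{Janson2004LargeDF} for sums whose dependency structure is captured by a graph $\Gamma$; I would invoke it directly rather than reprove it. For orientation, its proof proceeds by covering $[n]$ with independent sets of $\Gamma$: a proper fractional coloring uses at most $\chi_f(\Gamma)\le \Delta(\Gamma)+1=\Delta_1(\Gamma)$ colors, and a H\"older-type argument then bounds the moment generating function $\expect{e^{t(X-\expect{X})}}$ by a weighted product of the moment generating functions of the subsums over the individual independent sets. Each such subsum is a sum of \emph{independent} variables, to which the classical Bernstein estimate applies; optimizing over $t$ produces the exponent $-\frac{8t^2}{25\Delta_1(\Gamma)(\sigma^2+Kt/3)}$, with the factor $\Delta_1(\Gamma)$ accounting for the cost of the fractional cover.

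Granting the first display, the clean exponential tail in the second display follows by the same Bernstein substitution used to pass between the two forms of \prettyref{thm:bernstein}. Writing the exponent's denominator as $2A(\sigma^2+Kt/3)$ with $A=\frac{25\Delta_1(\Gamma)}{16}$, I would set $t=\sqrt{2A\sigma^2\gamma}+\frac{2AK\gamma}{3}=\sqrt{\frac{25\Delta_1(\Gamma)}{8}\sigma^2\gamma}+\frac{25\Delta_1(\Gamma)K\gamma}{24}$, which is exactly the threshold appearing in the statement. Abbreviating $u=\sqrt{2A\sigma^2\gamma}$ and $w=\frac{2AK\gamma}{3}$ so that $t=u+w$, one computes $t^2-2A\gamma(\sigma^2+Kt/3)=uw\ge 0$, i.e.\ a single nonnegative cross term survives; hence $t^2\ge 2A\gamma(\sigma^2+Kt/3)$ and the exponent is at most $-\gamma$, yielding $\prob{X\ge \expect{X}+t}\le e^{-\gamma}$.

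Finally, the lower-tail bound follows by applying the upper-tail result to $-X=\sum_{i\in[n]}(-X_i)$. Under the reversed hypothesis $X_i-\expect{X_i}\ge -K$ we have $(-X_i)-\expect{-X_i}=-(X_i-\expect{X_i})\le K$, so the boundedness condition holds for $\{-X_i\}$ with the same constant $K$; moreover $\{-X_i\}$ has the identical dependency graph $\Gamma$ (hence the same $\Delta_1(\Gamma)$) and $\var(-X_i)=\var(X_i)$, so $\sigma^2$ is unchanged. Applying the second display to $-X$ and negating gives the stated lower tail. The only genuinely nontrivial ingredient is the first display, which I take from \cite{Janson2004LargeDF}; the remaining two displays are routine substitutions, so I anticipate no real obstacle in the self-contained portion of the argument.
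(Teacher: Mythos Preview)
Your proposal is correct and matches the paper's own treatment: the paper does not prove \prettyref{thm:drv} at all but simply cites the first display from \cite{Janson2004LargeDF} and asserts that the second and third displays follow (the statement itself reads ``It follows then for $\gamma>0$'' and ``by considering $-X$, it follows that''). Your substitution $t=u+w$ with the cross-term computation $t^2-2A\gamma(\sigma^2+Kt/3)=uw\ge 0$ is exactly the routine verification the paper leaves implicit, and your reduction of the lower tail to the upper tail via $-X$ is precisely what the phrase ``by considering $-X$'' indicates.
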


Finally, we will repeatedly use the following simple inequality. 

\begin{theorem}\label{thm:BernoulliInequality}
For $r\geq 0$, every real number $x\in (0,1)$ and $rx\leq 1$, it holds that
\begin{align*}
r\log{(1-x)}\leq \log\left(1-\frac{rx}{2} \right).
\end{align*}
\end{theorem}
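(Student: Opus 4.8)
The plan is to exponentiate both sides and reduce the claim to the pointwise bound $(1-x)^r \le 1 - \frac{rx}{2}$. Since $x\in(0,1)$ and $rx\le 1$, both $(1-x)^r$ and $1-\frac{rx}{2}$ are strictly positive, and the logarithm is increasing; hence it suffices to establish $(1-x)^r \le 1 - \frac{rx}{2}$, after which taking logarithms of both sides recovers the stated inequality $r\log(1-x)\le \log\!\left(1-\frac{rx}{2}\right)$. The case $r=0$ is trivial, as both sides vanish, so I would assume $r>0$ throughout.

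To prove $(1-x)^r \le 1 - \frac{rx}{2}$, I would first invoke the elementary bound $1-x\le e^{-x}$, valid for all real $x$. Raising both sides to the power $r\ge 0$ preserves the inequality, because $y\mapsto y^r$ is nondecreasing on $[0,\infty)$ and both $1-x$ and $e^{-x}$ are positive; this yields $(1-x)^r \le e^{-rx}$. It then remains to show $e^{-rx}\le 1-\frac{rx}{2}$. Setting $t=rx$, the constraint $rx\le 1$ together with $r,x>0$ places $t$ in $(0,1]$, so the problem reduces to the single-variable inequality $e^{-t}\le 1-\frac{t}{2}$ for $t\in[0,1]$.

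I would prove this last inequality by analyzing $g(t)=1-\frac{t}{2}-e^{-t}$. One has $g(0)=0$ and $g'(t)=e^{-t}-\frac{1}{2}$, so $g$ is increasing on $[0,\log 2]$ and decreasing on $[\log 2,1]$. Consequently $g\ge 0$ on $[0,\log 2]$ (since $g(0)=0$ and $g$ is increasing there), while on $[\log 2,1]$ the function is decreasing and hence attains its minimum at the endpoint $t=1$, where $g(1)=\frac{1}{2}-e^{-1}>0$. Thus $g(t)\ge 0$ throughout $[0,1]$, giving $e^{-t}\le 1-\frac{t}{2}$. Chaining the two bounds gives $(1-x)^r \le e^{-rx}\le 1-\frac{rx}{2}$, and taking logarithms completes the argument. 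The only genuinely nontrivial point is this single-variable estimate; note that $1-\frac{t}{2}$ turns negative for $t>2$ whereas $e^{-t}$ stays positive, so an upper constraint on $t=rx$ is essential, which is precisely the role played by the hypothesis $rx\le 1$.
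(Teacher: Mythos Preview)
Your proof is correct, but it takes a different route from the paper. The paper argues directly on the logarithmic form: it sets $f(x)=r\log(1-x)-\log\bigl(1-\tfrac{rx}{2}\bigr)$, notes $f(0)=0$, and computes $f'(x)=\dfrac{r(rx-x-1)}{(2-rx)(1-x)}$, which is nonpositive since $rx\le 1$ forces $rx-x-1\le -x<0$ while the denominator is positive. By contrast you exponentiate, insert the intermediate bound $(1-x)^r\le e^{-rx}$, and then reduce everything to the single-variable estimate $e^{-t}\le 1-\tfrac{t}{2}$ on $[0,1]$, proved by analyzing $g(t)=1-\tfrac{t}{2}-e^{-t}$. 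The paper's argument is shorter---one derivative computation settles it---whereas yours involves two separate inequalities and a second calculus step. What your approach buys is a clean decoupling: by substituting $t=rx$ you isolate exactly why the hypothesis $rx\le 1$ is needed (the bound $e^{-t}\le 1-\tfrac{t}{2}$ fails once $t>2$), making the sharpness of the constraint more transparent than in the paper's derivative sign check.
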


\begin{proof}
Set 
$
f(x)=r\log(1-x) - \log\left(1-\frac{rx}{2} \right),
$
Then  $f(0)=0$ and $f'(x) =\frac{r(rx-x-1)}{(2-rx)(1-x)} \le 0$.
Thus $f(x) \le 0$, completing the proof.
\end{proof}


\section{Postponed Proofs for \prettyref{thm:thm1hop}}

\subsection{Proof of \prettyref{lmm:W1uv}}\label{sec:proof1hopl2}Recall that $A_1$ and $B_1$ are the adjacency matrix for $G_1$ and $G_2$, respectively. 
By the definition of $1$-hop witness, we have 
\begin{equation}
    \begin{aligned}
    W_1(u,v)=\sum_{i\in [n]\setminus\{u,\pi^{-1}(v)\}} A_1(u,i)B_1(v,\pi(i)).  
    \end{aligned}
\end{equation}
Let $Z_i \triangleq A_1(u,i)B_1(v,\pi(i))$. 
Note that $Z_{v}$ is dependent on $Z_{\pi^{-1}(u)}$. Thus, we exclude these two seeds and consider the remaining seeds.
For all $i \in [n]\setminus\{u,v,\pi^{-1}(u),\pi^{-1}(v)\}$, $Z_i \iiddistr \Bern(p^2s^2)$. It follows that
\begin{equation}\label{eq:W1-fake-z}
\begin{aligned}
\prob{\sum_{i\in [n]\setminus\{u,v,\pi^{-1}(u),\pi^{-1}(v)\}} Z_i \ge \psi_{\max}-2 }
\le \prob{ \Binom(n, p^2s^2) \ge \psi_{\max}-2 }  
\le n^{-\frac{7}{2}},
\end{aligned}
\end{equation}
where that last inequality follows from Bernstein’s inequality given in \prettyref{thm:bernstein} with  $\gamma=\frac{7}{2}\log n$ and $K=1$.

Finally, adding back $Z_{v}$ and $Z_{\pi^{-1}(u)}$
yields the desired conclusion~\prettyref{eq:W1uv}.

\subsection{Proof of \prettyref{thm:thm1hop}}\label{sec:proof1hop}

Since the bound of the number of 1-hop witnesses is provided by \prettyref{lmm:W1uv} and \prettyref{lmm:W1uu},
it remains to  verify $x_{\min}+y_{\min}-\psi_{\max}\geq0$ under the condition of \prettyref{thm:thm1hop}. Note that
\begin{align*}
&x_{\min}+y_{\min}-\psi_{\max}\\
=&n\beta p(1-p)s^2-\sqrt{5n\beta ps^2\log n}-(5+\sqrt{7})\sqrt{np^2s^2\log n}-ps^2-2p^2s^2-\frac{37}{3}\log n-2.
\end{align*}
First, by assumption that $\beta\geq \frac{45\log n}{np(1-p)^2s^2}$, we have
\begin{equation}
\begin{aligned}
\frac{1}{3}n\beta p(1-p)s^2\geq \sqrt{\frac{45\log n}{np(1-p)^2s^2}}\cdot \frac{1}{3}n\sqrt{\beta}p(1-p)s^2 =\sqrt{5n\beta ps^2\log n}.\label{eq:verify1begin}
\end{aligned}
\end{equation}
Second, by assumption that $\beta\geq 30\sqrt{\frac{\log n}{n(1-p)^2s^2}}$, we have
\begin{equation}
\begin{aligned}
\frac{1}{3}n\beta p(1-p)s^2\geq 30\sqrt{\frac{\log n}{n(1-p)^2s^2}}\cdot \frac{1}{3}np(1-p)s^2 \geq (5+\sqrt{7})\sqrt{np^2s^2\log n}.
\end{aligned}
\end{equation}
Third, by assumption that $\beta\geq \frac{45\log n}{np(1-p)^2s^2}$ and $n$ is sufficiently large, we have
\begin{equation}
\begin{aligned}
\frac{1}{3}n\beta p(1-p)s^2\geq \frac{45\log n}{np(1-p)^2s^2}\cdot \frac{1}{3}np(1-p)s^2\geq\frac{37}{3}\log n+2+ps^2+2p^2s^2.\label{eq:verify1end}
\end{aligned}
\end{equation}

Combining \prettyref{eq:verify1begin}-\prettyref{eq:verify1end}, we have $x_{\min}+y_{\min}-\psi_{\max}\geq0$. 
Thus,
\begin{align*}
&\prob{\min_{u\in [n]}W_1(u,u)>\max_{u,v\in [n]: u\neq }W_1(u,v)}\\
\geq& 1-\prob{\bigcup_{u\in [n]}\left\{W_1(u,u)\leq x_{\min}+y_{\min}\right\}} -\prob{\bigcup_{u,v\in [n]: u\neq v}\left\{W_1(u,v)\geq \psi_{\max}\right\}} \\
\geq& 1-n^{-\frac{4}{3}}-n^{-\frac{3}{2}}\geq 1-n^{-1},
\end{align*}
where the second inequality holds by combining \prettyref{lmm:W1uv}  and \prettyref{lmm:W1uu}
with the union bound. 
Thus, GMWM outputs $\tilde{\pi}$ with $\prob{\tilde{\pi}=\pi^*}\geq1-n^{-1}$ under the $1$-hop algorithm.

\section{Postponed Proofs for \prettyref{thm:thm2hop}}

\subsection{Proof of \prettyref{lmm:Ruv}}\label{sec:proofRuv}
By definition, we have $a_u
\sim \Binom (n-1,ps)$. It follows from \prettyref{lmm:bound} that
\begin{align}\label{eq:Ruv-N1}
\prob{a_u\leq(1-\epsilon)(n-1)ps}\leq n^{-6},\ 
\prob{a_u\geq(1+\epsilon)(n-1)ps}\leq n^{-4}.
\end{align}
The same lower and upper bounds hold for $b_u$ analogously. 

Note that $c_{uu}
\sim \Binom(n -1,ps^2)$. Applying \prettyref{lmm:bound} yields that 
\begin{align}\label{eq:Ruv-Cuu}
\prob{c_{uu}\leq(1-\epsilon)(n-1)ps^2}\leq n^{-6},\
\prob{c_{uu}\geq(1+\epsilon)(n-1)ps^2}\leq n^{-4}.
\end{align}

Also, for fake pairs $u \neq v$, $c_{uv}
\sim\Binom(n-2,p^2s^2)$. Therefore, applying Bernstein’s inequality given in \prettyref{thm:bernstein} with  $\gamma=\frac{7}{2}\log n$ and $K=1$, we can get
 \begin{align}\label{eq:Ruv-Cuv}
 &\prob{c_{uv} \geq \psi_{\max}}
 \overset{}{\le}\prob{\Binom(n-2,p^2s^2) \geq np^2s^2+\sqrt{7np^2s^2\log n}+\frac{7}{3}\log n}\leq n^{-\frac{7}{2}}.
\end{align}

 According to \prettyref{lmm:W1uv}, we can get 
\begin{align}\label{eq:Ruv-Wvu}
     \prob{W_1(v,u) \geq \psi_{\max}}
   \leq n^{-\frac{7}{2}}.
\end{align}

Taking the union bound over \prettyref{eq:Ruv-N1}--\prettyref{eq:Ruv-Wvu} 
yields the desired conclusion \prettyref{eq:Ruv}.

\subsection{Proof of \prettyref{lmm:W2uu}}\label{sec:proofW2uu}
Fixing any two vertices $u\neq v$, we condition on
$Q_{uv}$ such that the event $R_{uv}$ holds. 
Note that 
$$
W_2(u,u) = \sum_{j=1}^n A_2 (u,j) B_2 
\left(u, \pi(j) \right),
$$
where $A_2$ and $B_2$ are the $2$-hop adjacency matrix of $G_1$ and $G_2$, respectively. 
Note that for all $j \in N^{G_1}(u)\cup \{u\}$,
$A_2(u,j)=0$ by definition. Similarly,
for all $\pi(j) \in 
N^{G_2}(u)\cup \{u\}$, $B_2\left(u, \pi(j) \right)=0$. Thus, we define 
$$
J_u= N^{G_1}(u)\cup \{u\} \cup \pi^{-1}
\left( N^{G_2}(u) \right) \cup \pi^{-1} (u)
$$
and exclude the seeds in $J_u.$
Furthermore, note that we have conditioned on the $1$-hop neighborhoods of $v$ in $G_1$ and $G_2$.
In either $G_1$ or $G_2$, if $u$ and $v$ are connected, then a 1-hop neighbor of $v$ may automatically become the 2-hop neighbor of $u$. 
 Hence,  if $j$ is connected to $v$ in $G_1$
 or $\pi(j)$ is connected to $v$ in $G_2$,
 then conditioning on $Q_{uv}$ can change the probability that $A_2(u,j)B_2 
\left(u, \pi(j) \right)=1$.
To circumvent this issue, 
we further exclude the set $J_v$ of seeds and get that 
\begin{align}
W_2(u,u) 
& \ge \sum_{j \in [n]\setminus( J_u \cup J_v) } 
A_2 (u,j) B_2 
\left(u, \pi(j) \right) \nonumber\\
& =
\sum_{j \in F \setminus (J_u \cup J_v) }
A_2 (u,j) B_2 
\left(u,j\right)
+ \sum_{j \in [n] \setminus (F \cup J_u \cup J_v) }
A_2 (u,j) B_2 
\left(u, \pi(j) \right), \label{eq:W2_true_decomp}
\end{align}
where $F=\{j: \pi(j)=j\}$ corresponds to the set of correct seeds with $|F|=n\beta.$ 
Since the event $R_{uv}$ holds, 
it follows that $|J_u \cup J_v| \le 4(1+\epsilon)(n-1)ps+4\leq 6nps$,
where the last inequality holds due to $\epsilon\le 1/3$
and $nps \ge 6.$
Thus,
$n_{\mathrm{R}} \triangleq
|F \setminus (J_u \cup J_v)|
\ge n(\beta-6ps).
$

We first count the contribution to $W_2(u,u)$ by correct seeds. For each correct seed $j \in F \setminus (J_u \cup J_v)$, define an indicator variable
$
\chi_j = \indc{ \exists i \in C(u,u) \setminus \{v\}: j \in C(i,i) }.
$
In other words, $\chi_j=1$ if  $j$ is connected to  
some  ``common'' 1-hop neighbor of true pair $(u,u)$
in both $G_1$ and $G_2$, and $\chi_j=0$ otherwise.
By definition $A_2 (u,j) B_2 
\left(u,j\right) \ge \chi_j.$
Moreover,
\begin{align*}
\prob{\chi_j=1 \mid \Quv}
= &1-\prob{ \cap_{\substack{ i \in C(u,u ) \backslash\{ v\}}}\left\{ j \notin C(i,i ) \right\} \mid \Quv}\\
\overset{(a)}{=}&1-\prod_{ \substack{ i \in C(u,u ) \backslash\{ v\}}} \prob{j \notin C(i,i) \mid \Quv}\\
\overset{(b)}{=} &1- \left(1-ps^2\right)^{\abs{C(u,u)\setminus\{v\}}}\\
\overset{(c)}{\geq} 
&\frac{1}{2}\left(c_{uu}-1\right)ps^2\overset{(d)}{\geq}  \frac{7}{24}np^2s^4,
\end{align*}
where $(a)$ holds because 
$\left\{ j \notin C(i,i ) \right\}=
\{A_1(i,j)=0\} \cup \{B_1(i,j)=0\}$,
which are independent across different $i$
conditional on $Q_{uv}$;
$(b)$ holds as 
$\prob{ j \in C(i,i ) }
=\prob{A_1(i,j)=B_1(i,j)=1}=ps^2$;
$(c)$ follows from \prettyref{thm:BernoulliInequality}
and the fact that $c_{uu}ps^2\leq(1+\epsilon)(n-1)p^2s^4\leq \frac{4}{3}np^2s^4<1$;
$(d)$ holds due to  $c_{uu}-1\geq(1-\epsilon)(n-1)ps^2-1\geq\frac{2}{3}(n-1)ps^2-1\geq\frac{7}{12}nps^2$.

Furthermore, note that $\chi_j$ depends on $A_1$ and $B_1$
only through the set of entries $S_j\triangleq \{ \{i,j\}: i \in C(u,u) \backslash \{v\}\}$. 
  Since $S_j \cap S_{j'} =\emptyset$ for all $j, j' \in F\setminus (J_u \cup J_v)$, it follows that $\chi_j$'s are mutually independent. Therefore,
\begin{align}
& \prob{ \sum_{j \in F \setminus (J_u \cup J_v) }
A_2 (u,j) B_2 
\left(u,j\right) \le l_{\min} \mid Q_{uv} } \nonumber  \\
 \le &\prob{ \sum_{j \in F \setminus (J_u \cup J_v) } \chi_j
\le l_{\min} \mid Q_{uv} } \nonumber \\
 \le &\prob{ \Binom\left( n_R,\frac{7}{24}np^2s^4 \right) \le l_{\min} \mid Q_{uv} }
\le n^{-\frac{15}{4}}, \label{eq:W2_true_correct}
\end{align}
where  $l_{\min}=\frac{7}{24}(\beta-6ps)n^2 p^2s^4  - \sqrt{\frac{35}{16}n^2\beta p^2s^4\log n}-\frac{5}{2}\log n$,
and the last inequality follows from Corollary \ref{lmm:Bernsteinbound} with $\gamma=\frac{15}{4}\log n$ and $ n(\beta-6ps) \le n_R \le n\beta$.



Next, we count the contribution to $W_2(u,u)$ by the incorrect seeds. Fix an incorrect seed $(j, \pi(j))$ where $j \in [n]\backslash (F\cup J_u \cup J_v)$. Note that  $A_2(u,j)$ depends on $A_1$ through the set of entries given by $T_j \triangleq \{ \{i,j\}: i \in N^{G_1}(u) \}$ and $B_2(u,\pi(j))$
depends on $B_1$ through the set of entries given by $\tilde{T}_{\pi(j)} \triangleq \{ \{i,\pi(j) \}: i \in N^{G_2}(u) \}$. 
Thus $A_2(u,j)$ and $B_2(u,\pi(j) )$ are independent 
when $T_j \cap \tilde{T}_{\pi(j)} = \emptyset$, which occurs if and only $j \notin N^{G_2}(u)  $ or 
$\pi(j) \notin N^{G_1}(u)$.
Thus to ensure the independence between  $A_2(u,j)$ and $B_2(u,\pi(j) )$  in order to facilitate computing the probability of
$A_2(u,j)B_2(u,\pi(j) )=1$, 
we also exclude the set of seeds 
given by $\tilde{J}_u= N^{G_2}(u) \cap \pi^{-1}
\left( N^{G_1}(u) \right)$. 
Let $n_W\triangleq | [n] \setminus (F \cup J_u \cup J_v \cup \tilde{J}_u) |$. Since the event $R_{uv}$ holds, it follows that $n_W \ge n(1-\beta)-9nps.$
Now, for each $j \in [n] \setminus (F \cup J_u \cup J_v \cup \tilde{J}_u)$, 
we have
\begin{align*}
&\prob{A_2(u,j)B_2(u,\pi(j) )=1 \mid Q_{uv} } \\ =&\prob{ 
A_2(u,j) =1 \mid \Quv}\times \prob{ B_2(u,\pi(j) )=1 \mid \Quv}\\
=&\left(1-\prob{ 
A_1(i,j)=0, \forall i \in N^{G_1}(u)
\mid  \Quv}\right)
\left(1-\prob{ B_1(i,\pi(j) ) =0, \forall 
i \in N^{G_2}(u) \mid \Quv}\right)\\
= &\left(1-(1-ps)^{a_{u\backslash v}}\right)\left(1-(1-ps)^{b_{u\backslash v}}\right) \triangleq \lambda,
\end{align*}
where the last equality holds because 
$A_1(i,j)=0$
if $i=v$ as $j \notin J_v$;
otherwise $A_1(i,j) \sim \Bern(ps)$;
and similarly for $B_1(i, \pi(j)).$

Finally, note that $A_2(u,j)B_2(u,\pi(j) )$ are dependent across different $j$ and thus
we cannot directly apply Bernstein's inequality. 
To see this, observe that
conditional on $Q_{uv}$, $A_2(u,j)B_2(u,\pi(j) )$ depends on $A_1$ and $B_1$ through the set of entries given by $T_j \cup \tilde{T}_{\pi(j)} \triangleq U_j$.
Therefore, for any pair of $j, j' \in [n] \setminus (F \cup J_u \cup J_v \cup \tilde{J}_u) $ with $j\neq j'$, 
 $A_2(u,j)B_2(u,\pi(j) )$  
 and 
 $A_2(u,j')B_2(u,\pi(j') )$ 
are dependent if and only if 
$U_j \cap U_{j'} \neq \emptyset$,
which occurs if and only if
$j'=\pi(j)$ or $j'=\pi^{-1}(j)$. 
Hence, we construct a dependency graph $\Gamma$ for $\{A_2(u,j)B_2(u,\pi(j) )\}$, where
the maximum degree  $\Delta(\Gamma)$ equals to $2$. Thus,  applying \prettyref{thm:drv} with  
$K=1$,
$\sigma^2=n_{\mathrm{W}}\lambda(1-\lambda)$,
and $\gamma=4\log n$ yields that 
\begin{align*}
\prob{\sum_{ j \in [n] \setminus (F \cup J_u \cup J_v \cup \tilde{J}_u) }  
A_2(u,j) B_2(u,\pi(j) )  \leq n_{\mathrm{W}} \lambda -5\sqrt{\frac{3}{2}n_{\mathrm{W}}\lambda(1-\lambda)\log n}-\frac{25}{2}\log n \mid \Quv}
\leq n^{-4}.
\end{align*}

Since $ n(1-\beta-9ps) \le n_{\mathrm{W}} \le n$, we have
\begin{align*}
n_{\mathrm{W}}\lambda(1-\lambda)
\overset{}{\leq} & n\left(1-(1-ps)^{a_u}\right)(1-(1-ps)^{b_u})
\overset{(a)}{\leq} na_u b_u p^2s^2
\overset{(b)}{\leq} \frac{9}{4}n^3p^4s^4,
\end{align*}
where $(a)$ holds as $(1+x)^r\geq 1+rx$ for every integer $r\geq 0$ and every real number $x\geq -2$; 
$(b)$ holds because $a_u,b_u\leq(1+\epsilon)(n-1)ps\leq\frac{3}{2}nps$ under event $R_{uv}.$ And
$$
n_{\mathrm{W}}\lambda\geq n(1-\beta)\lambda-9nps\lambda\geq n(1-\beta)\lambda-21n^3p^5s^5.
$$
Therefore, recalling that 
$m_{\min} =n(1-\beta)\lambda-21n^3p^5s^5- \frac{15}{2}\sqrt{\frac{3}{2}n^3p^4s^4\log n}-\frac{25}{2}\log n$,
we get that
\begin{align*}
m_{\min} \le n_{\mathrm{W}} \lambda -5\sqrt{\frac{3}{2}n_{\mathrm{W}}\lambda(1-\lambda)\log n}-\frac{25}{2}\log n.
\end{align*}
It follows that 
\begin{align}
\prob{\sum_{ j \in [n] \setminus (F \cup J_u \cup J_v \cup \tilde{J}_u) }  
A_2(u,j) B_2(u,\pi(j) )  \leq m_{\min} \mid \Quv}
\leq n^{-4}.\label{eq:W2_true_incorrect}
\end{align}


Combining \prettyref{eq:W2_true_decomp},
\prettyref{eq:W2_true_correct}, \prettyref{eq:W2_true_incorrect}
with a union bound, we get that 
\begin{align*}
\prob{{W_2(u,u)}\leq l_{\min}+m_{\min}\mid\Quv}\cdot\Indicator{R_{uv}}\leq n^{-\frac{15}{4}}+n^{-4}< n^{-\frac{7}{2}}.
\end{align*}

\begin{remark}
In \prettyref{eq:W2_true_correct}, 
we bound $A_2(u,j) B_2(u,j)$
from below by $\chi_j$, 
by neglecting the case that $j$ is connected to different 1-hop neighbors of $u$ in $G_1$ and $G_2$.
This lower bound is relatively tight, because
\begin{align*}
\prob{A_2(u,j) B_2(u,j) =1 , \chi_j =0 \mid \Quv}& \approx \prob{ A_2(u,j) =1 \mid \Quv}\prob{ B_2(u,j)=1 \mid \Quv}\\
& \approx  a_ub_up^2s^2\leq \frac{9}{4}n^2p^4s^4,
\end{align*}
which is much smaller than 
$\prob{ \chi_j=1 \mid \Quv}$ when $np^2 \le \frac{1}{\log n}.$
\end{remark}

\subsection{Proof of \prettyref{lmm:W2uv}}\label{sec:proofW2uv}
Fixing any two vertices $u\neq v$, we condition on $Q_{uv}$ such that event $R_{uv}$ holds. Note that 
$$
W_2(u,v) = \sum_{j=1}^n A_2 (u,j) B_2 
\left(v, \pi(j) \right),
$$
where $A_2$ and $B_2$ are the $2$-hop adjacency matrix of $G_1$ and $G_2$, respectively. 
Let
$$
J_0= N^{G_1}(u)\cup \{u\} \cup \pi^{-1}
\left( N^{G_2}(v) \right) \cup \pi^{-1} (v)
$$
Then $A_2(u,j)B_2\left(v, \pi(j) \right)=0$ for all $j \in J_0$.
Thus, 
$$
W_2(u,v)  = \sum_{j\in [n]\setminus J_0} 
A_2 (u,j) B_2 
\left(v, \pi(j) \right).
$$

Note that we have conditioned on the $1$-hop neighborhoods of $u$ and $v$ in $G_1$ and $G_2$.
In either $G_1$ or $G_2$, if $u$ and $v$ are connected, then a 1-hop neighbor of $u$ (or $v$) may automatically become the 2-hop neighbor of $v$ (or $u$). 
Hence,  if $j$ is connected to $v$ in $G_1$ or $\pi(j)$ is connected to $u$ in $G_2$, then conditioning on $Q_{uv}$ can change the probability that $A_2(u,j)B_2 
\left(v, \pi(j) \right)=1$.
To circumvent this issue,  we further divide the remaining seeds into five types depending on whether $j\in N^{G_1}(v)\cup \{v\}$ and $\pi(j)\in N^{G_2}(u)\cup \{u\}$, and get 
\begin{align}
W_2(u,v) = \sum_{k=1}^5 \sum_{j \in J_k}
A_2 (u,j) B_2 
\left(v, \pi(j) \right).  \label{eq:W_2_decomp}
\end{align}
Let $X_k = \sum_{j \in J_k}
A_2 (u,j) B_2 
\left(v, \pi(j) \right)$ 
denote the contribution from type $k$.
In the sequel, 
we will separately bound $X_k$ from the above for each $k \in [5]$. 


\paragraph{Type 1:} $ J_1\triangleq\{v,\pi^{-1}(u)\}\setminus J_0$. 
We have $X_1 \le \abs{J_1}\le 2$.

\paragraph{Type 2:}  $J_2\triangleq N^{G_1}(v)\cap \pi^{-1}\left(N^{G_2}(u)\right)\setminus J_0$. For $j\in J_2$, since $A_1(v,j)=1$ and $B_1(u,\pi(j))=1$, it follows that $(j,\pi(j))$ is a 1-hop witness for $(v,u)$. Thus, we have $X_2 \le \abs{J_2}\le W_1(v,u)\le \psi_{\max}$ on event $R_{uv}$. 

\paragraph{Type 3:} $J_3\triangleq N^{G_1}(v)\setminus \left( \pi^{-1}\left(N^{G_2}(u)\right)\cup\{\pi^{-1}(u)\}\cup J_0\right)$.
We have $\abs{J_3}\le a_v\le\frac{3}{2}nps$ on event $R_{uv}$. 
By definition, $A_2(u,j)B_2(v,\pi(j))\le B_2(v,\pi(j))$. Moreover, 
\begin{align} \label{eq:W2-fake-connect1hop} 
\prob{B_2(v,\pi(j))=1\mid\Quv}
= &\prob{B_1(i,\pi(j))=1,\exists i\in N^{G_2}(v) \mid\Quv}\nonumber\\
=&1-\prob{ B_1(i,\pi(j)) =0, \forall 
i \in N^{G_2}(v) \mid \Quv}\nonumber\\
\overset{(a)}{=}&1-(1-ps)^{b_{v\backslash u}}
\overset{(b)}{\le} b_vps
\overset{(c)}{\leq}  \frac{3}{2}np^2s^2,
\end{align}
where $(a)$ holds because $B_1(i,\pi(j))=0$ if $i=u$ as $\pi(j)\notin N^{G_2}(u)$;
otherwise $B_1(i,\pi(j)) \iiddistr \Bern(ps)$ across different $i$;
$(b)$ follows from $(1+x)^r\geq 1+rx$ for every integer $r \geq 0$ and every real number $x \geq -2$; $(c)$ holds due to  $b_v<\frac{3}{2}nps$ on event $R_{uv}$.

Note that $B_2(v,\pi(j))$ only depends on $B_1$ through the set of entries $U_{\pi(j)}\triangleq \{ \{i,\pi(j)\}: i \in N^{G_2}(v)\}$. Since  $U_{\pi(j)}\cap U_{\pi(j')}=\emptyset$ for all $j \neq j'\notin J_0$, it follows that  $B_2(v,\pi(j))$ are mutually independent across $j \in J_3$. Therefore,
\begin{align}\label{eq:W2_fake_J3}
& \prob{ X_3 \ge \frac{9}{2}n^2p^3s^3 +5\log n \mid Q_{uv} } \nonumber  \\
& \le \prob{ \sum_{j \in J_3 } B_2(v,\pi(j))
\ge  \frac{9}{2}n^2p^3s^3 +5\log n \mid Q_{uv} } \nonumber \\
& 
\le \prob{ \Binom\left( \left|J_3 \right|,\frac{3}{2}np^2s^2 \right) \ge \frac{9}{2}n^2p^3s^3 +5\log n
\mid Q_{uv} }\nonumber\\
& \le n^{-\frac{15}{4}}, 
\end{align}
where the last inequality follows from Corollary \ref{lmm:Bernsteinbound} with $\gamma=\frac{15}{4}\log n$ and $|J_3| \le \frac{3}{2} nps$.

\paragraph{Type 4:} $J_4\triangleq \pi^{-1}\left(N^{G_2}(u)\right)\setminus\left( N^{G_1}(v) \cup \{v\}\cup J_0\right)$. Following the similar proof as in Type 3, we can get 
\begin{align}
& \prob{ X_4 \ge \frac{9}{2}n^2p^3s^3 +5\log n \mid Q_{uv} } \overset{}{\le}n^{-\frac{15}{4}}. \label{eq:W2_fake_J4}
\end{align}

\paragraph{Type 5:} $j\in J_5\triangleq [n]\setminus\left( \cup_{k=0}^4 J_k \right)$. 
This is the major type. 
We bound $X_5$ by separately considering the correct and incorrect seeds. 

\paragraph{Correct Seeds in Type 5:} Recall that $F=\{j: \pi(j)=j\}$
corresponds to the set of correct seeds. We have $\abs{F\cap J_5}\le \abs{F}=n\beta$. Note that  $A_2(u,j)$ depends on $A_1$ through the set of entries given by $T_j \triangleq \{ \{i,j\}: i \in N^{G_1}(u) \}$ and $B_2(v,j)$
depends on $B_1$ through the set of entries given by $\tilde{T}_{j} \triangleq \{ \{i,j \}: i \in N^{G_2}(v) \}$. Thus $A_2(u,j)$ and $B_2(v,j )$ are dependent on each other because $T_j \cap \tilde{T}_{j} =  \{ \{i,j \}: i \in C(u,v) \}\neq \emptyset.$  Thus, we bound $\prob{A_2(u,j)B_2(v,j)=1}$ by separately 
considering whether $j$ is connected to some vertices in $C(u,v)$. 
Specifically, on the one hand, 
 \begin{align}\label{eq:pf_jC1uv_1}
    &\prob{\{A_2(u,j)B_2(v,j)=1\}\cap\{A_1(i,j)=1,\exists i\in C(u,v)\}\mid \Quv}\nonumber\\
    \leq& \prob{A_1(i,j)=1,\exists i\in C(u,v)\}\mid \Quv}\nonumber\\
    =&1-\prob{A_1(i,j)=0,\forall i\in C(u,v)\}\mid \Quv}\nonumber\\
    \overset{(a)}{=}&1- (1-ps)^{c_{uv}}\nonumber\\
\overset{(b)}{\leq}&1- (1-c_{uv}ps)
\overset{(c)}{\leq}  \psi_{\max} ps,
\end{align} 
where $(a)$ holds because  $A_1(i,j) \iiddistr \Bern(ps)$;
$(b)$ follows from $(1+x)^r\geq 1+rx$ for every integer $r \geq 0$ and every real number $x \geq -2$; $(c)$ holds due to $c_{uv}<\psi_{\max}$ on event $R_{uv}$.

On the other hand, letting 
$\calA_u\triangleq \{\exists i\in N^{G_1}(u)\setminus C(u,v): A_1(i,j)=1\}$, 
\begin{align}\label{eq:pf_jC1uv_0}
    &\prob{\{A_2(u,j)B_2(v,j)=1\}\cap\{A_1(i,j)=0,\forall i\in C(u,v)\}\mid \Quv}\nonumber\\
    =&\prob{\calA_u\cap \{B_2(v,j)=1\}\mid \Quv}\nonumber\\
    \overset{(a)}{=}&\prob{\calA_u\mid \Quv}\times\prob{ B_2(v,j)=1\mid \Quv}\nonumber\\
    \leq &\prob{A_1(i,j)=1,\exists i\in N^{G_1}(u)\mid \Quv}\times\prob{ B_1(i,j)=1,\exists i\in N^{G_2}(v)\mid \Quv} \overset{(b)}{\le}\frac{9}{4}n^2p^4s^4,
\end{align}
where the equality $(a)$ holds as $\calA_u$ and $\{B_2(v,j)=1\}$ are independent. This is because $\calA_u$ depends on $T_j'\triangleq\{\{i,j\}:i\in N^{G_1}(u)\setminus C(u,v)\}$, which is disjoint from $\tilde{T}_j=\{\{i,j\}:i\in N^{G_2}(v)\}$; (b) follows from the similar reasoning as in \prettyref{eq:W2-fake-connect1hop}.

Thus, by taking the union bound over \prettyref{eq:pf_jC1uv_1} and \prettyref{eq:pf_jC1uv_0}, we have
\begin{align*}
\prob{A_2(u,j)B_2(v,j)=1\mid \Quv}\leq \psi_{\max} ps+\frac{9}{4}n^2p^4s^4\triangleq\mu_1.
\end{align*}

Note that $A_2(u,j)B_2(v,j)$ only depends on $A_1$ and $B_1$ only through the set of entries $T_j\cup \tilde{T}_{j}\triangleq U_{j}$. Since  $U_j\cap U_{j'}=\emptyset$ for all $j \neq j'\notin J_0$, it follows that $A_2(u,j)B_2(v,j)$ are mutually independent for all $j \in F\cap J_5$. Therefore,
\begin{align}\label{eq:W2_fake_J5}
 &\prob{ \sum_{j \in F\cap J_5 }
A_2(u,j)B_2(v,j) \ge x_{\max} +5\log n \mid Q_{uv} } \nonumber\\  \le & \prob{ \Binom (n\beta, \mu_1)
\ge x_{\max} +5\log n
\mid Q_{uv} }
\le n^{-\frac{15}{4}}, 
\end{align}
where $x_{\max}=2 n\beta\left( \psi_{\max}ps+\frac{9}{4}n^2p^4s^4\right)$
and the last inequality follows from Corollary \ref{lmm:Bernsteinbound} with $\gamma=\frac{15}{4}\log n$.

\paragraph{Incorrect Seeds in Type 5:} Let $\overline{F}\triangleq[n]\setminus F$ denote the complement of $F$ in $[n]$. Then, $\overline{F}$ corresponds to the set of incorrect seeds with $\abs{\overline{F}}=n(1-\beta)$. Note that  $A_2(u,j)$ depends on $A_1$ through the set of entries given by $T_j \triangleq \{ \{i,j\}: i \in N^{G_1}(u) \}$ and $B_2(v,\pi(j))$
depends on $B_1$ through the set of entries given by $\tilde{T}_{\pi(j)} \triangleq \{ \{i,\pi(j) \}: i \in N^{G_2}(v) \}$. 
Thus $A_2(u,j)$ and $B_2(v,\pi(j) )$ are independent 
when $T_j \cap \tilde{T}_{\pi(j)} = \emptyset$, which occurs if and only if $j \notin N^{G_2}(v)  $ or $\pi(j) \notin N^{G_1}(u)$. We define $\tilde{J}=N^{G_2}(v)\cap \pi^{-1}\left(N^{G_1}(u)\right)$, and have $\abs{\tilde{J}}\leq b_v\leq \frac{3}{2}nps$ under the event $R_{uv}$. Then, we separately consider the incorrect seeds depending on whether $ j \in \tilde{J}$. 

\begin{itemize}
    \item For $j\in \overline{F}\cap J_5 \setminus\tilde{J}$,
\begin{align*}
\mu_2 \triangleq &\prob{A_2(u,j)B_2(v,\pi(j) )=1\mid \Quv}\\
\overset{}{=}&\prob{A_2(u,j)=1 \mid \Quv}\times\prob{B_2(v,\pi(j) )=1\mid\Quv}\nonumber\\
=&\left(1-(1-ps)^{a_{u\backslash v}}\right)\left(1-(1-ps)^{b_{v\backslash u}}\right)\overset{}{\leq} \frac{9}{4}n^2p^4s^4,
\end{align*}
where  the last two steps follow from the similar reasoning as in \prettyref{eq:W2-fake-connect1hop}.
\item 
For  $j \in \overline{F} \cap J_5 \cap \tilde{J}$, we divide the analysis into two cases depending on whether $A_1(j,\pi(j))=1$. On the one hand,
\begin{align*}
\prob{\{A_2(u,j)B_2(v,\pi(j) =1\}\cap\left\{A_1(j,\pi(j))=1\right\}\mid\Quv}\le\prob{\left\{A_1(j,\pi(j))=1\right\}\mid\Quv}\le ps.
\end{align*}

On the other hand, letting 
$\calA'_u \triangleq \{\exists i\in N^{G_1}(u)\setminus \{\pi(j)\}: A_1(i,j)=1\}$,
\begin{align*}
&\prob{\{A_2(u,j)B_2(v,\pi(j) =1\}\cap\left\{A_1(j,\pi(j))=0\right\}\mid\Quv}\\
&=\prob{\calA'_u\cap \{B_2(v,j)=1\}\mid\Quv}\\
&\overset{(a)}{=}\prob{\calA'_u\mid\Quv}\times\prob{\{B_2(v,j)=1\}\mid\Quv}\\
&\leq \prob{A_1(i,j)=1,\exists i\in N^{G_1}(u)\mid \Quv}\times\prob{ B_1(i,j)=1,\exists i\in N^{G_2}(v)\mid \Quv}\overset{(b)}{\le} \frac{9}{4}n^2p^4s^4,
\end{align*}
where the equality $(a)$ holds as $A_u$ and $\{B_2(v,\pi(j))=1\}$ are independent. This is because $\calA_u$ depends on $T_j'\triangleq\{\{i,j\}:i\in N^{G_1}(u)\setminus\{\pi(j)\}\}$, which is disjoint with $\tilde{T}_{\pi(j)}=\{\{i,\pi(j)\}:i\in N^{G_2}(v)\}$; $(b)$ follow from the similar proof in \prettyref{eq:W2-fake-connect1hop}.

Combining the last two displayed equations yields that 
$$
\mu_3 \triangleq \prob{A_2(u,j)B_2(v,\pi(j) =1\mid\Quv}
\le  ps+\frac{9}{4}n^2p^4s^4\leq \frac{2}{3}np^2s^2.
$$
\end{itemize}


Note that $A_2(u,j)B_2(v,\pi(j) )$ are dependent across different $j\in \overline{F}\cap J_5$ 
and thus
we cannot directly apply Bernstein's inequality. 
To see this, observe that
conditional on $Q_{uv}$, $A_2(u,j)B_2(v,\pi(j) )$ depends on $A_1$ and $B_1$ through the set of entries given by $T_j \cup \tilde{T}_{\pi(j)} \triangleq U_j$.
Therefore, for any pair of $j, j' \in \overline{F} \cap J_5$ with $j\neq j'$, 
 $A_2(u,j)B_2(v,\pi(j) )$  
 and 
 $A_2(u,j')B_2(v,\pi(j') )$ 
are dependent if and only if 
$U_j \cap U_{j'} \neq \emptyset$,
which occurs if and only if
$j'=\pi(j)$ or $j'=\pi^{-1}(j)$. 
Hence, we construct a dependency graph $\Gamma$ for $\{A_2(u,j)B_2(v,\pi(j) )\}$, where
the maximum degree  $\Delta(\Gamma)$ equals to $2$. Thus,  applying \prettyref{thm:drv} with  
$K=1$,
$\sigma^2=\abs{\overline{F}\cap J_5 \setminus\tilde{J}}\mu_2(1-\mu_2)+\abs{\overline{F}\cap J_5 \cap\tilde{J}}\mu_3(1-\mu_3)$,
and $\gamma=4\log n$ yields that 
\begin{align}\label{eq:W2-fake-J5-incorrect}
&\prob{\sum_{ j \in \overline{F}\cap J_5 }  A_2(u,j) B_2(v,\pi(j) )  \geq y_{\max}+\frac{25}{2}\log n\mid \Quv}\le n^{-4},
\end{align}
where $y_{\max}=n(1-\beta)\mu_2+n^2p^3s^3+\frac{5}{2}\sqrt{15n^3p^4s^4\log n}$.
\\


Finally, combining all types of seeds and taking an union bound on \prettyref{eq:W2_fake_J3}, \prettyref{eq:W2_fake_J4},  \prettyref{eq:W2_fake_J5} and \prettyref{eq:W2-fake-J5-incorrect}, we get
\begin{align*}
    \prob{{W_2(u,v)}\ge x_{\max}+y_{\max}+2z_{\max}+\psi_{\max}+28\log n\mid\Quv}\cdot\Indicator{R_{uv}}\le 3\cdot n^{-\frac{15}{4}}+ n^{-4}< n^{-\frac{7}{2}},
\end{align*}
where $z_{\max}=\frac{9}{2}n^2p^3s^3$.

\subsection{Proof of \prettyref{lmm:Tuv}}\label{sec:proofTuv}
Recall that $d_u=\abs{N^{G_0}(u)} \sim \Binom(n-1,p)$.
In view of assumption $nps^2 \ge 128\log n$, applying \prettyref{lmm:bound} gives that
\begin{align}\label{eq:RuG}
\prob{d_u\geq \frac{4}{3}(n-1)p}
\leq n^{-4}.
\end{align}
Let $R_u$  denote the event $\left\{d_u<\frac{4}{3}(n-1)p\right\}$.

For any two vertices $u,v\in[n]$ with $u\neq v$, let $E_{uv}$ denote
\begin{align*}
E_{uv}=\left\{N^{G_0}(u),N^{G_0}(v)\right\}.
\end{align*}
Conditioning on $E_{uv}$ such that $R_u$ and $R_v$ are true, we separately consider two cases: $d_u\leq d_v$ and $d_u>d_v$. 

\paragraph{Case 1:} $d_u\leq d_v$. 
By definition, we have $a_u\sim\Binom\left(d_u,s\right)$ and $a_v\sim\Binom\left(d_v,s\right)$. Applying with  Bernstein’s inequality given in \prettyref{thm:bernstein} with  $\gamma=\frac{15}{4}\log n$ and $K=1$
 implies that
\begin{align}
   \prob{a_u\geq d_us+\sqrt{\frac{15}{2}d_us(1-s)\log n}+\frac{5}{2}\log n\mid E_{uv}} & \leq n^{-\frac{15}{4}},\label{eq:Tuv-N1u} \\
\prob{a_v\leq d_vs-\sqrt{\frac{15}{2}d_vs(1-s)\log n}-\frac{5}{2}\log n\mid E_{uv}} & \leq n^{-\frac{15}{4}}.
\label{eq:Tuv-N1v}
\end{align}

Under the events $R_u$ and $R_v$, we have
\begin{align*}
  &d_us+\sqrt{\frac{15}{2}d_us(1-s)\log n}+\frac{5}{2}\log n-\left(d_vs-\sqrt{\frac{15}{2}d_vs(1-s)\log n}-\frac{5}{2}\log n\right)\\
& \le \sqrt{\frac{15}{2}d_us(1-s)\log n}+\sqrt{\frac{15}{2}d_vs(1-s)\log n}+5\log n\\
& \le 2\sqrt{10nps(1-s)}+5\log n=\tau.
\end{align*}
Taking the union bound over \prettyref{eq:Tuv-N1u} and \prettyref{eq:Tuv-N1v}, and noting that $\overline{T_{uv}}\subset\left\{a_u-a_v\geq \tau\right\}$,  we have
$$
\prob{\overline{T_{uv}}\mid  E_{uv}} \le \prob{a_u-a_v\geq \tau\mid  E_{uv}}
\leq n^{-\frac{15}{4}}.
$$

\paragraph{Case 2:} $d_u> d_v$.
Following the similar proof, we can get
\begin{align*}
\prob{\overline{T_{uv}}\mid  E_{uv} }
\leq \prob{b_v-b_u\geq \tau\mid E_{uv}}
\leq n^{-\frac{15}{4}}.
\end{align*}
Combining the two cases gives that 
\begin{align}
\prob{\overline{T_{uv}}\mid E_{uv}}\cdot\Indicator{R_u\cap R_v} 
\leq& n^{-\frac{15}{4}}.\label{eq:TuvC}
\end{align}

Finally, since $n$ is sufficiently large, applying \prettyref{eq:RuG}, \prettyref{eq:TuvC} and the union bound yields
\begin{align*}
\prob{\overline{T_{uv}}} 
=&\mathbb{E}_{E_{uv}}\left[\prob{\overline{T_{uv}}\mid E_{uv}}\right]\\
=&\mathbb{E}_{E_{uv}}\left[\prob{\overline{T_{uv}}\mid E_{uv}}\cdot\Indicator{R_u\cap R_v} +\prob{\overline{T_{uv}}\mid E_{uv}}\cdot\Indicator{\overline{R_u\cap R_v}}\right]\\
\leq& \mathbb{E}_{E_{uv}}\left[\prob{\overline{T_{uv}}\mid E_{uv}}\cdot\Indicator{R_u\cap R_v} \right]+\mathbb{E}_{E_{uv}}\left[\Indicator{\overline{R_u\cap R_v})} \right]\\
\leq & n^{-\frac{15}{4}}+ 2\cdot n^{-4}\leq n^{-\frac{7}{2}}.
\end{align*}

\subsection{Proof of \prettyref{lmm:wmingwmax}}\label{sec:proofwmingwmax}
Since $\psi_{\max}=np^2s^2+\sqrt{7np^2s^2\log n}+\frac{7}{3}\log n+2\leq 3\log n$ due to $\nps$, we have
 \begin{align}
 \label{eq:lowerbound-upperbound}
 &l_{\min}+m_{\min}-x_{\max}-y_{\max}-2z_{\max}-\psi_{\max}-28\log n\nonumber\\
 & \geq \frac{7}{24}n^2\beta p^2s^4 -\frac{7}{4} n^2p^3s^5-21n^3p^5s^5- \sqrt{\frac{35}{16}n^2\beta p^2s^4\log n}- \frac{5}{2}\sqrt{15n^3p^4s^4\log n}\nonumber\\
 &-\frac{15}{2}\sqrt{\frac{3}{2}n^3p^4s^4\log n}-2n\beta\left(3 ps^2\log n+\frac{9}{4}n^2p^4s^4\right)-10n^2p^3s^3-46\log n\nonumber\\
 &+n(1-\beta)\left(1-(1-ps)^{a_{u\backslash v}}\right)\left((1-ps)^{b_{v\backslash u}}-(1-ps)^{b_{u\backslash v}}\right).
 \end{align}
To bound from below the last term in \prettyref{eq:lowerbound-upperbound},
we have
\begin{align*}
&\left(1-(1-ps)^{a_{u\backslash v}}\right)\left((1-ps)^{b_{v\backslash u}}-(1-ps)^{b_{u\backslash v}}\right)\nonumber\\
& \overset{(a)}{\geq} \left(1-(1-ps)^{a_u}\right)\left((1-ps)^{\tau}-1\right)\\
& \overset{(b)}{\geq} - {a_u} \tau p^2s^2\\
& \overset{(c)}{\geq} -3np^3s^3\sqrt{10nps(1-s)\log n}-\frac{15}{2}np^3s^3\log n,
\end{align*}
where $(a)$ follows from $b_{v\backslash u}-b_{u\backslash v}=b_v-b_u\leq \tau$ and $(1-ps)^{x} - (1-ps)^{y} \ge (1-ps)^{\tau} -1$ when $x-y \le \tau$; $(b)$ holds due to Bernoulli's Inequality: $(1+x)^r\geq 1+rx$ for every integer $r \geq 0$ and every real number $x \geq -2$; $(c)$ follows from  $a_u< \frac{3}{2}nps$ and the definition of $\tau$ given in \prettyref{eq:def_tau}.

Combining the last two displayed equation gives that 
\begin{align}\label{eq:lowerbound-upperbound2}
&l_{\min}+m_{\min}-x_{\max}-y_{\max}-2z_{\max}-\psi_{\max}-28\log n\nonumber\\
&\geq\frac{7}{24}n^2\beta p^2s^4 -\frac{7}{4} n^2p^3s^5-21n^3p^5s^5- \sqrt{\frac{35}{16}n^2\beta p^2s^4\log n}- 5\sqrt{15n^3p^4s^4\log n}\nonumber\\
&-2n\beta\left(3 ps^2\log n+\frac{9}{4}n^2p^4s^4\right)-10n^2p^3s^3-46\log n\nonumber\\
&-3 n^2p^3s^3\sqrt{10nps(1-s)\log n}- \frac{15}{2}n^2p^3s^3\log n .
\end{align}

In view of (\ref{eq:lowerbound-upperbound2}), we can guarantee $l_{\min}+m_{\min}-x_{\max}-y_{\max}-2z_{\max}-\psi_{\max}-28\log n\geq0$ if the following inequalities (\ref{eq:boundholdstart})-(\ref{eq:boundholdend}) hold. We next verify (\ref{eq:boundholdstart})-(\ref{eq:boundholdend}) hold. 

By assumption that $\beta\geq 600\sqrt{\frac{\log n}{ns^4}}$, $np^2\leq \frac{1}{\log n}$, and $n$ is sufficiently large, we have
\begin{equation}
\begin{aligned}
\frac{1}{40}n^2\beta p^2s^4\geq&\frac{1}{40}n^2 p^2s^4\cdot600\sqrt{\frac{\log n}{ns^4}}\\
\overset{}{\geq} &15n^2 p^2s^2\cdot p\log n \\
\geq &n^2p^3s^3(\frac{15}{2}\log n+10+\frac{7}{4}s^2+21np^2s^2),\label{eq:boundholdstart}
\end{aligned}
\end{equation}
By assumption  $\beta\geq \frac{600\log n}{n^2p^2s^4}$, we have
\begin{equation}
\begin{aligned}
\frac{1}{15}n^2\beta p^2s^4\geq\frac{1}{15}n^2\sqrt{\beta} p^2s^4\cdot\sqrt{\frac{600\log n}{n^2p^2s^4}}>\sqrt{\frac{35}{16}n^2\beta p^2s^4\log n}.
\end{aligned}
\end{equation}
By assumption  $\beta\geq 600\sqrt{\frac{\log n}{ns^4}}$, we have
\begin{equation}
\begin{aligned}
\frac{1}{30}n^2\beta p^2s^4\geq\frac{1}{30}n^2 p^2s^4\cdot600\sqrt{\frac{\log n}{ns^4}} >5\sqrt{15n^3 p^4s^4\log n}.
\end{aligned}
\end{equation}
By assumption $\beta\geq \frac{600\log n}{n^2p^2s^4}$, we have
\begin{equation}
\begin{aligned}
\frac{1}{12}n^2\beta p^2s^4\geq \frac{1}{12}n^2 p^2s^4\cdot \frac{600\log n}{n^2p^2s^4} > 46\log n.
\end{aligned}
\end{equation}
By the assumption that $nps^2\geq 128\log n$, $np^2\leq\frac{1}{\log n}$, and $n$ is sufficiently large, we have
\begin{equation}
\begin{aligned}
\frac{1}{15}n^2\beta p^2s^4\geq& \frac{1}{20}n\beta ps^2\cdot 128\log n+\frac{1}{60}n^2\beta p^2s^4\cdot np^2\log n\\
\geq&2n\beta\left(3 ps^2\log n+\frac{9}{4}n^2p^4s^4\right),
\end{aligned}
\end{equation}
By the assumption  $\beta\geq 600\sqrt{\frac{np^3(1-s)\log n}{s}}$, we have
\begin{equation}
\begin{aligned}
\frac{1}{60}n^2\beta p^2s^4\geq& \frac{1}{60}n^2 p^2s^4\cdot600\sqrt{\frac{np^3(1-s)\log n}{s}}
\geq 3 n^2p^3s^3\sqrt{10nps(1-s)\log n}.\label{eq:boundholdend}
\end{aligned}
\end{equation}
Thus, we arrive at 
$
l_{\min}+m_{\min}\ge x_{\max}+y_{\max}+2z_{\max}+\psi_{\max}+28\log n.
$

\subsection{Proof of \prettyref{thm:thm2hop}}\label{sec:proof2hop}
Given any two vertices $u,v\in [n]$ with $u\neq v$, we let $W_{uv}$ denote
\begin{align*}
W_{uv}=\left\{{W_2(u,u)}>{W_2(u,v)}\right\}\cup\left\{{W_2(v,v)}>{W_2(u,v)}\right\}.
\end{align*}
We will prove $W_{uv}$ happens with high probability. We condition on $Q_{uv}$ such that the event $R_{uv}$ is true. Then, we consider two cases: $b_v-b_u\leq \tau$ and $a_u-a_v\leq \tau$. 

\paragraph{Case 1:} $b_v-b_u\leq \tau$. 

Let $\wmin\triangleq l_{\min}+m_{\min}$ and $\wmax \triangleq x_{\max}+y_{\max}+2z_{\max}+\psi_{\max}+28\log n $.  According to \prettyref{lmm:W2uu} and \prettyref{lmm:W2uv}, ${W_2(u,u)} >w_{\min}$ with high probability, and ${W_2(u,v)}< w_{\max}$ with high probability. Since $w_{\min}\geq w_{\max}$ according \prettyref{lmm:wmingwmax}, we get that $W_2(u,u)>W_2(u,v)$ with high probability. More precisely, if $R_{uv}$ occurs,
\begin{align*}
\prob{{W_2(u,u)}\leq {W_2(u,v)}\mid\Quv}
\overset{(a)}{\leq}&\prob{{W_2(u,u)}\leq \wmin\mid\Quv}+\prob{{W_2(u,v)}\geq\wmax\mid\Quv}
\overset{(b)}{\leq}2\cdot n^{-\frac{7}{2}},
\end{align*}
where $(a)$ is based on the union bound; $(b)$ is based on \prettyref{lmm:W2uu} and \prettyref{lmm:W2uv}.

Since $\{{W_2(u,u)}>{ W_2(u,v)}\}\subset W_{uv}$, it follows that,
\begin{align*}
\prob{\overline{W_{uv}}\mid\Quv}
\leq\prob{{W_2(u,u)}\leq {W_2(u,v)}\mid\Quv}
\leq 2\cdot n^{-\frac{7}{2}}.
\end{align*}

\paragraph{Case 2:}  $a_u-a_v\leq \tau$.

We can lower bound $W_2(v,v)$ analogous to \prettyref{lmm:W2uu}, and prove that the lower bound is no smaller than the upper bound of $W_2(u,v)$ in this case. Then,
\begin{align*}
\prob{\overline{W_{uv}}\mid\Quv}
\leq\prob{{W_2(v,v)}\leq {W_2(u,v)}\mid\Quv}
\leq 2\cdot n^{-\frac{7}{2}}.
\end{align*}

Since $T_{uv}=\left\{a_u-a_v\leq\tau\right\}\cup\left\{ b_v-b_u\leq \tau\right\}$, applying the union bound yields that
\begin{align*}
&\prob{\overline{W_{uv}}\mid\Quv}\cdot\Indicator{R_{uv}\cap T_{uv}}\\
=&\prob{\overline{W_{uv}}\mid\Quv}\cdot\Indicator{R_{uv}}\cdot\Indicator{T_{uv}}\\
\leq&  \prob{\overline{W_{uv}}\mid\Quv}\cdot\Indicator{R_{uv}}\cdot\Indicator{b_v-b_u\leq \tau}\\
&+\prob{\overline{W_{uv}}\mid\Quv}\cdot\Indicator{R_{uv}}\cdot\Indicator{a_u-a_v\leq \tau}
\leq  4\cdot n^{-\frac{7}{2}}.
\end{align*}

Then, applying \prettyref{lmm:Ruv} and \prettyref{lmm:Tuv} yields that
\begin{align*}
\prob{\overline{W_{uv}}}
=&\mathbb{E}_{\Quv}\left[\prob{\overline{W_{uv}}\mid\Quv}\cdot\Indicator{R_{uv}\cap T_{uv}}+\prob{\overline{W_{uv}}\mid\Quv}\cdot\Indicator{\overline{R_{uv}\cap T_{uv}}}\right]\\
\leq&\mathbb{E}_{\Quv}\left[\prob{\overline{W_{uv}}\mid\Quv}\cdot\Indicator{R_{uv}\cap T_{uv}}\right]+\mathbb{E}_{\Quv}\left[\Indicator{\overline{R_{uv}\cap T_{uv}}}\right]\\
\leq & 6\cdot n^{-\frac{7}{2}}.
\end{align*}


Finally, applying the union bound over all pairs $(u,v)$ with $u\neq v$, we get that
\begin{align*}
&\prob{\bigcap_{u,v\in[n], u\neq v}{W_{uv}}}
\geq1- \sum_{u \neq v} \prob{\overline{W_{uv}}}\geq 1-6\cdot n^{-\frac{3}{2}}
\geq1-n^{-1}.
\end{align*}
 
Assuming $\bigcap_{u,v\in [n], u\neq v}{W_{uv}}$ is true, we next show that the output of GMWM, $\tilde{\pi}$, must be equal to $\pi^*$.

We prove this by contradiction. Suppose in contrary that $\tilde{\pi}\neq \pi^*$. Assume the first fake pair is chosen by GMWM in the $k$-th iteration, which implies that GMWM selects true pairs in the first $k-1$ iterations. We let $\left(u^k,v^k\right)$ denote the fake pair chosen at the $k$-th iteration. Because $\bigcap_{u,v\in [n], u\neq v}{W_{uv}}$ is true,  we have  $W_2\left(u^k,u^k\right)>W_2\left(u^k,v^k\right)$ or $W_2\left(v^k,v^k\right)>W_2\left(u^k,v^k\right)$. 
We consider two cases. The first case is that $\left(u^k,u^k\right)$ or $\left(v^k,v^k\right)$ has been selected in the first $k-1$ iterations, in which case the fake pair $\left(u^k,v^k\right)$ would have been eliminated before the $k$-th iteration. The second case is that $\left(u^k,u^k\right)$ and $\left(v^k,v^k\right)$ have not been selected in the first $k-1$ iterations. Then, GMWM would select one of them instead of $\left(u^k,v^k\right)$ in the $k$-th iteration. Thus, both cases contradict to the assumption that GMWM picks a fake pair in the $k$-th iteration.
 
Hence, GMWM outputs $n$ true pairs. Then, we have $\prob{\tilde{\pi}=\pi^*}\geq \prob{\bigcap_{u,v\in [n], u\neq v}{W_{uv}}}\geq 1-n^{-1}$.

\end{appendices}

\bibliography{main}

\end{document}